%% file: main.tex
\documentclass[letterpaper,twocolumn,10pt]{article}
\usepackage{usenix-2020-09}
\input{defs}

\begin{document}

\title{\sysname: \update{Scalable} Federated Learning with Private Weights}
\author{
\rm Yiyi Sun$^*$ \quad Mushtari Sadia$^*$ \quad Amrita Roy Chowdhury$^*$\\
$^*$University of Michigan, Ann Arbor
}
\maketitle

\begin{abstract}
The twin goals of privacy-preserving federated learning (FL) -- harnessing distributed data while protecting the parties who contribute it--demands three properties: model privacy, data privacy, and integrity. Achieving all three in a realistic deployment--with a single coordinator and millions of users--remains an open challenge that prior work has yet to fully resolve. We present \sysname{}, the first scalable MPC-based FL system to provide all three guarantees simultaneously. \sysname{} operates under an honest-majority assumption in the semi-honest model, with an extension to the malicious-with-abort setting. At its core, \sysname{} combines a multi-committee MPC architecture with packed Shamir secret sharing, enabling SIMD-style secure computation that scales far beyond what monolithic n-party protocols can support. We further introduce ML-aware optimizations and new cryptographic primitives that substantially reduce per-user overhead. Our evaluation shows \sysname~is practically deployable at unprecedented scale: it reduces per-user computation to just <350 ms (semi-honest) and <700 ms (malicious) for per iteration of training MNIST, achieving three orders-of-magnitude improvements over prior work while seamlessly supporting populations of over $10^{6}$ users. This demonstrates, for the first time, that MPC-based FL can meet the scale demands of real-world systems without compromising on privacy or integrity.

\end{abstract}

\input{1_intro}
\input{2_background}
\input{3_problem_setting}

\input{4_TechnicalOverview}
\input{5_protocol_details}
\input{6_sys_description}

\input{7_security_analysis}

\input{8_evaluation}
\input{9_discussion}
\input{10_related_work}
\input{11_conclusion}

\section*{Ethical Considerations}

This paper studies privacy preserving federated learning using MPC to jointly achieve data privacy, model privacy, and integrity in a single server setting. Federated learning involves sensitive user data and proprietary model parameters, so we consider the ethical implications of both our system design and its potential deployment.

\noindent\textbf{Stakeholders.}
Stakeholders include end users whose private data is used for training, service providers deploying federated learning systems, organizations relying on trained models, practitioners implementing such systems, and the broader research community.

\noindent\textbf{Principles.}
We consider the Menlo Report principles of Beneficence, Respect for Persons, Justice, and Respect for Law and Public Interest.

\noindent\textbf{Potential harms.}
A primary concern is misuse. Even if training is privacy preserving, the resulting model may be used in harmful or unethical applications. Second, the system’s guarantees depend on the stated threat model, including the honest majority assumption within committees and the malicious with abort setting; deploying the system outside these assumptions could lead to overestimated privacy or integrity guarantees. Third, while our protocol enforces well formed gradients, it does not ensure that the trained model is unbiased, fair, or safe for downstream use.

Another concern is dual use. Techniques that enable scalable privacy preserving training could be adopted in settings where users may not fully understand or consent to how their data contributes to model training, even if the protocol itself preserves confidentiality.

\noindent\textbf{Mitigations.}
We mitigate these risks through clear design and transparency. First, all user data remains secret shared and is never revealed to the server or other users. Second, model weights are protected from disclosure to users, preserving provider confidentiality. Third, integrity checks ensure that malformed or adversarial updates are detected within the defined threat model. 

We explicitly state all assumptions and limitations, including the threat model and deployment constraints. Our evaluation uses only standard public datasets and does not involve human subjects, private user data, live deployments, or vulnerability disclosure. We do not release sensitive data or operational system details.

\noindent\textbf{Decision.}
We believe it is ethical to proceed and publish because this work advances practical mechanisms for protecting both user data and model confidentiality in federated learning. The system reduces reliance on weaker or ad hoc privacy approaches, while making all assumptions explicit. Publication enables scrutiny, informed adoption, and responsible deployment.

\bibliographystyle{plain}
\bibliography{references}
\appendix
\input{12_appendix}

\end{document}

%% file: defs.tex
\usepackage{enumitem, amsmath, amssymb}
\usepackage{amsthm}
\usepackage{xcolor}
\usepackage{todonotes}
\usepackage{stmaryrd} 
\usepackage{graphicx}
\usepackage{float}
\usepackage{multirow}
\usepackage{booktabs}
\usepackage{subcaption}

\usepackage{tikz}
\usepackage{varwidth}
\usepackage[operators,sets]{cryptocode}
\usepackage[breakable]{tcolorbox}
\usepackage{cuted}
\usepackage{subcaption}
\usepackage{comment}
\usepackage{makecell}
\usepackage{xcolor}

\usepackage{pifont}
\newcommand{\cmark}{\ding{51}}
\newcommand{\xmark}{\ding{55}}
\usepackage{tikz}
\newcommand{\dotsize}{0.65ex}
\newcommand{\fullc}{%
\tikz[baseline=-0.6ex]\fill (0,0) circle (\dotsize);}
\newcommand{\emptyc}{%
\tikz[baseline=-0.6ex]\draw (0,0) circle (\dotsize);}
\newcommand{\halfc}{%
\tikz[baseline=-0.6ex]{
  \begin{scope}
    \clip (-\dotsize,-\dotsize) rectangle (0,\dotsize);
    \fill (0,0) circle (\dotsize);
  \end{scope}
  \draw (0,0) circle (\dotsize);
}}

\renewcommand{\ell}{b}   

\newcommand{\sysname}{\textsc{Nomai}}
\newcommand{\totalC}{N}
\newcommand{\Co}{\mathcal{C}}
\newcommand{\us}{u}
\newcommand{\size}{n}
\newcommand{\sizeprime}{n'}
\newcommand{\tile}{\mathcal{T}}
\newcommand{\pf}{l}

\newcommand{\thr}{t}
\newcommand{\num}{c}
\newcommand{\dropout}{D}
\newcommand{\Ser}{\mathcal{S}}
\newcommand{\calG}{\mathcal{G}}
\newcommand{\g}{\mathfrak{g}}
\newcommand{\gradDim}{B_{\mathrm{grad}}}
\newcommand{\R}{\mathbb{R}}
\newcommand{\scr}[1]{\mathsf{#1}}
\newcommand{\PSS}{\textsf{PSS}}
\newcommand{\vx}[1]{\overrightarrow{#1}}
\newcommand{\adv}{\mathcal{A}}
\newcommand{\Sim}{\mathsf{Sim}}
\newcommand{\numPSS}{s}
\newcommand{\msg}[2]{\mathsf{msg}_{#1\to #2}}
\newcommand{\numvec}{n_{\mathrm{vec}}}

\newcommand{\update}[1]{\textcolor{black}{#1}}

\newcommand{\onop}[1]{\overset{\raisebox{0.5ex}{\scriptsize \textbf{online}}}{#1}}

\newcommand{\Trunc}[1]{\operatorname{Trunc}_{#1}}
\newcommand{\PiShare}{\Pi_{\mathrm{Share}}}
\newcommand{\PiReveal}{\Pi_{\mathrm{Reveal}}}
\newcommand{\PssToAdd}{\Pi_{\mathrm{PSS}\to\mathrm{Add}}}
\newcommand{\Share}[2]{\PiShare^{(#2)}(#1)}
\newcommand{\Reveal}[1]{\PiReveal(#1)}
\newcommand{\msb}{\mathrm{msb}}

\newcommand{\Cleanup}[1]{\textsf{Cleanup}(#1)}
\newcommand{\DegRed}{\Pi^{\mathsf{Pack}}_{\mathsf{DegRed}}}
\newcommand{\TruncDegRed}{\Pi^{\mathsf{Pack}}_{\mathsf{Trunc\text{-}DegRed}}}
\newcommand{\DegRedRepack}{\Pi^{\mathsf{Pack}}_{\mathsf{DegRed\text{-}Repack}}}
\newcommand{\F}{\mathbb F}
\newcommand{\Fp}{\mathbb F_p}

\newcommand{\pss}[2]{\llbracket #1 \rrbracket_{#2}}
\newcommand{\pssf}[3]{\llbracket #1 \rrbracket_{#2}^{#3}}
\newcommand{\sss}[2]{[#1]_{#2}}
\newcommand{\sssf}[3]{[#1]^{#3}_{#2}}
\newcommand{\ash}[1]{\langle #1 \rangle}

\newcommand{\lag}[3]{\lambda^{(#1)}_{#2}(#3)}
\newcommand{\Cin}{C_{\mathrm{in}}}
\newcommand{\Cout}{C_{\mathrm{out}}}
\newcommand{\Winograd}[2]{F(#1\times #1,#2\times #2)}
\newcommand{\Ntile}{N_{\mathrm{tile}}}
\newcommand{\Had}{\odot}

\newcommand{\nrow}{n_1}
\newcommand{\ncol}{n_2}
\newcommand{\Ldiag}[2]{L_{#1}(#2)}
\newcommand{\Udiag}[2]{U_{#1}(#2)}
\newcommand{\UdiagPrime}[2]{U'_{#1}(#2)}
\newcommand{\rot}[1]{\rho^{#1}}

\theoremstyle{definition}
\newtheorem{definition}{Definition}
\newtheorem{theorem}{Theorem}
\newtheorem{lemma}[theorem]{Lemma}

\newcommand{\bcwarch}{\textbf{BCW-MLP}}
\newcommand{\creditarch}{\textbf{Credit-MLP}}
\newcommand{\mnistarch}{\textbf{MNIST-CNN}}
\newcommand{\fmnistarch}{\textbf{FMNIST-CNN}}
\newcommand{\svhnarch}{\textbf{SVHN-CNN}}
\newcommand{\cifararch}{\textbf{CIFAR-MLP-4K}}

 \newcommand{\squishlist}{
	\begin{list}{$\bullet$}
		{
			\setlength{\itemsep}{0pt}
			\setlength{\parsep}{0.5pt}
			\setlength{\topsep}{0.5pt}
			\setlength{\partopsep}{0pt}
			\setlength{\leftmargin}{1em}
			\setlength{\labelwidth}{1em}
			\setlength{\labelsep}{0.5em} } }
	
\newcommand{\squishend}{
\end{list}  }
\newenvironment{mybox2}[1]{%
    \begin{tcolorbox}[title={#1}, 
    colback=white, breakable]%
    }{
    \end{tcolorbox}
}

%% file: 1_intro.tex
\section{Introduction}
\label{sec:intro}
Federated learning (FL) is a distributed training paradigm in which a massive number of users collaboratively train a global model without ever centralizing their raw data. FL is widely deployed, from Google's next-word prediction across Android devices~\cite{google} to Apple's on-device intelligence across iPhones~\cite{apple}. Yet this is not without serious security vulnerabilities. FL involves two parties with competing incentives--users and a coordinating server. Users aim to keep their local data private; in many deployments, this data captures highly sensitive behavior that must never be exposed to the server or other participants. The server, typically a commercial organization, seeks to protect its proprietary model weights as intellectual property and to prevent leakage of sensitive training dynamics.
The distributed nature of FL further expands the attack surface. Malicious users can inject poisoned or malformed data to degrade global model performance~\cite{biggio2021poisoning, mei2015teaching, Fang2020LocalMP, bhagoji2019analyzing, kairouz2019advances}, or embed backdoors that are only triggered at test time~\cite{chen2017targeted, bagdasaryan2018backdoor, Xie2020DBADB}.

Beyond security, any practical solution must satisfy two  constraints that are crucial in real-world deployments. \update{First, it must operate in a \emph{single-server} setting, with one persistent entity responsible for orchestrating training, distributing model updates, and aggregating results. Such a setting may use lightweight user helpers, but it cannot rely on more non-colluding servers. This reflects real-world FL deployments; introducing multiple servers would alter the deployment model and incur substantial infrastructure overhead.} Second, the solution must scale to \emph{millions of users}. The power of FL comes precisely from broad participation and any approach that fails to scale to this regime undermines the core promise of FL.

\noindent \textbf{Related work.} Prior work falls into two camps. Scalable single-server approaches primarily focus on either data privacy and integrity~\cite{eiffel, ma2025armadillo, bell2023acorn, lycklama2023rofl, rathee2023elsa, falcon}, but assume model weights are revealed in the clear to users. The only prior work that attempts to address both data and weight privacy in the single-server setting is POSEIDON~\cite{poseidon}. However, it falls short in several ways. First, it does not provide data integrity. Second, it operates in a semi-honest threat model. Finally, it relies on homomorphic encryption-based computation, which significantly limits scalability, as we empirically verify in Sec.~\ref{sec:eval:scale}. The other line of work considers secure training protocols that achieve all three security guarantees but assume multiple non-colluding servers. In the context of FL, this effectively transforms the problem into outsourced computation, where users delegate their data to a set of servers. Additionally, scalability remains a major bottleneck in this setting as well, as verified in our experiments. Achieving all three remains an open challenge that prior work
has yet to fully resolve.  We close this gap.\\
\noindent\textbf{Our Contributions.} We present \sysname{}, the first scalable FL system to provide all three guarantees in the single-\update{coordinating server, user-assisted} setting. \sysname{} operates under an honest-majority assumption in the semi-honest model, with an extension to the malicious-with-abort setting. \sysname{} achieves this through a suite of technical innovations:
\squishlist
\item \textbf{New Architecture.} \update{\sysname{} retains a single coordinating server while electing small, disjoint committees of participating users to serve as MPC helpers. Each user's data is assigned to one committee and secret-shared among its members, which jointly compute shares of the gradient aggregate for the assigned data shard. Committees operate in parallel, enabling scalability. The per-helper workload is designed for commodity user hardware, avoiding dedicated server-grade MPC infrastructure. Within each committee, \emph{packed Shamir secret sharing} (\PSS) enables SIMD-style parallelism and further reduces the computational cost.}

\item \textbf{New Cryptographic Primitives.} To further reduce per-user computation, we introduce a suite of new cryptographic primitives, which we believe to be of independent interest to the secure ML community.
\\\textit{(i) Efficient Linear Layers.} Linear layers dominate ML training cost. We introduce targeted optimizations that significantly reduce their overhead in the MPC setting. For convolutional layers, we apply an ML-inspired transformation that reduces the number of multiplications. We further design novel packing strategies for both convolutional and fully connected layers under \PSS{}, yielding additional savings. To our knowledge, this is the first work to bring these optimizations to \PSS{} for ML training.\footnote{While packing has been explored in homomorphic encryption, it differs from \PSS: the slot count there is much larger, leading to different tradeoffs. In \PSS, the packing factor is smaller, as it directly affects the polynomial degree and thus the cost of subsequent operations.}
\\\textit{(ii) Efficient Secure Exponentiation.} We design a new protocol that reduces the rounds required for secure exponentiation, accelerating polynomial evaluation. This is especially important for non-linear layers--typically approximated by high-degree polynomials.
\\\textit{(iii) Efficient Degree Reduction.} We propose a degree-reduction protocol that supports repacking of shares across layouts as data moves between layers, ensuring compatibility with our packing strategies.
\\\textit{(iv) Efficient Truncation.} We propose a new truncation protocol that is compatible with \PSS.
\item \textbf{End-to-End System.} We provide a complete system and evaluate it across 6 datasets. \sysname{} is practically deployable at unprecedented scale: it reduces per-user computation to <\update{350} ms (semi-honest) and <\update{700} ms (malicious) per MNIST training iteration, achieving $10^{\update{3}}$ orders-of-magnitude improvements over prior work~\cite{poseidon} while supporting populations of over $10^{\update{6}}$ users. This demonstrates, for the first time, that FL can meet real-world scale without compromising privacy or integrity.
\squishend

%% file: 2_background.tex
\vspace{-2mm}\section{Background}
\label{sec:background}
\subsection{Cryptographic Primitives}
\label{sec:background-pss}
We use $\vx{x}$ to denote vectors (e.g., $\vx{x}=(x_1,\ldots,x_l)$) and $\scr{f}$ to denote polynomials (e.g., $\scr{f}(X)\in\F_p[X]$).

\noindent\textbf{Shamir Secret Sharing.}
Shamir secret sharing is a threshold scheme for distributing a secret among multiple parties with privacy against coalitions. Formally, it is an $(\size,t)$-threshold scheme over $\F_p$~\cite{shamir1979share}: a secret $x\in\F_p$ is shared among $\size$ parties using a random polynomial $\scr{f}(X)$ of degree at most $t$ with $\scr{f}(0)=x$, and party $P_i$ receives $\scr{f}(\alpha_i)$ at public point $\alpha_i$. We denote this by $\sss{x}{t}$ or $\sssf{x}{t}{\scr{f}}$. Any $t+1$ shares reconstruct $x=\scr{f}(0)$ via interpolation, while $t$ shares reveal nothing. The scheme is linear: additions and multiplication by public constants are local, while multiplying two degree-$t$ sharings yields a degree-$2t$ sharing, requiring degree reduction.

\noindent\textbf{Packed Secret Sharing.}
Packed secret sharing encodes multiple secrets into a single sharing to enable SIMD-style computation~\cite{franklin1992communication}. Given $\vx{x}=(x_1,\dots,x_l)\in\F_p^l$, it is represented by a polynomial $\scr{f}(X)$ of degree $d=t+l-1$. Let $\beta_1,\dots,\beta_l$ be public slot points and $\alpha_1,\dots,\alpha_\size$ evaluation points. A packed sharing $\pss{\vx{x}}{d}$ (or $\pssf{\vx{x}}{d}{\scr{f}}$) satisfies $\scr{f}(\beta_j)=x_j$ for $j\in[l]$, with randomness in the remaining coefficients, and party $P_i$ receives $\scr{f}(\alpha_i)$ as its share. Any $d+1$ shares reconstruct $\vx{x}$, while $t$ shares reveal nothing. We denote sharing and reconstruction by $\PiShare^{(d)}(\vx{x})$ and $\PiReveal(\pss{\vx{x}}{d})$. Operations are slot-wise: additions and multiplication by public constants are local. Multiplying $\pssf{\vx{x}}{d}{\scr{f}}$ and $\pssf{\vx{y}}{d}{\scr{g}}$ yields $\pssf{\vx{x}\odot\vx{y}}{2d}{\scr{h}}$ with $\scr{h}(X)=\scr{f}(X)\scr{g}(X)$ and $(\vx{x}\odot\vx{y})_j=x_jy_j$. The degree increases to $2d$, so degree reduction is required; typically $\size\ge 2d+1$.

\noindent\textbf{Lagrange Interpolation.}
Lagrange interpolation expresses reconstruction as a linear combination of shares. For any $S\subseteq[\size]$ with $|S|=d+1$,
\begin{center}
$
\begin{aligned}
\scr{f}(\beta_j)
&= \sum_{i\in S} \lag{d}{\alpha_i}{\beta_j}\scr{f}(\alpha_i),
\qquad
\lag{d}{\alpha_i}{\beta_j}
= \prod_{\substack{m\in S\\ m\neq i}}
\frac{\beta_j-\alpha_m}{\alpha_i-\alpha_m}.
\end{aligned}
$
\end{center}

\noindent\textbf{Additive Sharing.}
An additive sharing represents a value as a sum of local contributions across parties. For each coordinate $j$, the shares $\{u_{i,j}\}_{i\in[\size]}$ satisfy
\vspace{-2mm}\begin{center}
$
\begin{aligned}\vspace{-4mm}
x_j &= \sum_{i=1}^\size u_{i,j}, \qquad
\{\ash{x_j}\}_{j\in[l]} \leftarrow \PssToAdd(\pssf{\vx{x}}{d}{\scr{f}})
\end{aligned}
$
\end{center}
\vspace{-2mm}
\subsection{Machine Learning Primitives}
\label{sec:background-ml}

Neural network training alternates between a forward pass and a backward pass. In the forward pass, the network computes $\hat{y}=f_\theta(x)$. A convolutional layer produces $y_j[u,v]=\sum_{i,a,b} g_{j,i}[a,b]\;x_i[u+a,v+b]$, while a fully connected layer computes $y=Wx$, followed by an activation $z=\sigma(y)$ and loss $\mathcal{L}=\ell(\hat{y},y)$. In the backward pass, gradients are computed as $\frac{\partial \mathcal{L}}{\partial x_i}=\sum_j \frac{\partial \mathcal{L}}{\partial y_j} g_{j,i}$, $\frac{\partial \mathcal{L}}{\partial g_{j,i}}=x_i \frac{\partial \mathcal{L}}{\partial y_j}$, $\frac{\partial \mathcal{L}}{\partial x}=W^T \frac{\partial \mathcal{L}}{\partial y}$, and $\frac{\partial \mathcal{L}}{\partial W}=\frac{\partial \mathcal{L}}{\partial y} x^T$. Parameters are then updated, completing one iteration.

\noindent \textbf{Winograd Convolution}: This technique~\cite{alam2022winograd} \textit{reduces} the number of multiplications by operating on small overlapping \emph{tiles} of the input. For a Winograd instance $\Winograd{m}{r}$, each computation produces an $m\times m$ output tile from an input tile of size $\tile\times \tile$, where $\tile=m+r-1$. The input is partitioned into tiles $d^{(p,q)}\in\R^{\tile\times \tile}$, each producing an output tile $y^{(p,q)}\in\R^{m\times m}$, with $\Ntile=\left\lceil \frac{H_{\mathrm{out}}}{m} \right\rceil \left\lceil \frac{W_{\mathrm{out}}}{m} \right\rceil$ tiles per channel. Computation follows a transform--multiply--inverse pipeline. Each input tile and kernel are first mapped to a transform domain, multiplied element-wise, and then mapped back:
$
U = G g G^T, \qquad
V = B^T d B, \qquad
Y = A^T (U \odot V) A
$ where $g\in\R^{r\times r}$ is the kernel, $d\in\R^{\tile\times \tile}$ is the input tile, $A,B,G$ are fixed public matrices, and $\odot$ denotes the Hadamard product. For output channel $j$, the transformed products are accumulated across input channels $M_j^{(p,q)}
= \sum_{i=1}^{\Cin}
\bigl(G g_{j,i} G^T\bigr)
\odot
\bigl(B^T d_i^{(p,q)} B\bigr),
y_j^{(p,q)}
= A^T M_j^{(p,q)} A.
$

%% file: 3_problem_setting.tex
\section{Problem Setting}
\label{sec:problem-setting}

\update{We consider a standard cross-device FL setting with $\totalC$ users and a single coordinating server. Each user $\us_i$ holds a private data point $x_i$\footnote{Can be generalized to local dataset; single datapoint is used for simplicity}. The server is untrusted and aims to train a global model $\mathcal{M}$ over the joint dataset $\mathbf{X}=\{x_i\}_{i\in[\totalC]}$ without directly accessing user data. Training proceeds over multiple rounds. In each round, the server maintains the current parameters of $\mathcal{M}$ and obtains the aggregate gradient $\calG=\sum_{i\in[\totalC]}\g_i$, where $\g_i$ is the gradient computed from user $\us_i$'s local data. The server then uses $\calG$ to update the model for the next round.}

\subsection{Security Goals}
\label{sec:overview:goals}

\squishlist
\item \textbf{Data Privacy (User's Goal).}  For every honest user, no other user should learn any information about its raw private data $x_i$. The server learns only the final aggregate $\calG$ and nothing beyond what is revealed by this aggregate.

\item \textbf{Weight Privacy (Server's Goal).} The model parameters held by the server should remain hidden from the users.

\item \textbf{Data Integrity (Server's Goal)\footnote{Note that this goal is valid only in the malicious threat model}}. The server seeks to ensure that individual user gradients are well-formed. To this end, it enforces a validation predicate $\textsf{Valid}(\cdot)$ that defines a syntax for the gradients. A gradient $\g$ is considered valid if and only if $\textsf{Valid}(\g) = 1$. For example, per-gradient checks such as norm bounds are natural instances of $\textsf{Valid}(\cdot).$
\squishend
Reflecting these goals, we consider a corrupted server and assume that an $\eta$ fraction of users may be adversarial, under both semi-honest and malicious-with-abort threat models~\cite{evans2018pragmatic}.

\noindent\textbf{Availability assumption.} In the main protocol and evaluation, we assume all users selected for a training iteration remain online until completion. The protocol tolerates up to $\dropout$ dropouts by overprovisioning each committee with additional members, as described and evaluated in App.~\ref{app:dropout}.

\subsection{Design Goals}


\squishlist
\item \update{\textbf{Single-coordinating server deployment.} \sysname{} should preserve the standard FL deployment model, in which a single server coordinates and orchestrates training. Lightweight MPC helpers are permitted, but the system must not require additional non-colluding servers.}

\item \textbf{No user pre-processing.} A common way to speed up MPC is a data-independent pre-processing phase that generates randomness (e.g., Beaver triples), via multiple rounds of interaction. In cross-device FL, however, users are intermittently available. Since requiring such pre-processing introduces significant coordination challenges, in \sysname{}, users should participate only in the online phase. We still allow independent \textit{local pre-computation} by users.

\item \textbf{Low computation cost for users.} Users are resource-constrained devices (e.g., edge devices). Therefore, \sysname~aims to ensure all user-side operations are efficient.
\squishend
\begin{table}[t]
\centering
\caption{\small{Comparison of \sysname{} with prior works.}}
\label{tab:comparison}

\setlength{\tabcolsep}{3pt}
\renewcommand{\arraystretch}{0.95}

\scriptsize
\resizebox{\columnwidth}{!}{{\renewcommand{\arraystretch}{1.2}
\begin{tabular}{lccccc}
\toprule
\textbf{System} 
& \textbf{Data Privacy}
& \textbf{Weight Privacy}
& \textbf{Integrity} 
& \textbf{\update{Single Coordinator}}
& \textbf{Threat Model} \\
\midrule

EIFFeL\cite{eiffel}
& \cmark & \xmark & \fullc & \cmark & Malicious \\

\parbox[t]{0.38\columnwidth}{%
Armadillo\cite{ma2025armadillo}, Flamingo\cite{ma2023flamingo},\\
ACORN\cite{bell2023acorn}}
& \cmark & \xmark & \emptyc & \cmark & Malicious \\

RoFL\cite{lycklama2023rofl}
& \cmark & \xmark & \emptyc & \cmark & Semi-honest \\

POSEIDON\cite{poseidon}
& \cmark & \cmark & \emptyc & \cmark & Semi-honest \\

ELSA\cite{rathee2023elsa}
& \cmark & \xmark & \emptyc & \xmark & Malicious \\

\midrule

\parbox[t]{0.38\columnwidth}{%
SecureML\cite{mohassel2017secureml}, HMMPC\cite{liu2024scalable}}
& \cmark & \cmark & \fullc & \xmark & Semi-honest \\

SecureNN\cite{wagh2019securenn} 
& \cmark & \cmark & \halfc & \xmark & Malicious \\

\parbox[t]{0.38\columnwidth}{%
ABY3\cite{mohassel2018aby3}, BLAZE\cite{patra2020blaze}, FLASH\cite{byali2020flash}, \\
Trident\cite{chaudhari2019trident}, Falcon\cite{falcon}}
& \cmark & \cmark & \fullc & \xmark & Malicious \\

\midrule
\textbf{\sysname{}}
& \cmark & \cmark & \fullc & \cmark & Malicious \\

\bottomrule
\end{tabular}}
}

\begin{minipage}{\columnwidth}
\scriptsize
\textbf{Top:} FL frameworks.
\textbf{Bottom:} MPC-based general PPML training frameworks.

\textbf{Integrity.}
\emptyc\ norm-bound based checks only;
\fullc\ general-purpose integrity;
\halfc\ malicious privacy with only semi-honest correctness (SecureNN~\cite{wagh2019securenn}).
\end{minipage}
\vspace{-4mm}
\end{table}

%% file: 4_TechnicalOverview.tex
\vspace{-1mm}
\section{Technical Overview} \label{sec:overview}

We present the key technical ideas underlying \sysname{}, organized around the design goals above.
\\\noindent\textit{Single \update{Coordinator} Deployment.}
Ensuring gradient privacy requires performing gradient computation under MPC, i.e., each training round is a secure computation. Prior work~\cite{mohassel2017secureml,bonawitz2017practical} typically assumes multiple non-colluding servers, which conflicts with our single \update{coordinating} server setting. A naive alternative is an $(\totalC+1)$-party protocol among all users and the server, but this is prohibitively expensive at scale.

\sysname{} overcomes this via two ideas. First, instead of a single $\totalC$-party protocol, users are partitioned into $\num$ disjoint committees of size $\size \ll \totalC$. Committees are constructed to bound corruption, using an efficient procedure with $\Omega(\num)$ work per user. Each user’s data is assigned to one committee, yielding disjoint data shards of size $\totalC/\num$. Within each committee, data is secret-shared and an MPC protocol is run independently to compute a local gradient aggregate. This reduces the cost from a $\totalC$-party to a $\size$-party computation, with $\num$ committees running in parallel. Second, we use \emph{packed Shamir secret sharing} to encode $\pf$ values per share, enabling SIMD-style computation and a further $\pf$-fold speedup within each committee. Together, these yield an overall speedup of roughly $\frac{\totalC}{\num \cdot \pf}$. Each committee produces a partial aggregate, which are then securely combined into the final global aggregate revealed only to the server (Fig.~\ref{fig:system}).
\\\noindent\textit{Eliminating pre-processing.}
Secure ML training typically relies on pre-processing to generate data-independent randomness: correlated randomness (e.g., Beaver triples) for multiplications and random masks for non-linear operations such as truncation. \sysname{} eliminates both through careful protocol design. For multiplications, appropriate parameters allow Shamir secret sharing to perform local share-wise multiplication followed by degree reduction, requiring no correlated randomness. For non-linear operations such as truncation, which typically require bit decomposition and additional randomness, \sysname{} avoids this overhead by operating over Mersenne primes ($2^\ell-1$) for efficient modular arithmetic and bit operations, and leveraging users to supply randomness efficiently under \PSS.

\noindent\textit{Computational Efficiency.}
While Shamir sharing supports local multiplication, each multiplication doubles the polynomial degree, requiring degree reduction to continue computation. Naive interpolation costs $O(\size^2)$, while FFT-based methods achieve $O(\size \log \size)$ but require FFT-friendly primes. Since \sysname{} uses Mersenne primes for efficient truncation, these optimizations are unavailable, making degree reduction a central bottleneck. We address this with techniques that reduce both the number and cost of these operations.
\\\noindent\textbf{Efficient Linear Layers.} Our first set of techniques targets the dominant source of multiplications in training--linear layers which involve different types of matrix operations. \sysname{} reduces this cost through two key ideas. 

First, we adopt Winograd convolution for convolutional layers, a transformation that reduces the number of multiplications by converting convolution into a sequence of small linear transforms and element-wise products. To the best of our knowledge, this is the \textit{first} use of Winograd-style techniques in secure ML training. 

However, a direct application of Winograd is not well-suited to $\PSS$. To address this, our second key idea is to design novel packing strategies that retain Winograd’s computational advantages while remaining compatible with packed representations. By carefully aligning data layout with computation, \sysname{} maximizes parallelism while minimizing non-scalar multiplications and hence the number of expensive degree--reduction steps.  Similarly, we design new packing strategies for fully connected layers as well.
To the best of our knowledge, we are the \textit{first} ones to design packing strategies tailored to Shamir secret sharing for secure ML training.
\\\noindent\textbf{Efficient Secure Exponentiation.} Nonlinearities in neural networks (e.g., ReLU, sigmoid, and softmax) could be implemented via polynomial approximations in secure computation for efficiency. Achieving high accuracy, however, often requires high-degree polynomials, which substantially increases the number of multiplications--and consequently, the number of costly degree--reduction protocols.
In particular, the key bottleneck arises from exponentiation: computing $x^k$ naively requires $O(\log k)$ sequential multiplications (and hence degree-reduction steps). We design a more efficient protocol that reduces the sequential depth to $\log q$ rounds of degree reduction, where $q$ is a parameter determined by the corruption threshold. This yields gains whenever $\log q < \log k$, which holds for realistic corruption levels and high-degree polynomials.
At a high level, our approach converts a \PSS{} sharing of $x$ into $q$ additive shares and applies the multinomial theorem to express $x^k$ as a sum of terms that can each be computed via \emph{local} multiplications and then combined with $\log q$ degree--reduction rounds.
\\\noindent\textbf{Efficient Degree Reduction Protocol.}
A central bottleneck in Shamir-based MPC is degree reduction: each secret--secret multiplication produces a degree-$2d$ sharing that must be reduced to degree $d$ before further computation. In $\PSS$, this cost is amplified by frequent reshaping and repacking across layers. Existing approaches either rely on FFT-based interpolation, incompatible with Mersenne primes, or incur overhead from explicit unpacking and repacking. \sysname{} addresses this with an efficient degree-reduction protocol. 
\begin{figure}[t]
\centering
\small
\setlength{\tabcolsep}{2pt}
\renewcommand{\arraystretch}{0.9}

\begin{tabular}{@{}m{0.40\columnwidth} m{0.58\columnwidth}@{}}

\includegraphics[width=0.75\linewidth]{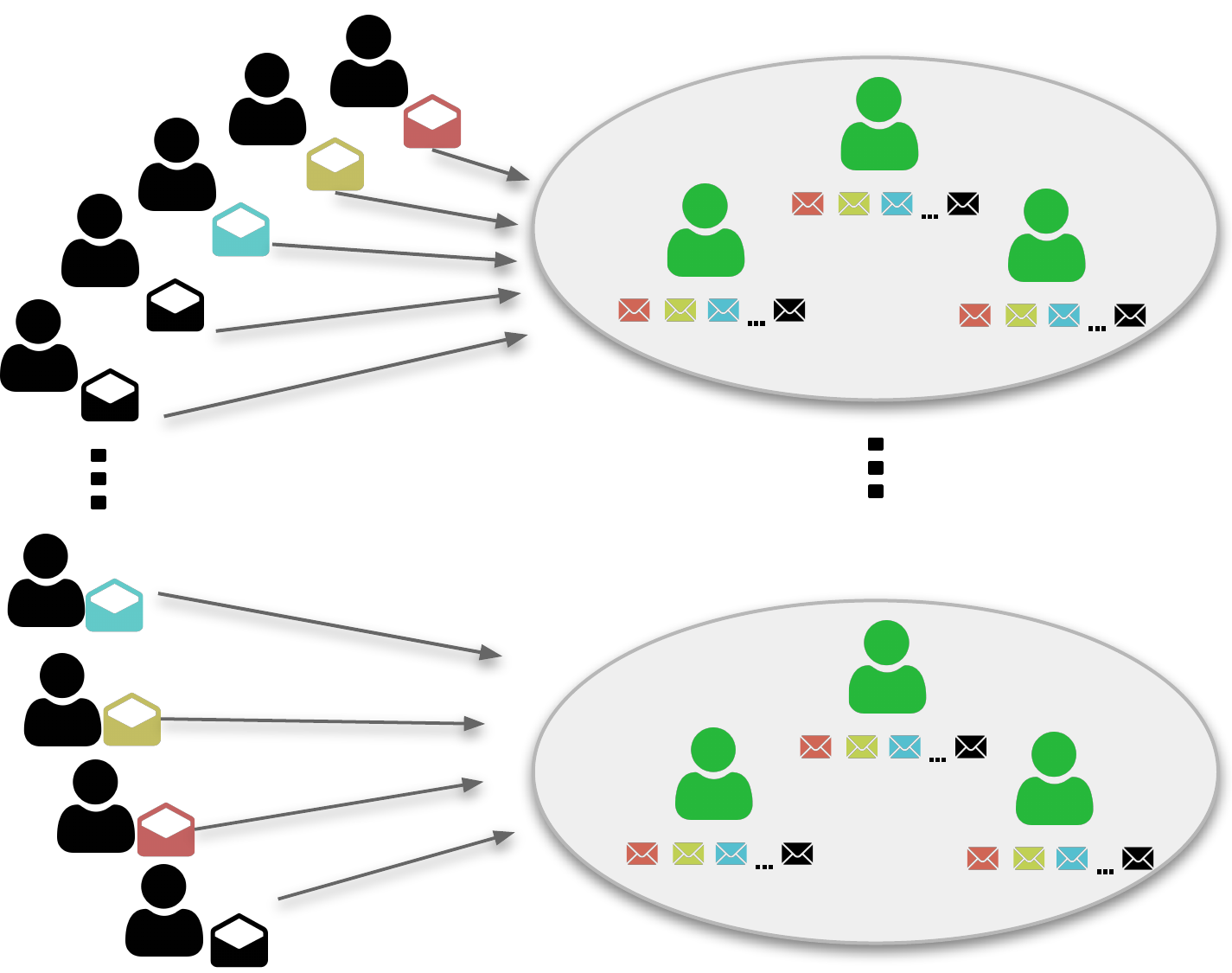} &
\vspace{0pt}\textbf{Data sharing}: $\totalC$ users are organized into $\num$ committees, each of size $\size$. A user’s data is secret-shared among the members of its assigned committee. \\[-3pt]

\vspace{0pt}\textbf{Training}: each committee \textit{independently} executes an MPC protocol to train and produce secret shares of the gradient aggregate for its members. &
\includegraphics[width=0.75\linewidth]{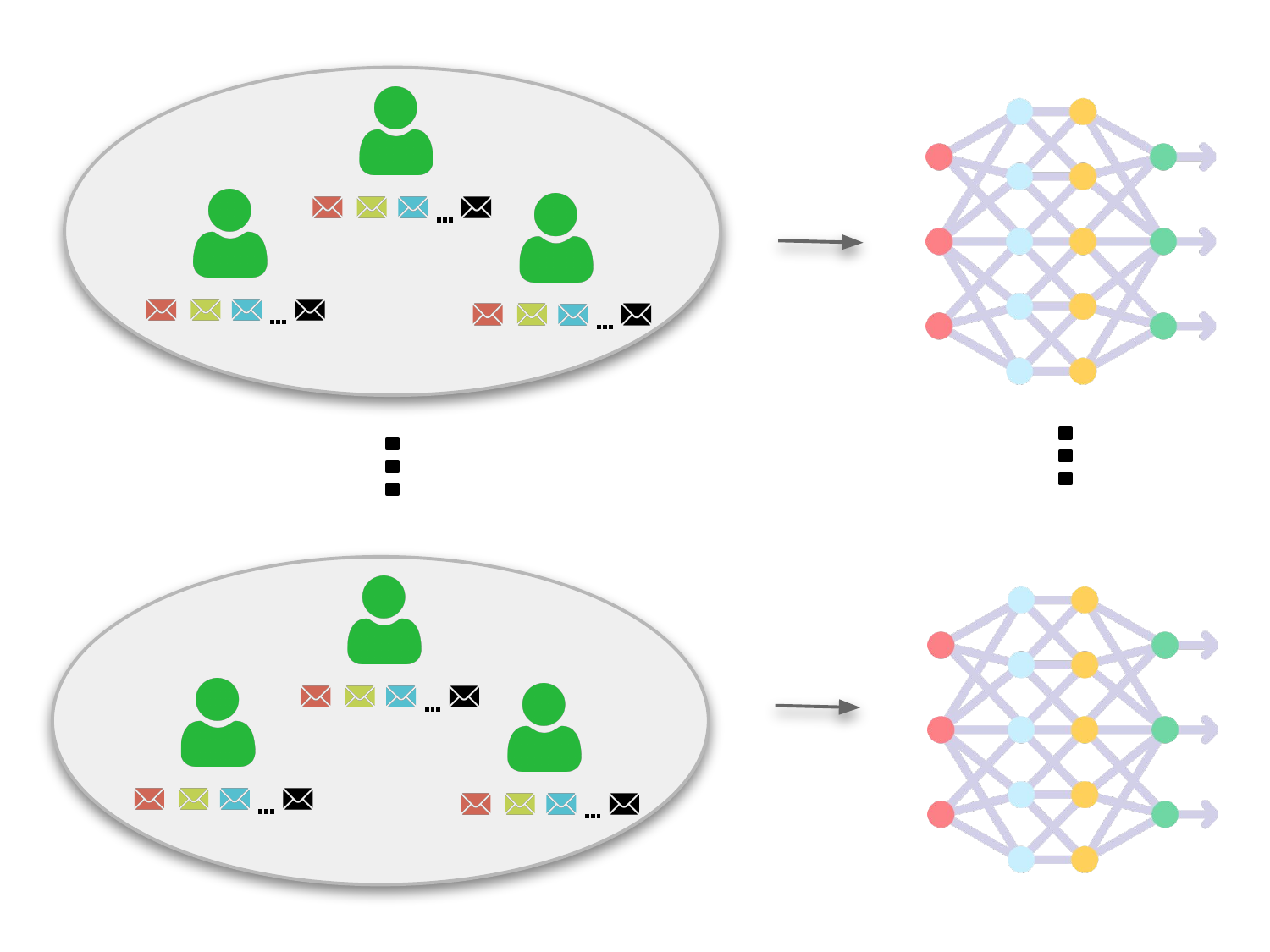} \\[-3pt]

\includegraphics[width=0.75\linewidth]{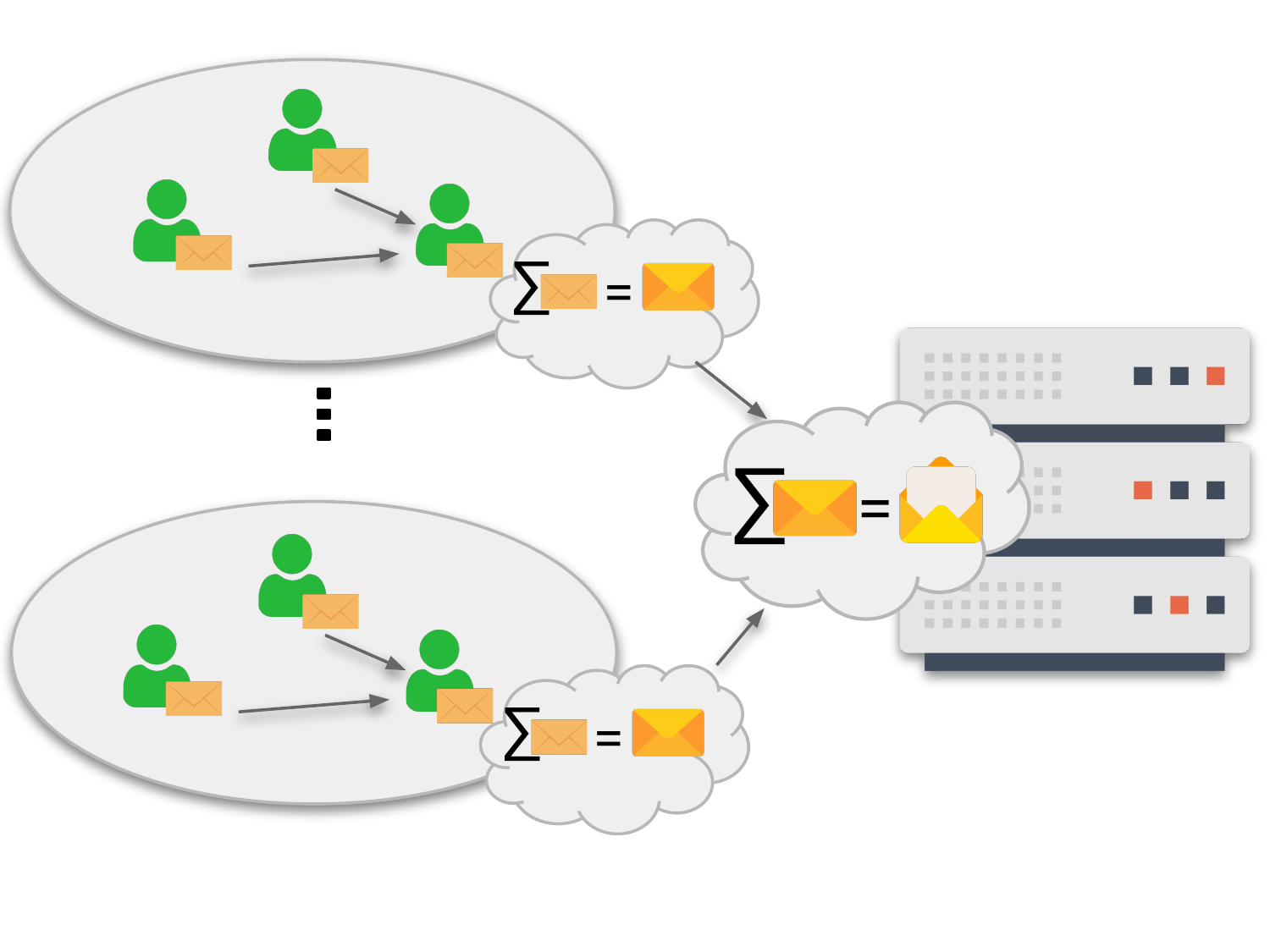} &
\vspace{0pt}\textbf{Aggregation}: updates are \textit{locally} aggregated within each committee, then combined securely \textit{across} committees at the server, with only the final result revealed.
\end{tabular}\vspace{-2mm}

\normalsize
\caption{High level overview of \sysname{}.}
\label{fig:system}
\vspace{-4mm}
\end{figure}
\\\noindent\textbf{Efficient Malicious Security.} Standard approaches to malicious security under Shamir secret sharing, such as Reed-Solomon decoding and verifiable secret sharing, incur significant computational overhead. While straightforward to integrate into our protocol, their cost becomes 
prohibitive at the scale of practical FL. \sysname{} achieves a significantly more 
efficient design via two new ideas that exploit unique properties of the FL setting. First, we make a crucial observation about the FL threat model: in addition to being malicious in the traditional sense of cryptographic protocols, malicious users in FL can corrupt the final aggregate by directly submitting malformed or poisoned data. Hence, as discussed in Sec.~\ref{sec:overview:goals}, ensuring \emph{data integrity}--that all gradients 
satisfy a prescribed validity condition--is a central server goal. This means that 
 simply performing the gradient computation followed by a secure aggregation with malicious security is \emph{not} sufficient; explicit well-formedness checks on each gradient are required regardless. \sysname{} turns this observation into an 
efficiency gain: since well-formedness must be checked anyway, we execute the expensive gradient computation using a cheap semi-honest protocol and enforce malicious security 
\textit{only} at the validity check. This dramatically reduces cost, as the heaviest part of the computation--gradient computation--runs in the cheaper semi-honest model. Second, we design a novel maliciously secure aggregation protocol that combines 
\emph{only} well-formed gradients into the final aggregate, ensuring that any malicious 
behavior is detected unless the adversary restricts itself to manipulations that pass 
the validity check.  Together, 
these two ideas deliver strong practical security at a fraction of the cost of a fully 
malicious protocol.

%% file: 5_protocol_details.tex
\vspace{-2mm}\section{Protocol Details}
\label{sec:protocol-details}
In this section, we present the core building blocks underlying \sysname{}. We introduce them in isolation, as they are of independent interest for secure training with Shamir secret sharing, and compose them into a full end-to-end FL system in the next section. We begin in the semi-honest setting; the extension to the malicious setting is deferred to Sec.~\ref{sec:mal}. Due to space constraints, we present only the constructions here and defer ideal functionalities and security proofs to App.~\ref{app:security}.

\begin{figure*}[t]
    \centering
    \includegraphics[width=0.8\linewidth]{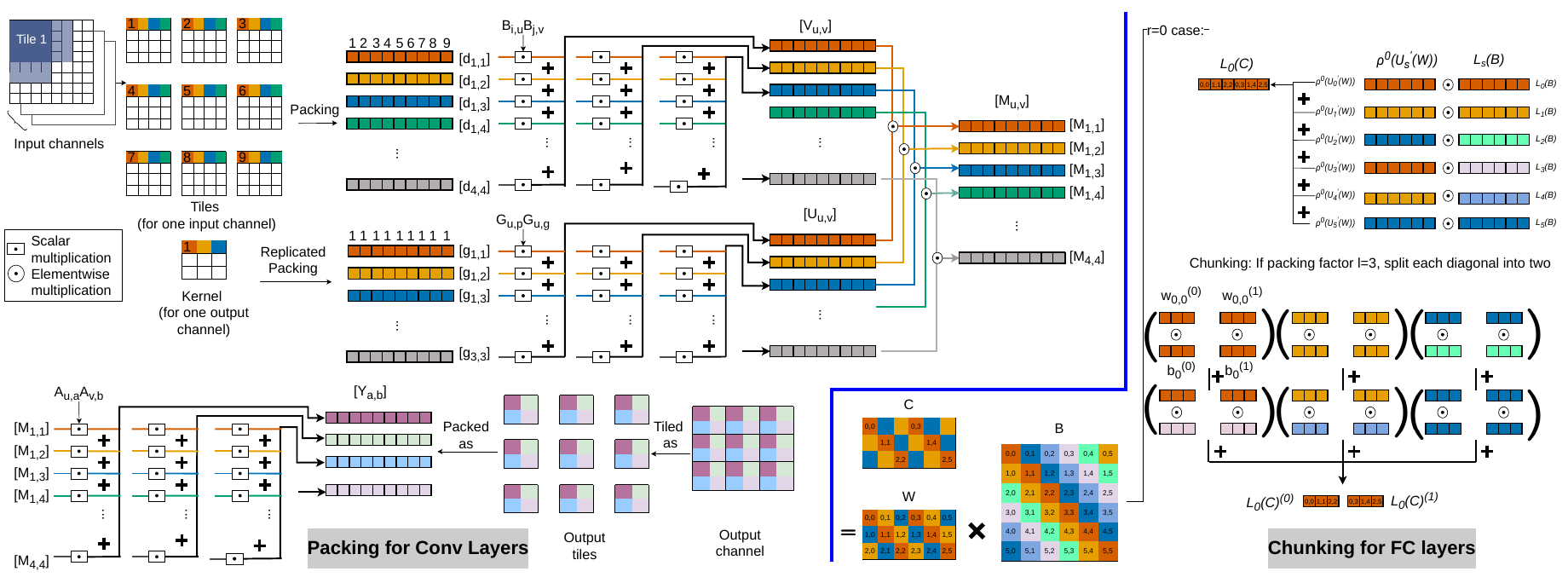}
   \vspace{-2mm} \caption{\footnotesize Left: Example flow of packed Winograd $\Winograd{2}{3}$ on a $6\times 6$ input with one input and one output channel. Right: Chunking example, showing that the chunked representation preserves the original diagonal structure.}
    \label{fig:placeholder}
    \vspace{-3mm}
\end{figure*}
\vspace{-2mm}
\subsection{Efficient Linear Layers}\label{sec:linear}
Linear layers dominate the computational cost in ML training, as they involve a large number of multiplications and, consequently, frequent degree-reduction steps. Hence, we begin by reducing the number of multiplications while remaining compatible with $\PSS$.
\subsubsection{Convolutional Layers}

For convolutional layers, we reduce this cost using two ideas. First, we adopt Winograd convolution, which expresses convolution as linear transforms and entry-wise products, reducing the cost of an $m\times m$ output block with an $r\times r$ kernel from $m^2 r^2$ multiplications to $\tile^2$, where $\tile = m + r - 1$. Concretely, it rewrites the computation as
\begin{center}
    $U = G g G^T, \qquad
    V = B^T d B, \qquad
    Y = A^T (U \odot V) A$
\end{center}
where $d\in\Fp^{\tile\times\tile}$ is an input tile, $g\in\Fp^{r\times r}$ is a kernel block, $Y\in\Fp^{m\times m}$ is the output tile, and $A,B,G$ are public transform matrices. Thus, the only non-linear step is the transformed-domain entry-wise product $U\Had V$. 

Second, we amplify these savings by tailoring the packing to the structure of Winograd. A natural approach is to pack different coordinates of the same tile into one sharing. However, this is inefficient under \PSS, since local multiplication by a public constant requires the \emph{same} constant in all slots, while Winograd applies different coefficients from $A$, $B$, or $G$ to different coordinates. Handling slot-dependent constants would require secret-share the public coefficients and perform a full secret--secret multiplication, introducing additional degree-reduction round and entirely wasting the benefit of those coefficients being public. Our idea therefore is to pack the \emph{same coordinate across many tiles} into one sharing. Each transform coefficient is then uniform across all slots, so the Winograd transforms become local operations. To the best of our knowledge, this is the \textit{first} use of Winograd-style techniques in secure ML training.

For clarity, we first describe the protocol for a single input channel and output channel. The full multi-channel version with accumulation over input channels, is given in App.~\ref{app:linear}. 
\\\noindent\textbf{Packing.} Since one \PSS{} can pack up to $\pf$ secrets, we partition the $\Ntile$ tiles into $\lceil N_{\mathrm{tile}}/l \rceil$ groups. Consider one tile group $\gamma$ with tile indices $t^{(\gamma)}_1,\dots,t^{(\gamma)}_l$. Then, for each input-tile coordinate $(i,j)\in\{0,\dots,\tile-1\}^2$, define the packed input vector 
\vspace{-2mm}
\begin{center}
\scalebox{0.9}{$
    \vx{d^{(\gamma)}_{i,j}}
    =
    \bigl(
        d^{(t^{(\gamma)}_1)}_{i,j},\dots,
        d^{(t^{(\gamma)}_l)}_{i,j}
    \bigr).
$}
\end{center}
\vspace{-2mm}
The parties hold the degree-$d$ \PSS{} $\pss{\vx{d^{(\gamma)}_{i,j}}}{d}$. Thus, each \PSS{} is indexed by a within-tile coordinate, while its slots range over different tiles. For simplicity, we assume that each group contains exactly $l$ tiles, padding the last group with zeros if necessary. Under this layout, all transform coefficients are public and uniform across slots, so the input, kernel, and output transforms are local linear operations. Since these coefficients depend only on the public matrices $A$, $B$, and $G$, they can be precomputed offline.

\noindent\textbf{Input Transform.} For each group $\gamma$ and Winograd-domain coordinates $(u,v)\in\{0,\dots,\tile-1\}^2$, the parties compute
\begin{center}
    $\pss{\vx{V^{(\gamma)}_{u,v}}}{d}
    =
    \sum_{i=0}^{\tile-1}\sum_{j=0}^{\tile-1}
    (B_{i,u}B_{j,v})\pss{\vx{d^{(\gamma)}_{i,j}}}{d}.$
\end{center}
\noindent\textbf{Kernel Transform.} Similarly, for each kernel coordinate $(p,q)\in\{0,\dots,r-1\}^2$, define the replicated \PSS{} $\pss{\vx{g_{p,q}}}{d} =    \bigl(g_{p,q}, \dots, g_{p,q}\bigr)$. Then for each $(u,v)\in\{0,\dots,\tile-1\}^2$, the kernel transformation is
\begin{center}
    $\pss{\vx{U_{u,v}}}{d}
    =
    \sum_{p=0}^{r-1}\sum_{q=0}^{r-1}
    (G_{u,p}G_{v,q})\pss{\vx{g_{p,q}}}{d}.$
\end{center}

\noindent\textbf{Multiplication.} After above transforms, the parties perform entry-wise multiplications in the Winograd domain. This is the only step requiring secret-secret multiplication. For each $(\gamma,u,v)$, compute
\vspace{-2mm}
\begin{center}
$\pss{\vx{M^{(\gamma)}_{u,v}}}{2d}
    =
    \pss{\vx{U_{u,v}}}{d}
    \cdot
    \pss{\vx{V^{(\gamma)}_{u,v}}}{d}.$
\end{center}
\noindent\textbf{Output Transform.} Finally, to transform back to spatial domain, compute for each output-tile coordinate $(a,b)\in\{0,\dots,m-1\}^2$, 
\vspace{-2mm}\[
    \pss{\vx{Y^{(\gamma)}_{a,b}}}{2d}
    =
    \sum_{u=0}^{\tile-1}\sum_{v=0}^{\tile-1}
    (A_{u,a}A_{v,b})\pss{\vx{M^{(\gamma)}_{u,v}}}{2d}.
\]
After the output transform, the parties apply degree reduction and truncation to return the result to degree $d$ and scale.
\vspace{-1mm}

\subsubsection{Fully Connected Layers}
\label{sec:fc}
By batching activations from multiple users, we express the fully connected layer as $W \in \Fp^{\nrow \times \ncol}$, $B \in \Fp^{\ncol \times \ncol}$, and $C = WB \in \Fp^{\nrow \times \ncol}$, where $\ncol$ is the input dimension, $\nrow$ is the output dimension, $W$ is the weight matrix, $B$ stacks $\ncol$ input vectors, and $C$ is the output. Matrix multiplication can be implemented via vector rotations, slot-wise multiplications, and summations, enabling parallelism under \PSS{}. Since the server holds $W$ pre-sharing, it can precompute rotated diagonals offline and share them once, eliminating rotation cost online. Therefore, the packing must ensure (i) operations map to slot-wise products and sums, and (ii) rotations apply only to $W$. We adopt the diagonal formulation of~\cite{moon2025thor}, which rewrites $C=WB$ as slot-wise products of matrix diagonals. However, unlike homomorphic encryption settings with large packing (e.g., $2^{15}$ slots), \PSS{} has limited packing (Sec.~\ref{sec:background-pss}), so even one diagonal may not fit in a single sharing. We therefore use a \textit{chunked} variant that preserves diagonal packing while respecting \PSS{} constraints. \update{We first describe the construction for $\nrow \mid \ncol$. Arbitrary $\nrow$ and $\ncol$ are handled below.}
\\
\noindent\textbf{Notation.} For a vector $\vx{v}=(v_0,\dots,v_{\ncol-1})\in\Fp^{\ncol}$, define the rotation as $
    (\rot{r}(\vx{v}))_t = \vx{v}_{t-r \bmod \ncol}$.
For $W\in\Fp^{\nrow \times \ncol}$, let
$\UdiagPrime{k}{W}\in\Fp^{\ncol}$ denote the extended upper diagonal of $W$:
\begin{center}
    $\UdiagPrime{k}{W}
    =
    \begin{cases}
        \Udiag{k}{W},
        & k < \nrow,\\[1mm]
        \rot{\nrow\lfloor k/\nrow\rfloor}\!\left(\Udiag{k\bmod \nrow}{W}\right),
        & \nrow \le k < \ncol.
    \end{cases}$
\end{center}

Here, $\Udiag{k}{W}\in\Fp^{\ncol}$ is the ordinary $k$-th upper diagonal of $W$, given by $\Udiag{k}{W}[t] = W[t\bmod \nrow,\; k+t \bmod \ncol].$
Then, for each output diagonal $r\in[\nrow]$, the diagonal formulation gives
\begin{center}
$\Ldiag{r}{C}
    =
    \sum_{s\in[\ncol]}
    \rot{r}\!\left(\UdiagPrime{s-r}{W}\right)
    \Had
    \Ldiag{s}{B}.$
\end{center}
Here, $\Ldiag{r}{C}$ is the $r$-th lower diagonal of $C$, and $\Had$ denotes entry-wise multiplication.
\\\noindent\textbf{Chunking.} Since each diagonal has length $\ncol$ and one \PSS{} holds at most $l$ values, we split each diagonal into $
    q=\left\lceil \frac{\ncol}{l}\right\rceil
$ chunks. For any $\vx{v}\in\Fp^{\ncol}$, write $
    \vx{v}=\bigl(\vx{v}^{(0)},\dots,\vx{v}^{(q-1)}\bigr),
$
where each chunk $\vx{v}^{(j)}$ is a vector containing up to $l$ consecutive entries, with zero padding in the final chunk if necessary. For each pair of diagonal indices $(s,r)$, define and chunk the rotated weight diagonal as $\vx{w_{s,r}}=\rot{r}(\UdiagPrime{s-r}{W})$, where $ \vx{w_{s,r}}=\bigl(\vx{w_{s,r}}^{(0)},\dots,\vx{w_{s,r}}^{(q-1)}\bigr)$. Similarly, define the input diagonal $\vx{b_s} = \Ldiag{s}{B}$, where $\vx{b_s} = \bigl(\vx{b_s}^{(0)},\dots,\vx{b_s}^{(q-1)}\bigr)$. The parties hold  $\pss{\vx{w_{s,r}}^{(j)}}{d}$ and $\pss{\vx{b_s}^{(j)}}{d}$ for the corresponding chunks. 
\\\noindent\textbf{Computation.} For each output diagonal $r$ and chunk $j$, the parties compute $
    \pss{\vx{S_r}^{(j)}}{2d}
    =
    \sum_s
    \pss{\vx{w_{s,r}}^{(j)}}{d}
    \cdot
    \pss{\vx{b_s}^{(j)}}{d}
$, which is the $j$-th chunk of the output diagonal $\Ldiag{r}{C}$. After all chunks are computed, the tuple $
    \bigl(\vx{S_r}^{(0)},\dots,\vx{S_r}^{(q-1)}\bigr)
$ represents the full diagonal $\Ldiag{r}{C}$. Correctness follows from the fact that chunking only partitions coordinates, while all operations remain coordinate-wise: $
    \Ldiag{r}{C}^{(j)}
    =
    \sum_s
    \vx{w_{s,r}}^{(j)}\Had \vx{b_s}^{(j)}.
$
Notice that the rotation must be applied before chunking: first compute $\vx{w_{s,r}}=\rot{r}(\Udiag{s-r}{W})$ as a full logical diagonal, and only then split it into chunks. Rotating each chunk independently would not, in general, give the correct alignment. After the products are accumulated, the parties perform the necessary degree reduction and truncation.
\\\noindent\textbf{\update{Arbitrary Dimensions.}} \update{For arbitrary $n_1,n_2$, let $h=\gcd(n_1,n_2)$. We partition $W$ and $C$ along the output dimension into $n_1/h$ row blocks $W_1,\ldots,W_{n_1/h}\in\mathbb{F}_p^{h\times n_2}$ and $C_1,\ldots,C_{n_1/h}\in\mathbb{F}_p^{h\times n_2}$, where each block consists of $h$ consecutive rows. For each block, $C_i=W_iB$. Since $h\mid n_2$, the diagonal and chunking construction above applies independently to each block, with $h$ replacing $n_1$. Concatenating the resulting row blocks recovers $C$. This changes only the public diagonal layout with all costs remain unchanged.}
\\\noindent\textbf{Note.} Prior secret-sharing-based secure training treat linear layers as simply matrix-matrix multiplications~\cite{mohassel2017secureml,chaudhari2019trident,liu2024scalable,mohassel2018aby3,falcon,byali2020flash}, and recent work has focused on optimizing non-linear layers~\cite{mohassel2017secureml,mohassel2018aby3,falcon,liu2024scalable}. We revisit the task of computing the linear layers and our techniques are therefore complementary to these directions (see App.~\ref{app:nonlinear} for details).
\subsection{Efficient Secure Exponentiation} \label{sec:exponentation}
Non-linearities in neural networks (e.g., ReLU, sigmoid, softmax) could be implemented via polynomial approximations in secure computation for efficiency. High accuracy often requires high-degree polynomials, increasing the number of multiplications and hence costly degree-reduction steps. The main bottleneck is exponentiation: computing $x^k$ naively requires $O(\log k)$ \textit{sequential} multiplications, each incurring a degree-reduction \textit{round}. In distributed settings such as FL, where communication dominates the cost, minimizing these sequential rounds is critical; accordingly, we design a more efficient exponentiation protocol under \PSS.

The key observation is that exponentiation becomes \emph{cheap} under additive sharing via the multinomial expansion. For ease of exposition, we first explain the idea for a scalar and then extend it to packed vectors. Let $\sss{x}{d}$ be a Shamir-sharing of a scalar $x$. Let $\ash{x}$ denote a $q$-out-of-$q$ additive sharing of $x$ among a set of designated parties $\mathcal Q=\{Q_1,\dots,Q_q\}, q\leq n$, where the share held by $Q_i$ is $\ash{x}_i$. Then, $x^k$ can be computed using multinomial expansion \vspace{-2mm}\begin{gather*}x^k
    =
    \left(\sum_{i=1}^q \langle x\rangle_i\right)^k
    =
    \sum_{\substack{a_1,\dots,a_q\ge 0\\ a_1+\cdots+a_q=k}}
    \binom{k}{a_1,\dots,a_q}
    \underbrace{\prod_{i=1}^q\left(\langle x\rangle_i \right)^{a_i}}_{\text{$O(\log q)$ Seq. rounds}}
\end{gather*}  
Each term $\left(\ash{x}_i\right)^{a_i}$, together with its multinomial coefficient, can be computed locally by $Q_i$, so the only interactive step is multiplying $\prod_{i=1}^q \left(\ash{x}_i\right)^{a_i}$, which requires $O(\log q)$
degree-reduction rounds, independent of $k$. 

To extend this to \PSS{}, let $\vx{x} = (x_1, \dots, x_l)$ be shared under \PSS. We use a two-level decomposition. First, the parties locally convert each slot into an additive sharing: $\{\ash{x_j}\}_{j \in [l]}\leftarrow \PssToAdd\!\left(\pssf{\vx{x}}{d}{\scr{f}}\right)$. However, directly assigning these additive shares to the designated
parties may violate the $t$-threshold privacy guarantee. To avoid this, each party \textit{re-splits} its per-slot shares $\ash{x_j}, j \in [l]$ into a fresh $q$-out-of-$q$ additive sharing and sends one piece to each party in $\mathcal{Q}$. Each designated party $Q_i$ then sums the values it receives by slot index, obtaining its additive share $\ash{\vx{x}}_i$ of the full vector. At this point, each $Q_i$ can locally compute the slot-wise powers $\left(\ash{x_1}_i\right)^{a_i}, \dots, \left(\ash{x_l}_i\right)^{a_i}$ for every required exponent $a_i$. Each $Q_i$ then reshares these as a fresh degree-$d$ \PSS, and the parties evaluate the product terms across these sharings in $O(\log q)$
degree-reduction rounds. The process is detailed in App. \ref{app:exp}. 

\noindent\textbf{Discussion.} The parameter $q$ controls the tradeoff between security and round complexity. The additive sharing over $\mathcal{Q}$ remains private as long as at least one designated party is honest. This can be achieved, for example, by forming $\mathcal Q$ as a small committee that contains at least one honest party with high probability using a committee election protocol~\cite{feige1999noncryptographic}. As discussed above, our approach is  beneficial when $2 + \log q < \log k$. In the extreme case of $q = 2$, exponentiation requires only a \textit{single} degree reduction round, independent of $k$. \update{While we emphasize that our experiments do not rely on this assumption, such a setting can be realistic in practice.} For example, a natural choice in FL is to include the server as one of the designated parties: as a commercial organization subject to accountability, it is unlikely to collude with users.

\subsection{Efficient Degree Reduction} \label{sec:degred}
As discussed in Sec.~\ref{sec:background}, share-wise multiplication of two degree-$d$ sharings produces a degree-$2d$ sharing, which must be reduced back to $d$ before further secret–secret multiplications. While degree reduction is well studied, we introduce a new protocol for two key reasons. First, our use of Mersenne primes rules out FFT-based optimizations. Second, our packed layout changes across layers, requiring frequent repacking of shares. A naive approach would fully unpack and repack at each step, incurring significant overhead. Instead, we fold repacking directly into degree-reduction, eliminating its overhead entirely. We also compare with prior work.

Suppose after multiplication the parties hold a degree-$2d$ \PSS{} $\pssf{\vx{z}}{2d}{\scr{h}}$ of $\vx{z}=(z_1,\dots,z_l)$, where $\scr{h}(\beta_i)=z_i$ for all $i\in[l]$ and party $P_j$ holds the share $\scr{h}(\alpha_j)$. Since $\scr{h}$ has degree-$2d$, each secret $\scr{h}(\beta_i)$ can be reconstructed by Lagrange interpolation as $\scr{h}(\beta_i)=\sum_{j=1}^\size \lambda^{(2d)}_{\alpha_j}(\beta_i)\, \scr{h}(\alpha_j)$. Hence the secret vector $(\scr{h}(\beta_1),\dots,\scr{h}(\beta_l))$ is a public linear function of the share vector $(\scr{h}(\alpha_1),\dots,\scr{h}(\alpha_\size))$. Degree reduction can therefore be viewed as two public linear maps: first recover the encoded slot values from the degree-$2d$ shares, then re-encode the same slots to shares of a fresh degree-$d$ \PSS{}.

Let $\scr{s}$ be a degree-$d$ polynomial satisfying
$
    \scr{s}(\beta_i)=\scr{h}(\beta_i), \forall i\in[l].
$
Let $\lambda^{(d)}_{\beta_i}(\alpha_k)$ denote the Lagrange coefficient for the degree-$d$ encoding map from the values at $\beta_1,\dots,\beta_l,\beta_{l+1},\cdots, \beta_d$ to the new share at $\alpha_k$. Then there exist public coefficients $\{c_{k,j}\}_{k,j\in[\size]}$ for 
\begin{center}
    $\scr{s}(\alpha_k)=\sum_{j=1}^\size c_{k,j}\, \scr{h}(\alpha_j),
\quad \text{where } c_{k,j}
    =
    \sum_{i=1}^l \lambda^{(d)}_{\beta_i}(\alpha_k)\lambda^{(2d)}_{\alpha_j}(\beta_i)$
\end{center}

Thus, $\scr{s}$ is a degree-$d$ polynomial holding the same secrets of $\scr{h}$. However, $\scr{s}$ is still a deterministic function of the original degree$-2d$ shares. Although it encodes the correct secrets, it is not distributed as a fresh degree-$d$ \PSS{}, since its remaining $t=d+1-l$ degrees of freedom are fixed rather than random. To re-randomize $\scr{s}$ without changing its values at the secret slots, each party contributes a random degree-$d$ zero encoding by sampling an independent random degree-$(t-1)$ polynomial $\scr{r}_j$. Let $\mu(X)=\prod_{i=1}^l (X-\beta_i)$, $\scr{w}_j(X)=\mu(X)\scr{r}_j(X)$. Then $\scr{w}_j$ has degree at most $d$ and vanishes at all secret slots. It preserves the secrets while randomizing the remaining $t$ degrees of freedom. Let
$
    \scr{r}(X)=\sum_{j=1}^\size \scr{r}_j(X), 
    \scr{w}(X)=\sum_{j=1}^\size \scr{w}_j(X)=\mu(X)\scr{r}(X),
$
The polynomial $\scr{q}(X)=\scr{s}(X)+\mu(X)\scr{r}(X)$ satisfies $\scr{q}(\beta_i)=\scr{h}(\beta_i)$ and is distributed exactly as a fresh degree-$d$ packed Shamir sharing of $z$.

The protocol realizes this transformation in one communication round. For each pair of parties $(P_j,P_k)$, sender $P_j$ computes
\begin{center}
  \vspace{-2mm}  $\msg{j}{k}
=
\underbrace{c_{k,j}}_{\textbf{offline}}
\;\onop{\cdot}\;
\underbrace{\scr{h}(\alpha_j)}_{\textbf{local share}}
\;\onop{+}\;
\underbrace{\mu(\alpha_k)\,\scr{r}_j(\alpha_k)}_{\textbf{offline}}$
\end{center}
and sends it to $P_k$. Party $P_k$ then sums all received messages:
\begin{center}
$\begin{aligned}
    \scr{q}(\alpha_k)
    =
    \overbrace{\sum_{j=1}^{\size} \msg{j}{k}}^{\textbf{online sum}}
    &=
    \sum_{j=1}^\size c_{k,j}\scr{h}(\alpha_j)
    +
    \mu(\alpha_k)\sum_{j=1}^\size \scr{r}_j(\alpha_k) \\
    &=
    \scr{s}(\alpha_k)+\mu(\alpha_k)\scr{r}(\alpha_k).
\end{aligned}$
\end{center}
Hence the parties obtain shares $\scr{q}(\alpha_1),\dots,\scr{q}(\alpha_\size)$, which form a fresh degree-$d$ \PSS{} $\pssf{\vx{z}}{d}{\scr{q}}$ of the same secret vector $\vx{z}$.

\subsubsection{Degree Reduction with Repacking}\label{sec:degred-repack}
The same idea extends to any public linear map on the secret slots. Instead of re-encoding the original secret vector, we may first apply a public linear map to the slot values, and then directly encode the transformed slots into the new degree-$d$ sharing. Hence slot permutation and more general repacking across one or more \PSS{}s can be fused into degree reduction at no extra interaction cost. 

Suppose the parties hold $\numPSS$ degree-$2d$ \PSS{}
$
    \pssf{\vx{z}^{(a)}}{2d}{\scr{h}^{(a)}} \text{for } a\in[\numPSS],
$
where $\vx{z}^{(a)}=(\scr{h}^{(a)}(\beta_1),\dots,\scr{h}^{(a)}(\beta_l))$. We want a fresh degree-$d$ \PSS{} $\pssf{\vx{z}'}{d}{\scr{q}}$, where the target slot vector $\vx{z}'=(z'_1,\dots,z'_l)$ is a public linear combination of the $\numPSS\cdot l$ total input secrets. Let $M^{(a)}\in\mathbb F_p^{l\times l}$ be a public matrix describing how the slots of the $a$-th input sharing contribute to the output:
\begin{center}
    $z'_i=\sum_{a=1}^{\numPSS}\sum_{u=1}^{l} M^{(a)}_{i,u}\,\scr{h}^{(a)}(\beta_u), 
    \qquad \forall i\in[l]$
\end{center}
Let $\scr{s}'$ be a degree-$d$ polynomial satisfying $\scr{s}'(\beta_i)=z'_i$ for all $i\in[l]$.
Then similarly, we obtain
\begin{center}
    $\scr{s}'(\alpha_k)=\sum_{i=1}^{l}\lambda^{(d)}_{\beta_i}(\alpha_k)\,z'_i
=\sum_{a=1}^{\numPSS}\sum_{j=1}^{\size} c^{(a)}_{k,j}\,\scr{h}^{(a)}(\alpha_j),$
\end{center}
where the public coefficients are
\begin{center}
    $c^{(a)}_{k,j}
    =
    \sum_{i=1}^{l}\sum_{u=1}^{l}
    \lambda_{\beta_i}^{(d)}(\alpha_k)\, 
    M^{(a)}_{i,u}\, 
    \lambda^{(2d)}_{\alpha_j}(\beta_u)$
\end{center}
As before, the parties re-randomize the deterministic encoding by adding a random degree-$d$ zero encoding. Let $\mu(X)=\prod_{i=1}^l(X-\beta_i)$, and let each party $P_j$ sample a random polynomial $\scr{r}_j(X)$ of degree at most $t-1$. For each pair of parties $(P_j,P_k)$, sender $P_j$ computes
\begin{center}
    $\msg{j}{k}
=
\overset{\textbf{online sum}}{\sum_{a=1}^{\numPSS}}
\underbrace{c^{(a)}_{k,j}}_{\textbf{offline}}
\;\onop{\cdot}\;
\underbrace{\scr{h}^{(a)}(\alpha_j)}_{\textbf{local share}}
\;\onop{+}\;
\underbrace{\mu(\alpha_k)\,\scr{r}_j(\alpha_k)}_{\textbf{offline}}$ \vspace{-2mm}
\end{center}
and sends $\msg{j}{k}$ to $P_k$, which party $P_k$ then sums and obtains\vspace{-2mm}
\begin{center}
    $\scr{q}(\alpha_k)
=
\overbrace{\sum_{j=1}^{\size} \msg{j}{k}}^{\textbf{online sum}}
=
\sum_{a=1}^{\numPSS}\sum_{j=1}^{\size} c^{(a)}_{k,j}\,\scr{h}^{(a)}(\alpha_j)
+
\mu(\alpha_k)\sum_{j=1}^{\size} \scr{r}_j(\alpha_k).$
\end{center}
Thus $\scr{q}(X)=\scr{s}'(X)+\mu(X)\sum_{j=1}^{\size}\scr{r}_j(X)$ is distributed as a fresh degree-$d$ \PSS{} $\pssf{\vx{z}'}{d}{\scr{q}}$ of the repacked vector $\vx{z}'$.

\subsubsection{Cost and Comparison}
We report communication and computation per party, and online communication rounds. The coefficients $c_{k,j}$ and $c_{k,j}^{(a)}$ depend only on public evaluation points, slot points, and, when applicable, the public repacking map, and can thus be precomputed offline. Both $\DegRed$ and $\DegRedRepack$ use a single all-to-all resharing round, where each party sends one field element to every other party. Therefore, each output \PSS{} costs one online round and $O(\size)$ communication. For basic degree reduction, each party computes $\size$ messages, each requiring $2+\log\size$ field operations, yielding $O(\size)$ online computation. With repacking, each message is a linear combination of multiple input sharings; if one output depends on $s$ inputs, the cost is $2s+\log\size$ per message. Compared to prior work, Franklin--Yung~\cite{franklin1992communication} extends BGW to \PSS{} with $O(\size)$ communication but incurs $O(\size^2)$ computation without FFT. The DN protocol~\cite{damgaard2007scalable} and its \PSS{} extension in DIK~\cite{damgaard2010perfectly} reduce communication to $O(1)$ via preprocessing, but still require $O(\size^2)$ computation. Thus, DIK is preferable when communication dominates, while \sysname{} is better for layout changes and compute.
\vspace{-2mm}
\subsection{Truncation} \label{sec:trunc}
We use Mersenne-prime fields ($p = 2^\ell - 1$) for efficient truncation. First, truncation reduces to bit shifts with a simple MSB-based correction. Second, they provide single-bit slack, allowing values up to $2^{\ell-1}$ without overflow. This lets us amortize truncation across multiple multiplications rather than after each step, which we later exploit (Sec.~\ref{sec:sys-description:training}).

We build on the truncation protocol of Liu et al.~\cite{liu2024scalable}, which assumes preprocessed randomness: a mask $r$, its truncated value $r'$, and MSB $r_{\msb}$. Given $x$, the protocol opens $c=x+r$, truncates $c$, and corrects using $r$ and the wraparound term $c_{\msb}(1-r_{\msb})$. However, their design targets standard Shamir sharing; we extend it to $\PSS$, where truncation must apply independently to each packed slot without cross-slot interference. This requires redesigning the randomness generation and correction step. We also avoid explicit preprocessing (Sec.~\ref{sec:sys-description:data}) and compare with Liu et al. in App.~\ref{app:eval:hmmpc}.
\\\noindent\textbf{Packed Randomness.} To apply scalar truncation slot-wise, parties receive common randomness in packed form. Let $\pss{\vx{a}}{d}$ contain slots $a_1,\dots,a_l$. For each $i\in[l]$, truncation requires a mask $r_i$, its truncated value $r_i'=\Trunc{f}(r_i)$, and correction bit $1-r_{i,\msb}$. Parties hold $\pss{\vx{r}}{d}$ and $\pss{\vx{r'}}{d}$ whose $i$-th slots contain $r_i$ and $r_i'$, respectively. The wraparound correction requires more care. After opening the masked value, parties learn a public bit $c_{i,\msb}$ for each slot. Since this bit is slot-dependent, it cannot be applied as a single public scalar to a \PSS{} containing all correction bits. Instead, for each $i\in[l]$, parties receive a sparse \PSS{} $\pss{\vx{\rho_i}}{d}$ whose only potentially nonzero slot is $i$, i.e., 
$\vx{\rho_i}=(0,\dots,0,1-r_{i,\msb},0,\dots,0).$
They multiply $\pss{\vx{\rho_i}}{d}$ by the corresponding public bit $c_{i,\msb}$ and sum the sparse sharings. Importantly, the sparse sharing contains $1-r_{i,\msb}$ directly rather than $r_{i,\msb}$. Otherwise, computing $1-\pss{r_{i,\msb}}{d}$ would place $1$ in every other slot, introducing nonzero corrections where slots should remain zero. Storing $1-r_{i,\msb}$ directly ensures that $c_{i,\msb}(1-r_{i,\msb})$ affects only the intended slot.
\\\noindent\textbf{Fixed-point Multiplication.}
The same idea extends to the truncation and DN multiplication protocol of~\cite{liu2024scalable}. For degree-$2d$ inputs, the mask becomes $\pss{\vx{r}}{2d}$, while $\pss{\vx{r'}}{d}$ and sparse corrections $\{\pss{\vx{\rho_i}}{d}\}_{i\in[l]}$ remain degree-$d$. The output is a degree-$d$ \PSS{} $\pss{\vx{\hat{a}}}{d}$ (App.~\ref{app:trunc}).

%% file: 6_sys_description.tex
\vspace{-2mm}\section{\sysname~Description}\vspace{-1mm}
\label{sec:sys-description}
We present the design of \sysname{} and show how the building blocks from the previous section compose into an end-to-end FL training protocol. The system has three phases: \emph{data sharing}, \emph{training}, and \emph{aggregation} (Fig.~\ref{fig:system}). We first describe the semi-honest protocol, then extend it to the malicious setting.

\noindent\textit{Notations.} Let $\Co = \{\Co_1, \cdots, \Co_\num\}$ denote $\num$ disjoint committees, each of size $\size$ with at most $\thr$ corrupt members, chosen from $\totalC$ users with corruption rate $\eta$.
\\\noindent\textit{Setup.} This one-time phase occurs when users register. As in prior work~\cite{ma2025armadillo,eiffel}, we assume a public key infrastructure (PKI)~\cite{diffie2022new}. Each user communicates with the server over a private authenticated channel, while user-to-user messages are relayed through the server and encrypted under the recipient's public key. Each user $\us_i$ is assigned a unique identifier $i \in [\totalC]$.
\subsection{Data Sharing Phase}\label{sec:sys-description:data}
For each training iteration, \sysname{} begins with a data sharing phase that secret-shares each user’s data across committees. We use the $t$-linear sharing optimization~\cite{goyal2021atlas} to reduce the communication cost of distributing shares. Here, two challenges arise: first, users must share their data blindly, without knowing the recipients. Under the static corruption model, where the adversary fixes corrupted parties in advance, committee assignment must therefore occur \emph{after} data sharing; otherwise, the adversary could corrupt an entire committee and compromise privacy. Second, some of our packing strategies (Sec.~\ref{sec:linear}) require combining data from $\pf$ users into a single vector for efficient computation. Since each user holds only its own data, the sharing procedure must allow committees to assemble these packed representations locally once all shares are received.

\sysname{} resolves both challenges in two steps: a blind masked sharing step, followed by targeted unmasking once committee identities are known. First, each user constructs $\numvec$ secret vectors. For each coordinate $j \in [\numvec]$, user $\us_i$ samples a masking vector $\vx{m_{ij}} \leftarrow \Fp^\pf$ and forms
\begin{center}
    $\vx{s_{ij}} = \big( \vx{x_i}[j] + \vx{m_{ij}}[1], \dots, \vx{x_i}[j] + \vx{m_{ij}}[\pf] \big) \in \Fp^\pf$
\end{center}
The user then shares each $\vx{s_{ij}}$ using $\Share{\vx{s_{ij}}}{d}$, with $\size = 2d+1$ to support one local multiplication and $d = \thr + \pf - 1$ for privacy. All shares are sent to the server. Since every entry is masked by fresh randomness, the shares reveal nothing about the underlying data. Next, \sysname{} runs a committee election protocol (Sec.~\ref{sec:election}) to select $\num$ disjoint committees of size $\size$, each with at most $\thr$ corrupted parties with overwhelming probability. This determines the members of each committee, assigns each user’s data to a committee, and fixes a slot $q_i \in [\pf]$ for each user within the packed vector. Now, each user $\us_i$ knows its committee and slot $q_i$. For each coordinate $j \in [\numvec]$, it constructs an unmasking vector as 
\begin{center}
    $\vx{v_{ij}} = \big( \vx{s_{ij}}[1], \dots, \vx{s_{ij}}[q_i-1],\vx{m_{ij}}[q_i], \vx{s_{ij}}[q_i+1] \dots \vx{s_{ij}}[\pf]\big) \in \Fp^l,$
\end{center}
and secret-shares each $\vx{v_{ij}}$ encrypted under its assigned committee members' public keys. 

To support truncation (Sec.~\ref{sec:trunc}), each user also provides randomness aligned to its slot by sharing sparse encodings of $(r_i, \Trunc{f}(r_i), \allowbreak 1-r_{i,\msb})$ which are routed to the same committee (details in App.~\ref{app:trunc}). Each committee member receives shares from $\totalC/\num$ users, decrypts them, and locally subtracts $\vx{v_{ij}}$ from $\vx{s_{ij}}$ to cancel masks, yielding vectors where $\vx{x_i}[j]$ appears only in slot $q_i$. The committee then aggregates $\pf$ such vectors with distinct slots to form packed shares aligned across slots; the randomness is aggregated identically, ensuring it remains aligned with each user’s data. This entire transformation is local and requires no additional communication.

\subsubsection{Committee Election Protocol.} 
\label{sec:election}
Following prior work~\cite{ma2025armadillo,ma2023flamingo}, we assume access to a public randomness beacon~\cite{rabin1983transaction} (e.g., NIST~\cite{nist_beacon20} or drand~\cite{drand}); if unavailable, it can be instantiated via committee election techniques~\cite{alon2024mpc}. From the beacon output, we derive $a, b \in \mathbb{Z}_{\totalC}$ with $\gcd(a,\totalC)=1$ and define $\pi(i) = (ai + b) \bmod \totalC$. \sysname{} uses $\pi$ to deterministically produce three public outputs: (1) committee membership, (2) user-to-committee assignments, and (3) each user’s slot in the packed vector.

\sysname{} partitions users into $\num$ disjoint committees of size $\size$ as contiguous blocks in the permuted order. For $j = 0, \dots, \num-1$, define
$
\Co_j = \{\, \us_i : \pi(i) \in \{j\size, j\size+1, \dots, j\size+\size-1\} \,\},
$
where $\num\size \le \totalC$. Each user $\us_i$ determines its committee by computing $r_i = \pi(i)$ and setting $j = \lfloor r_i / \size \rfloor$ (if $r_i < \num\size$).

To recover all members of $\Co_j$, users exploit that $\pi$ is invertible. Let $a^{-1}$ be the inverse of $a \bmod \totalC$. For each $r \in \{j\size, \dots, j\size+\size-1\}, \pi^{-1}(r) = a^{-1}(r - b) \bmod \totalC. $
Iterating over these $\size$ positions reconstructs $\Co_j$ in $\Omega(\size)$ time. Committee $\Co_j$ processes users with $\lfloor \pi(i)/\size \rfloor = j$. Each user computes $g_i = r_i \bmod (\totalC/\num)$ and sets its slot as $q_i = (g_i \bmod \pf) + 1$.

\begin{lemma}[Committee Safety Probability]
\label{lem:committee-safety}
The probability that all committees are safe, i.e., each $\Co_j$ contains at most $\thr$ corrupted users): (proof in App.~\ref{app:lem:committee-safety})
\vspace{-2mm}\begin{center}
$\Pr[\mathrm{safe}]
=
\frac{1}{\binom{\totalC}{\eta \totalC}}
\sum_{\substack{0 \le x_1,\dots,x_\num \le t \\ \sum_{i=1}^c x_i \le \eta \totalC}}
\left(
\prod_{i=1}^{\num} \binom{\size}{x_i}
\right)
\binom{\totalC - \num\size}{\eta \totalC - \sum_{i=1}^{\num} x_i}$
\end{center}
\end{lemma}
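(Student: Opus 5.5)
The plan is a direct counting argument under a uniform model of which users are corrupted. Under static corruption, the adversary fixes its set $A$ of $\eta\totalC$ corrupted users before the beacon output is revealed. The committees are then contiguous blocks of the permuted order $\pi$. The natural simplification is to view the positions $\pi(A)\subseteq\mathbb{Z}_{\totalC}$ as a uniformly random $\eta\totalC$-subset of $\{0,\dots,\totalC-1\}$. Equivalently, fix the committee positions and let the corrupted set be uniform among all $\binom{\totalC}{\eta\totalC}$ subsets. This explains the denominator.

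I will state this modeling step explicitly. The map $\pi(i)=ai+b$ is a bijection, and for a fixed $A$ the shift $b$ alone makes each position equally likely to be corrupted. However, full uniformity over subsets is an idealization: it holds exactly if $\pi$ is replaced by a uniformly random permutation derived from the beacon. I would either assume this or note that the lemma is meant for that model. This is the main obstacle. An affine map does not induce a uniform distribution over $\eta\totalC$-subsets, so I would phrase the proof as holding for a random permutation and treat the affine $\pi$ as its efficient instantiation.

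Given uniformity, the counting is routine. The $\totalC$ positions split into $\num$ disjoint committee blocks $\Co_1,\dots,\Co_\num$ of size $\size$, plus a leftover block of $\totalC-\num\size$ users assigned to no committee. A corrupted set is described by its intersection counts $x_j=|A\cap\Co_j|$ and the leftover count $\eta\totalC-\sum_j x_j$. For a fixed vector $(x_1,\dots,x_\num)$, the number of subsets with exactly these counts is $\prod_j\binom{\size}{x_j}\binom{\totalC-\num\size}{\eta\totalC-\sum_j x_j}$. The choices are independent across the disjoint blocks, and the leftover binomial vanishes automatically when the index is negative or exceeds $\totalC-\num\size$.

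The event that all committees are safe is the disjoint union over vectors with $0\le x_j\le\thr$ and $\sum_j x_j\le\eta\totalC$. Summing these counts and dividing by $\binom{\totalC}{\eta\totalC}$ gives the stated formula. As a sanity check, summing over all $x_j\le\size$ gives $1$ by Vandermonde's identity.
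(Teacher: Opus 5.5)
Your argument is essentially the paper's proof: fix the $\num$ committee blocks and the leftover set of size $\totalC-\num\size$, treat the corrupted set as a uniformly random $\eta\totalC$-subset, count $\prod_{j}\binom{\size}{x_j}\binom{\totalC-\num\size}{\eta\totalC-\sum_j x_j}$ configurations for each admissible tuple, sum, and divide by $\binom{\totalC}{\eta\totalC}$. Your caveat that this uniformity holds exactly only for a truly random permutation, not for the affine map $\pi(i)=ai+b$, is a fair point that the paper does not address (it simply asserts that committees are sampled uniformly at random), so stating the lemma in the random-permutation model, as you propose, is the right framing.
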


\subsection{Training Phase}\label{sec:sys-description:training}
After data sharing, each committee holds \PSS{} of its assigned inputs, arranged for the first layer. During training, committees independently compute gradients for their users under \PSS{}, enabling full parallelism with all computation local to each committee. At each iteration, the server distributes \PSS{} of the current model weights to every committee. Each committee performs a forward and backward pass using the secure layer primitives from Sec.~5, producing packed gradient shares that are passed to aggregation.
\vspace{-3mm}

\subsubsection{Forward Pass}
Each committee evaluates the model layer by layer on its packed input sharings. Linear layers use the packing-aware protocols from Sec.~\ref{sec:linear}. Convolutional layers use the packed Winograd layout, while fully connected layers use the diagonal representation and chunking strategy from Sec.~\ref{sec:fc}


\noindent\textbf{PolyRelu.} For non-linear layers, \sysname{} replaces ReLU with a low-degree polynomial approximation, PolyReLU. To balance accuracy and efficiency, \sysname{} incorporates several ML-aware optimizations: (i) He initialization to preserve variance and prevent vanishing or exploding gradients, (ii) quantization-aware polynomial approximation~\cite{diaa2024fast} for numerically stable coefficients, and (iii) an activation regularization term that penalizes inputs outside the regime where PolyReLU closely matches ReLU~\cite{diaa2024fast}.

\noindent\textbf{Inter-layer Connections.}
Different layers require different packing layouts. Convolutional layers pack the same Winograd coordinate across tiles, while fully connected layers use diagonal chunks. Even between convolutional layers, layouts do not align: Winograd outputs are non-overlapping $m\times m$ tiles, whereas the next layer expects overlapping $\tile\times\tile$ inputs. Thus, transitions such as Conv-to-Conv and Conv-to-FC require repacking. Since these are public linear transformations over slots, \sysname{} folds them into the degree-reduction step (Sec.~5.3), incurring no extra interaction cost.

\noindent\textbf{Choosing the cleanup path.}
After a secret-secret multiplication, \sysname{} may require three cleanup operations: degree reduction, truncation, and repacking. Degree reduction restores a degree-$2d$ sharing to degree $d$, truncation restores the fixed-point scale, and repacking adjusts the slot layout when the next computation expects a different packing. These operations have different urgency: degree reduction is mandatory before the next secret-secret multiplication, otherwise the degree would exceed $2d$ and the committee could no longer reconstruct the polynomial; repacking is needed only at layout boundaries; and truncation, while required for fixed-point correctness, can be delayed for a bounded number of multiplications. This is because \sysname{} operates over a 61-bit Mersenne field with a much smaller fractional scale (e.g., 16 bits), allowing several multiplications before truncation as long as range analysis rules out modular wraparound.

Delaying truncation has two benefits: it reduces accumulated rounding error, and, when no repacking is needed, exposes a cost tradeoff between $\DegRed$ and $\TruncDegRed$. The former incurs higher communication but lower online computation, while the latter reduces communication at the cost of higher online computation. \sysname{} therefore selects the cleanup protocol based on both layout requirements and whether truncation is due.
\begin{center}
    $\Cleanup{x} =
\begin{cases}
    \DegRed(x)
    & \text{degree reduction only}\\
    \TruncDegRed(x)
    & \text{truncation + Deg. red.}\\
    \DegRedRepack(x)
    & \text{repacking needed}
\end{cases}$
\end{center}
When repacking is required, \sysname{} uses $\DegRedRepack(x)$ and defers truncation to a later step, provided the fixed-point range remains safe. This is valid since repacking occurs only at layer transitions, and two such steps do not occur consecutively. In the next cleanup step, \sysname{} either applies $\TruncDegRed$ to discharge the deferred truncation, or $\DegRed$ if truncation is still not due.
\vspace{-3mm}

\subsubsection{Backward Pass}

The backward pass mirrors the forward pass. Given the secret-shared loss gradient, each committee propagates gradients in reverse order using the same packing-aware primitives. For PolyReLU, \sysname{} computes derivatives using powers of the activation input already cached during the forward pass. For convolutional layers, gradients w.r.t.\ inputs and kernels are evaluated using the same packed Winograd strategy. For FC layers, the input-gradient has the same structure as the forward pass: if $C = WB$, then $\Delta B = W^T \Delta C$, which is computed using the same chunked lower-diagonal representation by swapping operands. However, the weight-gradient $\Delta W = \Delta C B^T$ differs since both $\Delta C$ and $B$ are in lower-diagonal form, so their entries are not aligned for coordinate-wise products. To avoid an extra communication round for repacking, \sysname{} keeps $\Delta C$ in lower-diagonal form and uses rotated copies of $B$, generated during forward-pass degree-reduction and repacking. This increases computation but preserves the communication-round structure of the forward protocol.
\vspace{-1mm}


\subsection{Aggregation Phase}\label{sec:sys-description:agg}

At the end of training, each committee holds packed shares of its local gradient aggregates, and the goal is to combine these into a global aggregate. A transfer of shares across committees is unsafe, since an adversary corrupting parties in multiple committees could accumulate more than $t$ shares of the same polynomial and violate privacy. To avoid this, we exploit that the last operation in training is a multiplication and skip the last degree-reduction step, leaving all shares at degree $2d$. However, such polynomial formed by multiplying two degree-$d$ sharing polynomials is not distributed as a fresh random packed sharing and does not provide the privacy guarantee against up to $2\thr$ shares. Before transferring the aggregate across committees, we therefore add a fresh random degree-$2d$ zero-sharing that preserves the encoded aggregate while re-randomizing it. A $\PSS$ of degree $2d$ hides the $\pf$ secrets against up to $2\thr + \pf - 1$ shares, so even if an adversary obtains $\thr$ shares from both the source and receiving committees, it sees only $2\thr$ shares. This enables a simple aggregation protocol: for each committee $\Co_i$, party $P_{i,j}$ sends its degree-$2d$ share to the corresponding party $P_{1,j}$ in a designated primary committee $\Co_1$, which locally sums all received shares. As a result, $\Co_1$ obtains a degree-$2d$ sharing of the global aggregate, which is then jointly reconstructed and decoded to produce the final gradient for the server.
\vspace{-1mm}

\subsection{Extension to Malicious Threat Model}\label{sec:mal}

Under a malicious threat model, privacy follows from the threshold property of \PSS, while correctness requires additional well-studied checks~\cite{goyal2021atlas,goyal2020malicious,goyal2020guaranteed}. However, \sysname{} achieves a more efficient design by exploiting an observation specific to FL. Beyond protocol deviations, malicious users can corrupt the aggregate by submitting malformed data. Thus, to ensure \emph{data integrity}, each gradient must satisfy a validity condition. Maliciously secure aggregation alone is therefore insufficient; explicit per-gradient well-formedness checks are required.

\sysname{} turns this requirement into an efficiency gain. Since validation is necessary anyway, we perform the expensive gradient computation in the more efficient semi-honest model and enforce malicious security only during validation and aggregation. This reduces overhead by running the dominant computation in the cheaper semi-honest setting. This slightly relaxes the threat model, as discussed below. Finally, we handle a malicious server distributing inconsistent model parameters with a simple weight-consistency check.\vspace{-2mm}
\subsubsection{Weight Consistency}

A malicious server may distribute inconsistent model weights to different committees at the beginning of a round. \update{Prior work shows that such model inconsistency can allow the server to recover a target participant's gradient even when the updates are protected by secure aggregation~\cite{pasquini2022eluding}.} Therefore, before any computation begins, the committees must verify that they all received the same model. For each packed weight block, let $\pss{\vx{w^{(k)}}}{d}$ denote the sharing received by committee $\mathcal C_k$. \update{Each committee reshapes its sharing into a degree-$(d+t)$ sharing over a designated main committee $\mathcal C_{\mathbb{M}}$ by adding a fresh zero-sharing of degree $d+t$, generated using packed analogues of the techniques from~\cite{goyal2021atlas,goyal2020malicious}. This yields $\pss{\vx{\widetilde{w}^{(k)}}}{d+t}$, so $\mathcal C_{\mathbb{M}}$ holds aligned sharings from all committees.}

\update{To check consistency, $\mathcal C_{\mathbb{M}}$ computes differences $\pss{\vx{\delta_i}}{d+t} = \pss{\vx{\widetilde{w}^{(i)}}}{d+t} \allowbreak- \pss{\vx{\widetilde{w}^{(1)}}}{d+t}$ and forms $\pss{\vx{\Delta}}{d+t} = \sum_{i=2}^c r_i \pss{\vx{\delta_i}}{d+t}$, where $r_i$ is a public random challenge from committee $\mathcal C_i$.} If all weights are identical, then $\vx{\Delta}=\vx{0}$. This is checked without revealing by having each party send an exponential encoding of its share in a public group. The parties then combine these encodings to reconstruct the result in the exponent, and if the check evaluates to the identity element of the group, the values are consistent; otherwise, the check fails.

\update{Additionally, for each $\mathcal C_k$, $\mathcal C_{\mathbb{M}}$ verifies via SCRAPE~\cite{cascudo2017scrape} that the transmitted shares and packed slots of $\pss{\vx{\widetilde{w}^{(k)}}}{d+t}$ lie on a single degree-$(d+t)$ polynomial.} SCRAPE has perfect completeness and soundness error at most $1/|\Fp|$. The check accepts iff all slot equalities and consistency tests pass; otherwise, the protocol aborts.

\vspace{-1mm}
\subsubsection{Gradient Validation and Aggregation}
\label{sec:mal:gradient-validation}

\update{A malicious user may corrupt the model by submitting an invalid gradient, so each committee validates individual gradients before aggregation. Following prior work~\cite{eiffel,lycklama2023rofl}, we model validity as a predicate $\mathrm{Valid}(\cdot)$ over gradient vectors, such as a norm bound. In the malicious setting, per-user gradients are retained through the backward pass. At its end, the committee holds degree-$d$ \PSS{} whose slots jointly encode individual gradient coordinates. It converts each \PSS{} into per-slot degree-$\thr$ Shamir sharings and groups them by user to obtain a shared gradient vector $\g_\us$. The committee then securely evaluates $\mathrm{Valid}(\g_\us)$, obtaining a validity bit $b_\us$. We implement norm-bound validation using secure comparison, such as bit decomposition~\cite{liu2024scalable}, and DIK-based multiplication~\cite{damgaard2010perfectly}. The committee accepts only gradients with $b_\us=1$ and computes $\g_{\mathrm{agg}}=\sum_{\us:b_\us=1}\g_\us$. Full validation and consistency checks appear in App.~\ref{app:gradient-validation}.}

\update{The local aggregates are then combined across committees. Each committee raises its aggregate to degree $2\thr$ by adding a fresh degree-$2\thr$ zero-sharing~\cite{goyal2021atlas,goyal2020malicious} and sends the shares to the primary committee, which verifies consistency with a degree-$2\thr$ Shamir sharing before aggregation. We require $\thr<\pf$. Since $\size=2d+1=2\thr+2\pf-1$, the corresponding Reed--Solomon code has minimum distance $\size-2\thr=2\pf-1>2\thr$. Thus, changes by up to $\thr$ source and $\thr$ primary committee members affect at most $2\thr$ coordinates and cannot transform one valid codeword into another. After all contributions pass, the primary committee sums them into a degree-$2\thr$ sharing of the global aggregate and forwards to the server, which performs a final consistency check during reconstruction. This follows the semi-honest aggregation with extra consistency checks; see App.~\ref{app:gradient-validation}.}
\begin{table}[tbh!]
\centering
\footnotesize
\resizebox{\columnwidth}{!}{%
\begin{tabular}{l l l l}
\toprule
\textbf{Dataset} & \textbf{Name} & \textbf{Architecture} & \textbf{Params} \\
\midrule
BCW & $\bcwarch$ & $\mathrm{FC}(28;28)\rightarrow \mathrm{PolyReLU}\rightarrow \mathrm{FC}(28;28)$ & $\sim 1.6\text{K}$ \\
MNIST & $\mnistarch$ & $\update{\mathrm{Conv}(1;4)\rightarrow\mathrm{PolyReLU}\rightarrow\mathrm{Conv}(4;10)\rightarrow\mathrm{PolyReLU}\rightarrow\mathrm{GAP}\rightarrow\mathrm{FC}}$ & $\sim\! 500$ \\
CIFAR-10 & $\cifararch^{\ast}$ & \makecell[l]{\update{$\mathrm{FC}(4096;256)\rightarrow\mathrm{PolyReLU}\rightarrow\mathrm{FC}(256;64)\rightarrow\mathrm{PolyReLU}$}\\\update{$\mathrm{FC}(64;64)\rightarrow\mathrm{PolyReLU}\rightarrow\mathrm{FC}(64;16)$}} & $\sim\! 1.07\text{M}$ \\
Credit & $\creditarch$ & $\update{\mathrm{FC}(32;64)\rightarrow\mathrm{PolyReLU}\rightarrow\mathrm{FC}(64;4)}$ & $\sim\! \update{2.3}\text{K}$ \\
FashionMNIST & $\fmnistarch$ & $\update{\mathrm{Conv}(1;4)\rightarrow\mathrm{PolyReLU}\rightarrow\mathrm{Conv}(4;8)\rightarrow\mathrm{PolyReLU}\rightarrow\mathrm{GAP}\rightarrow\mathrm{FC}}$ & $\sim\! 400$ \\
SVHN & $\svhnarch$ & $\mathrm{Conv}(3;4)\rightarrow \mathrm{Conv}(4;8)\rightarrow \mathrm{GAP}\rightarrow \mathrm{FC}$ & $\sim\! 500$ \\
\bottomrule
\end{tabular}%
}
\footnotesize
\caption{\footnotesize Model architectures. GAP (App.~\ref{app:gap}) denotes global average pooling. $^\ast$Random projection applied to \cifararch\ (\textsc{Mal}), see App.~\ref{app:rp}.}
\vspace{-4mm}
\label{tab:arch}
\end{table}
\refstepcounter{figure}

\noindent\textbf{Discussion.} In \sysname{}, a malicious user can exhibit two behaviors: (i) \textit{Input manipulation:} supplying malformed data, yielding an invalid gradient. (ii) \textit{Protocol deviation:} deviating during execution to perturb other users’ gradients. Under full malicious security, both are addressed directly: protocol safeguards prevent (ii), while validation filters malformed inputs in (i). Under our relaxed approach, (ii) is not eliminated, but its effect is restricted. This is justified since correctness is defined by the validation predicate: a gradient is accepted iff $\mathsf{Valid}(\g)=1$. Thus, any adversarial perturbation goes undetected \textit{only} if the resulting gradient remains valid, aligning with the server’s acceptance criterion. The server can further limit such influence in two ways. First, it keeps the parameters of $\mathsf{Valid}(\cdot)$ private (e.g., the threshold $\rho$ in $|\g|_2 < \rho$), forcing adversaries to perturb blindly. Second, it monitors rejection rates: if the observed rate exceeds the estimated corruption rate $\eta$, this signals protocol deviations beyond malformed inputs, and the server aborts.

%% file: 7_security_analysis.tex
\vspace{-2mm}\section{Security Analysis}
\label{sec:security}
We use the simulation paradigm~\cite{goldreich2001foundations} for security guarantees.
\begin{mybox2}{Ideal Functionality of \sysname{} (Fig.~\thefigure)}
\label{fig:IF}
\footnotesize
\textbf{Parameters.} $\num$ committees, corruption rate $\eta$, committee size $\size$, threshold $t$, packing factor $l$, security parameter $\kappa$;
\textbf{Parties.} $\totalC$ users $\{\us_i\}_{i\in[N]}$ and server $\Ser$.

\textbf{Functionality.}
Receive honest-user inputs $\vx{x}_i$ and abort if $\textsf{Pr[safe]}<1-2^{-\kappa}$ (Lemma~\ref{lem:committee-safety}); randomly partition users into committees $\{\Co_j\}_{j\in[\num]}$. Receive weights $\{W_j\}$ and validation predicates $\{(\textsf{Valid}_j,\rho_j)\}$ from $\Ser$; abort if the weights differ. The adversary may provide gradients $\g'_i$ for corrupted users and additive perturbations $\delta_i$ for honest users $i\in L$.\footnote{Our protocols allow only additive adversarial errors~\cite{cryptoeprint:2015/154}.}
Compute each honest user's gradient $\g_i$ and set
$\hat{\g}_i=\g'_i$ if corrupted, $\g_i+\delta_i$ if $i\in L$, and $\g_i$ otherwise.
Output to $\Ser$, 
$S=\sum_{i=1}^N
\textsf{Valid}_{j(i)}(\hat{\g}_i,\rho_{j(i)})\hat{\g}_i,
$
along with each committee's number of rejected gradients, where $j(i)$ is the committee containing $\us_i$.
\end{mybox2}
\vspace{-2mm}
\begin{theorem} \label{thm:security}\sysname~achieves the ideal functionality in Fig. \ref{fig:IF} in the presence of a static, malicious adversary
corrupting one server and upto $\eta$ fraction of users (proof in App. \ref{app:security}). 
\end{theorem}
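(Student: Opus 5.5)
We prove Theorem~\ref{thm:security} in the simulation paradigm by constructing a simulator $\Sim$ that interacts with the ideal functionality of Fig.~\ref{fig:IF} and internally runs the real adversary $\adv$ (corrupting $\Ser$ and a set of at most $\eta\totalC$ users, fixed statically). Indistinguishability is argued by a hybrid sequence that walks through the three phases of \sysname{} in order: data sharing, training, and aggregation. First, we condition on the event that every committee $\Co_j$ contains at most $\thr$ corrupted users. The permutation $\pi$ is derived from the public beacon after corruptions are fixed. By Lemma~\ref{lem:committee-safety}, this event fails with probability at most $2^{-\kappa}$, which matches the functionality's abort condition. Hence the first hybrid replaces the real execution by one in which committee safety is assumed, at statistical cost $2^{-\kappa}$.

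\textbf{Data sharing.} $\Sim$ simulates the blind masked sharing of each honest user by sharing uniformly random vectors $\vx{s_{ij}}$, since these are masked by fresh $\vx{m_{ij}}$. After the election, the unmasking vectors $\vx{v_{ij}}$ seen by corrupted committee members are also simulated as random degree-$d$ \PSS{}. The at most $\thr$ shares per honest sharing held by the adversary are uniformly distributed by the $\thr$-privacy of degree-$(\thr+\pf-1)$ \PSS{}, and this remains true jointly across committees because sharings are independent. $\Sim$ extracts corrupted users' inputs and randomness by reconstructing their sharings from the honest shares. This extraction is possible since each committee has $\size-\thr\ge d+1$ honest members. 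These extracted values determine $\g'_i$.

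\textbf{Training.} The computation is a composition of the building blocks from Sec.~\ref{sec:protocol-details}: local linear maps (Winograd, chunked diagonals), $\DegRed$, $\DegRedRepack$, truncation, and exponentiation. For each block we invoke its standalone semi-honest simulator from App.~\ref{app:security} and then apply composition. $\DegRed$ and $\DegRedRepack$ output fresh sharings because the honest zero-encodings $\mu\scr{r}_j$ randomize the $\thr$ free degrees of freedom. Truncation opens $c=x+r$ with $r$ supplied by honest users and therefore uniform. Exponentiation is private because at least one party of $\mathcal Q$ is honest.

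The malicious-with-abort extension is where the real work lies. We follow the paper's relaxation: the gradient computation runs in the semi-honest model, and any deviation by $\adv$ in linear Shamir-based steps is shown to induce only an additive error on the secrets. $\Sim$ computes this error $\delta_i$ from the honest shares and $\adv$'s messages, since it knows all honest randomness, and forwards it to the functionality for $i\in L$. This additive-error property is the step I expect to be hardest. It requires checking that every message a corrupted party sends enters honest shares linearly, including inside truncation, where a deviation could flip the opened $c_{\msb}$. My first attempt is to argue that the opened MSB is a public function of $c$ and that the correction is therefore still an affine function of the perturbation. For nonlinear steps where the error is not simply additive, I would fall back on the validation predicate absorbing the effect.

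\textbf{Weight consistency and aggregation.} The weight-consistency check is simulated by $\Sim$ computing the check outcome itself: it aborts iff the extracted $W_j$ differ, which matches the functionality. Soundness follows from the random linear combination with challenges $r_i$, an error of $1/|\Fp|$ per SCRAPE check, and a Schwartz--Zippel bound for the combination. For validation and aggregation, the degree-$2\thr$ Reed--Solomon distance $2\pf-1>2\thr$ guarantees that tampered shares are detected. The re-randomized degree-$2d$ sharings sent across committees expose at most $2\thr$ shares, so they are simulatable as random. Finally, $\Sim$ programs the reconstruction so that the opened value equals the output $S$ and the rejection counts received from the functionality. It does this by choosing the honest primary-committee shares consistent with the adversary's shares and $S$, which completes the hybrid argument.
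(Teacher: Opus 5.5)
Your decomposition matches the paper's proof step for step: condition on safe committees via Lemma~\ref{lem:committee-safety}, replace honest shares by uniform ones using \PSS{} privacy, plug in the building-block simulators, model the semi-honest training phase as an additive perturbation $\delta_i$, then handle weight consistency, the Reed--Solomon check and the final output. The step you flag as hardest is a genuine gap, though, and none of your three proposed fixes closes it.
\emph{First}, $\Sim$ cannot compute $\delta_i$ from $\adv$'s messages and the honest randomness. An error $e$ injected into a sharing of $x$ before $x$ is multiplied with a data-dependent sharing of $y$ (an activation, a PolyReLU power, a backpropagated gradient) becomes $e\,y$. It is then pushed through PolyReLU and the backward pass, so the final gradient error depends on honest data that $\Sim$ never sees. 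Worse, honest committee members hold $\size-\thr=d+\pf>d+1$ shares, so a corrupted sender in $\DegRed$ (or a corrupted dealer) can leave them on no degree-$d$ polynomial. After the next share-wise product, the error in slot $i$ is then a linear combination, with coefficients the adversary controls, of all $\pf$ slots of the other operand, e.g.\ other users' activations. This also makes your ``extract by interpolating the honest shares'' step ill-defined.
\emph{Second}, in truncation, a perturbation of the opened $c_i$ that changes $c_{i,\msb}$ shifts the output by $\pm(1-r_{i,\msb})(2^{\ell-f}-1)$ on top of public terms. Given the public $c_i$, the hidden bit $r_{i,\msb}$ is correlated with the secret $c_i-r_i$, so the induced error is not an affine function of the perturbation with public coefficients.
\emph{Third}, ``the validation predicate absorbs the effect'' is not a simulation argument. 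The functionality takes a concrete $\delta_i$ chosen without knowledge of honest inputs. A data-dependent error, filtered through $\mathsf{Valid}$ and the rejection counts, is instead a selective-failure channel the functionality does not grant.
The paper's proof does not supply this step either: it simply posits that, up to the gradient computation, the adversary's effect is an adversary-chosen additive $\delta$, with a footnote on additive errors. Completing your argument would need consistency checks inside training, or a weaker functionality in which the adversary may perturb intermediate values rather than only the final gradient.

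Two steps you treat as routine also fail as stated; the paper's proof passes over them in the same way.
\emph{Committee safety.} Lemma~\ref{lem:committee-safety} is proved for a uniformly random partition. Under $\pi(i)=(ai+b)\bmod\totalC$, however, every committee is an arithmetic progression of identifiers with common difference $a^{-1}\bmod\totalC$; for $a=1$ it is a run of consecutive identifiers. A static adversary that corrupts a contiguous range of $\eta\totalC$ identifiers therefore fills an entire committee (up to boundary effects) whenever $a^{-1}\bmod\totalC$ is within about $\eta\totalC/(2\size)$ of $0$ from either side. That event has probability roughly $\eta/\size$ rather than $2^{-\kappa}$, so conditioning on $\neg\mathsf{Bad}$ is not justified for the election as specified.
\emph{Validity bits.} During validation each committee reconstructs $b_\us$ for every user (App.~\ref{app:gradient-validation}), so corrupted members learn exactly which honest users were rejected. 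Your final programming step can match $S$ and the per-committee counts, but $\Sim$ cannot produce these individual bits from the counts alone. The real view therefore reveals strictly more than the functionality outputs.
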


%% file: 8_evaluation.tex
\begin{figure*}[tbh!]
  \centering
  \includegraphics[width=0.49\textwidth]{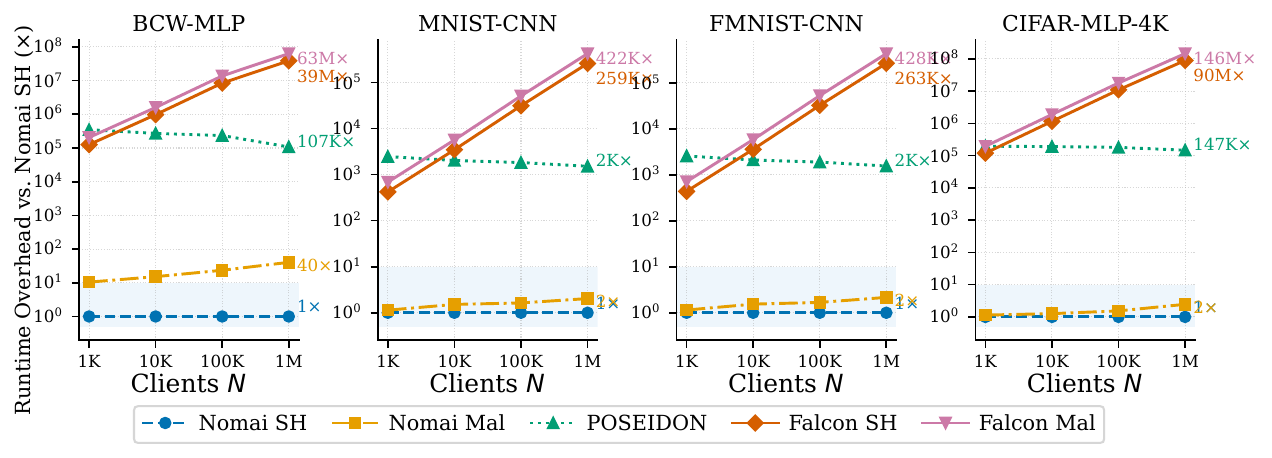}
  \hfill
  \includegraphics[width=0.49\textwidth]{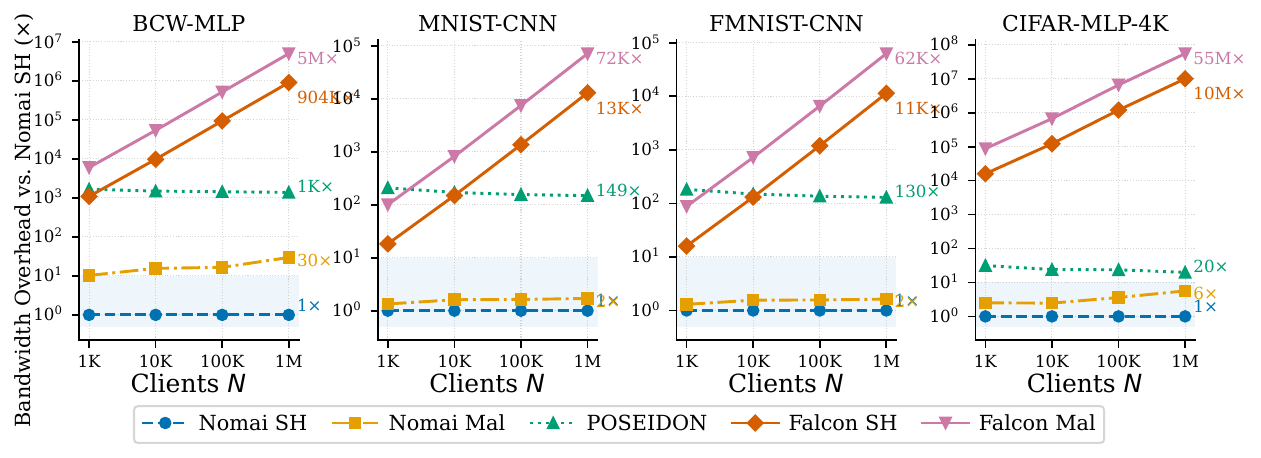}
  \vspace{-2mm}
  \caption{\footnotesize Per-machine overhead relative to \sysname{} (\textsc{SH}) as $\totalC$ increases. \textbf{Left:} compute overhead ($\times$) per iteration. \textbf{Right:} bandwidth overhead ($\times$) per iteration, both compared to baselines.}
  \vspace{-4mm}
  \label{fig:eval-eq1}
\end{figure*}

\begin{table}[tbh!]
\vspace{-3mm}
  \centering
  \caption{SH time vs. corruption rate $\eta$ at $\totalC=100\text{K}$.}
  \label{tbl:eval:beta-sweep}
  \scriptsize
  \setlength{\tabcolsep}{2pt}
  \resizebox{\columnwidth}{!}{%
  \begin{tabular}{@{}r rrr rrr rrr rrr rrr@{}}
    \toprule
    $\eta$ & \multicolumn{3}{c}{\bcwarch} & \multicolumn{3}{c}{\mnistarch} & \multicolumn{3}{c}{\fmnistarch} & \multicolumn{3}{c}{\svhnarch} & \multicolumn{3}{c}{\cifararch} \\
    \cmidrule(lr){2-4}\cmidrule(lr){5-7}\cmidrule(lr){8-10}\cmidrule(lr){11-13}\cmidrule(lr){14-16}
            & $\thr$ & $\size$ & \textit{(ms)} & $\thr$ & $\size$ & \textit{(ms)} & $\thr$ & $\size$ & \textit{(ms)} & $\thr$ & $\size$ & \textit{(ms)} & $\thr$ & $\size$ & \textit{(s)} \\
    \midrule
    0.01 &  13 &  53 &   3.3 &  12 &  43 &   283 &  12 &  43 &  222 &  12 &  43 &   301 &  16 & 103 & 1.72 \\
    0.05 &  26 &  79 &   6.5 &  24 &  67 &   691 &  24 &  67 &  542 &  24 &  67 &   734 &  34 & 139 & 2.01 \\
    0.10 &  42 & 111 &  11.8 &  39 &  97 &  1346 &  39 &  97 & 1048 &  42 & 111 &  1521 &  57 & 185 & 2.49 \\
    0.15 &  63 & 153 &  23.0 &  63 & 153 &  2809 &  63 & 153 & 2221 &  63 & 153 &  2845 &  85 & 241 & 3.06 \\
    0.20 &  92 & 211 &  44.5 &  92 & 211 &  5499 &  92 & 211 & 4360 &  92 & 211 &  5584 & 102 & 243 & 3.94 \\
    0.25 & 138 & 303 &  86.4 & 151 & 341 & 11727 & 151 & 341 & 9335 & 138 & 303 & 11254 & 151 & 341 & 5.44 \\

    \bottomrule
  \end{tabular}}
  \vspace{-2mm}
\end{table}
\vspace{-3mm}
\section{Evaluation}\vspace{-1mm}
\label{sec:evaluation}
\noindent \textbf{Setup.} \sysname{} is implemented using NTL~\cite{shoup2001ntl}. Experiments run on a 12-core Intel Xeon CPU with 125~GB RAM. We report per-machine compute per training iteration (ms) and bandwidth (MB), under both semi-honest (\textsc{SH}) and malicious (\textsc{Mal}) settings. Our setup spans $\totalC \in [1\text{K}, 1\text{M}]$ users, $\num \in [1,25{,}000]$ committees, and $\pf \in [10,50]$. The threshold $\thr$ is chosen so that the probability any committee is unsafe is at most $10^{-12}$ under a $\eta=1\%$ corruption rate (realistic for FL~\cite{corruption}), modeling corrupted parties per committee via a hypergeometric distribution; this yields committee size $\size = 2(\thr+\pf-1)+1 \in {37,\ldots,139}$. For gradient validation, we use an $\mathnormal{l}_2$-norm bound test~\cite{l2norm1, eiffel}.

\noindent \textbf{Datasets.} We use the datasets: (i) BCW \cite{bcw}; (ii) MNIST \cite{mnist_uci_1998}; (iii) CIFAR-10 \cite{krizhevsky2009learning}; (iv) Credit \cite{yeh2009comparisons}; (v) FashionMNIST \cite{FMNIST}; \& (vi) SVHN \cite{netzer2011reading}, with dimensions padded to power of two. 

\noindent \textbf{Model Architectures.}
We evaluate six workloads: BCW, Credit, \& CIFAR-10 use fully connected models; MNIST, FashionMNIST, \& SVHN use compact CNNs (Tab.~\ref{tab:arch}), training details and accuracy in App.~\ref{app:eval:acc}.
\\\noindent \textbf{Baselines.}
We compare \sysname{} against two systems providing both data and weight privacy: \textsc{Poseidon}~\cite{poseidon}, an HE-based $N$-party framework, and \textsc{Falcon}~\cite{falcon}, a maliciously secure three-server MPC protocol (Tab.~\ref{tab:comparison}). \textsc{Poseidon} is the closest prior single-server FL system with weight privacy, albeit only in the \textsc{SH} model and without data integrity. \textsc{Falcon} is a representative secure-training baseline, but assumes multiple non-colluding servers. We also compare against \textsc{HMMPC}~\cite{liu2024scalable}, an \textsc{SH} protocol using unpacked Shamir secret sharing (App.~\ref{app:eval:hmmpc}). These are the closest baselines to \sysname{} in secure-training functionality. A separate line of single-server work focuses only on \emph{secure aggregation}~\cite{eiffel,lycklama2023rofl,bell2023acorn,ma2025armadillo,ma2023flamingo}, leaving model weights and training computation exposed. \sysname{} instead brings both inside MPC, secret sharing the model and user data throughout the forward and backward pass. We therefore do not treat secure aggregation systems as functionality-matched baselines. Nevertheless, in App.~\ref{app:eval:secure-aggregation}, we compare against two state-of-the-art systems, \textsc{Flamingo}~\cite{ma2023flamingo} and \textsc{Willow}~\cite{bell2025willow}. Even comparing \sysname{} committee members directly with their aggregation servers, \sysname{} is substantially faster despite secure training.


\noindent \textbf{Experiments.} We evaluate \sysname{} along two axes:\\
$\bullet$~\textbf{(EQ1)} How does \sysname\ compare to prior systems?\\
$\bullet$~\textbf{(EQ2)} How does \sysname\ scale with $\totalC$, model size, and $\eta$?

Additionally, we discuss how to choose optimal \sysname{} parameters in (App.~\ref{app:eval:eq2}).


\begin{figure}[tbh!]\vspace{-3mm}
  \centering
  \includegraphics[width=\linewidth]{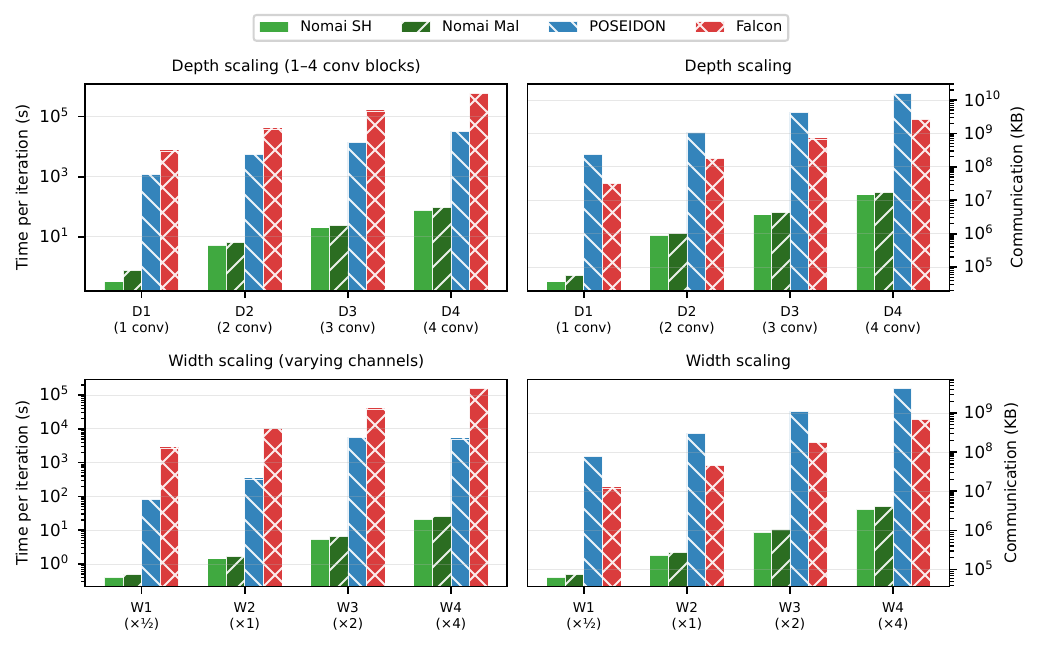}
\vspace{-8mm}  \caption{\footnotesize Per-iteration compute (left) and bandwidth (right) at $\totalC=100\text{K}$.
           Top: depth scaling (D1--D4); bottom: width scaling (W1--W4).}
  \label{fig:eval:scale-arch}\vspace{-3mm}
\end{figure}
\subsection{Comparison with Prior Work (EQ1)}
\label{sec:eval:comparison}
\autoref{fig:eval-eq1} shows compute and bandwidth overhead of baselines over \sysname{}. Tab.~\ref{tbl:eval:eq1-compute} compares per-machine iteration latency and
Tab.~\ref{tbl:eval:eq1-comm} compares bandwidth. We report LAN numbers; WAN also reported in App.~\ref{app:eval:wan}; configurations in App.~\ref{app:tab:optimal}.
\\\textit{In terms of raw performance} ( Tab.~\ref{tbl:eval:eq1-compute} and Tab.~\ref{tbl:eval:eq1-comm}), \sysname\ achieves \textbf{exceptionally low per-machine cost}. Concretely, compute is $<350\,\mathrm{ms}$ (\textsc{SH}) and $<700\,\mathrm{ms}$ (\textsc{Mal}) for MNIST, while scaling to larger models for CIFAR ($3\,\mathrm{s}$ \textsc{SH}, $6\,\mathrm{s}$ \textsc{Mal}). Bandwidth is also efficient: in the \textsc{SH} setting, per-machine bandwidth remains essentially constant in $\totalC$, with slight increases in \textsc{Mal}.  \textit{In terms of gains over baselines} (shown in \autoref{fig:eval-eq1}), \sysname\ outperforms all prior systems significantly. \emph{Against \textsc{Falcon}}, \sysname\ (\textsc{SH}) achieves \textbf{$10^{2}$--$10^{8} \times$ lower per-machine compute}, with the gap increasing linearly in $\totalC$. Specifically, gains range from $420 \times$ on \mnistarch\ at $\totalC = 1\text{K}$ to $10^{8} \times$ at $\totalC = 1\text{M}$ on \cifararch, while bandwidth improvements range from $\! 20 \times$ (CNNs at $\totalC=1\text{K}$) to $\! 10^{7} \times$ (\cifararch\ at $\totalC=1\text{M}$). \emph{Against \textsc{Poseidon}}, despite its per-machine cost being independent of $\totalC$, \sysname\ still achieves \textbf{$2 \times 10^{3}$--$1.7 \times 10^{5} \times$} improvements, including $\! 10^{5} \times$ speedup on \bcwarch\ and \cifararch\ at $\totalC = 1\text{M}$. Corresponding bandwidth reductions are $10^{3} \times$ for \bcwarch, 100$\times$ for \mnistarch\ and \fmnistarch, and $20 \times$--$30 \times$ for \cifararch. In the \textsc{Mal} setting, \sysname{} incurs only modest overhead relative to \textsc{SH} ($\! 10$--$30 \times$ for \bcwarch, $\! 2 \times$ for \mnistarch\ and \fmnistarch, and $\! 2$--$5 \times$ for \cifararch), while still significantly outperforming all baselines. These improvements arise from eliminating the bottlenecks of prior work (e.g., $(\cifararch, \totalC = 10^6)$: $2.11\,\mathrm{s}$ vs.\ $86.4\,\mathrm{hours}$ (\textsc{Poseidon}) and $\! 6$ yrs (\textsc{Falcon})). 

\subsection{Scalability (EQ2)}
\label{sec:eval:scale}

\noindent \textbf{(i) Scaling with Number of Users $\totalC$}. \autoref{fig:eval-eq1} shows that \sysname\ scales \textit{seamlessly} to large populations, achieving \textit{constant} compute and bandwidth w.r.t.\ $\totalC$ in \textsc{SH}. Across $\totalC=1\text{K}$ to $1\text{M}$, both vary by less than $4\times$, making \sysname{} effectively independent of $\totalC$. In \textsc{Mal}, costs grow slightly due to consistency checks scaling with $\num$, with bandwidth increasing by at most $\!3.5\times$. Even at $\totalC{=}1\text{M}$: compute ranges only from $\!$289\,ms (\bcwarch) to $\!$5.18\,s (\cifararch), and bandwidth from $\!$40\,MB (\bcwarch) to $\!$472\,MB (\cifararch).
\\\noindent \textbf{(ii) Scaling with Model Architecture}. 
We fix $\totalC=100\text{K}$ and train CIFAR-10 CNNs from two families (Tab.~\ref{tab:arch-families}): depth scaling adding conv blocks, and width scaling increasing channel width. \autoref{fig:eval:scale-arch} shows that \sysname{} scales smoothly with model size: compute grows from 0.33s to 78.7s with depth and 0.41s to 21.0s with width, while bandwidth follows the same trend. The relative overhead of malicious security decreases for larger models as the training workload increasingly dominates. In contrast, \textsc{Poseidon} is $200$--$3{,}600\times$ slower due to HE costs, and \textsc{Falcon} $3{,}700$--$22{,}000\times$ slower due to sequential mini-batching. 
\\\noindent \textbf{(iii) Scaling with Corruption Rate $\eta$}. We fix $\totalC=100\text{K}$ and vary $\eta \in [0.01,0.25]$. Larger $\eta$ requires larger $\thr$ to keep the probability of any unsafe committee below $10^{-12}$, increasing $\size=2(\thr+\pf-1)+1$. Larger committees reduce the number of disjoint committees, increasing the users handled per committee. Per-machine cost grows from both larger MPC committees and larger per-committee workloads. This particularly affects CNNs, where each user contributes many convolution tiles. For \mnistarch{}, per-machine time grows from $283\,\text{ms}$ at $\eta=0.01$ to $1.35\,\text{s}$ at $\eta=0.10$, and $11.7\,\text{s}$ at $\eta=0.25$. For fixed $\eta$, however, increasing $\totalC$ mainly creates more committees, so per-machine cost changes weakly  (Tab.~\ref{tbl:eval:beta-sweep}).

%% file: 11_conclusion.tex
\vspace{-2mm}\section{Conclusion}
\label{sec:conclusion}\vspace{-1mm}
We presented \sysname{}, the first scalable FL system to achieve data privacy, weight privacy, and integrity in the single-server setting. Our design introduces new cryptographic and ML-aware techniques for efficient secure training.

%% file: 12_appendix.tex

\section{Protocol Details}
In this section, we provide an extended description of protocols used in our paper, building on Sec.~\ref{sec:protocol-details}.

\subsection{\sysname{}: End-to-end Protocol}
We present the complete end-to-end protocol of \sysname{} in Fig.~\ref{alg:end-to-end}.

\subsection{Discussion}
\label{app:nonlinear}
A large body of prior work~\cite{mohassel2017secureml,mohassel2018aby3,falcon,liu2024scalable} optimizes exact non-linear layer computation in secure ML. These techniques are complementary and can be integrated into our framework. However, they typically incur multiple rounds of interaction and substantial correlated randomness, making them less suitable for our massively distributed setting where communication dominates.

Instead, we adopt polynomial approximations for non-linearities, which are more amenable to efficient packed computation. To preserve accuracy, we incorporate ML-aware optimizations that stabilize training under polynomial activations (see Sec.~\ref{sec:sys-description:training}), and empirically observe competitive performance with PolyReLU-style approximations. Our design does not preclude exact non-linear protocols, which can be substituted if desired. Our core contributions, particularly for linear layers and system-level optimizations, are orthogonal to this choice.

\subsection{Efficient Linear Layers}
\label{app:linear}

Alg.~\ref{alg:packed-winograd} gives the full multi-channel packed Winograd
convolution protocol used in Sec.~\ref{sec:linear}. The main text presents the
single-input, single-output channel case for clarity; the algorithm below adds
input-channel accumulation and produces packed output tiles for all output
channels. Here, $c\in[\Cin]$ indexes input channels, $o\in[\Cout]$ indexes
output channels, and $\gamma$ indexes tile groups.

\subsection{Efficient Secure Exponentiation}
\label{app:exp}
We give the full protocol for efficient secure exponentiation in
Alg.~\ref{alg:exp}.  It follows the construction described in Sec.~\ref{sec:exponentation}: convert \PSS{} to additive shares, repartition to $\mathcal Q$, locally compute powers, and
recombine via the multinomial expansion. Correctness and security are proved in App.~\ref{app:security}.

\subsection{Efficient Degree Reduction}
We give the explicit protocols for the degree-reduction procedures described
in Sec.~\ref{sec:degred}. Alg.~\ref{alg:degred} gives the basic packed
degree-reduction protocol, which maps a degree-$2d$ \PSS{} to a fresh
degree-$d$ \PSS{} of the same slots. Alg.~\ref{alg:degred-repack} gives the
repacking variant, which first applies a public linear map to the slots and
then directly outputs a fresh degree-$d$ \PSS{} of the repacked vector. Correctness and security are proved in App.~\ref{app:if:degred}.

\begin{mybox2}{\textsc{Alg.~\ref{alg:exp}: SecureExponentiation} ($\Pi_{\mathsf{Exp}}^{\mathsf{Pack}}$)}
\label{alg:exp}
\textbf{Input:} $\pss{\vx{x}}{d}$, exponent $k$, designated set $\mathcal Q=\{Q_1,\dots,Q_q\}$ \\
\textbf{Output:} $\pss{\vx{x^k}}{d}$
\begin{enumerate}[nosep, leftmargin=*, label=\arabic*.]

\item \textbf{Additive decomposition.}
Convert the packed sharing into slot-wise additive sharings:
\[
\begin{aligned}
\{\ash{x_j}\}_{j\in[l]}
&\leftarrow
\PssToAdd(\pss{\vx{x}}{d}), \\
\ash{x_j}
&=(u_{1,j},\dots,u_{n,j}).
\end{aligned}
\]

\item \textbf{Repartition to $\mathcal Q$.}
Each $P_i$ splits $u_{i,j}=\sum_{r=1}^q u_{i,j}^{(r)}$ and sends $u_{i,j}^{(r)}$ to $Q_r$.  
Each $Q_r$ sums received shares
\[
z^{(r)}_j=\sum_{i=1}^n u_{i,j}^{(r)},
\quad
\text{so that } x_j=\sum_{r=1}^q z^{(r)}_j.
\]

\item \textbf{Local powering.}
Each $Q_r$ holds $\vx{z}^{(r)}=(z^{(r)}_1,\dots,z^{(r)}_l)$ and locally computes
\[
(\vx{z}^{(r)})^e
=
\bigl((z^{(r)}_1)^e,\dots,(z^{(r)}_l)^e\bigr)
\quad
\forall e\in\{0,\dots,k\}.
\]

\item \textbf{Resharing.}
Each $Q_r$ shares $\pss{(\vx{z}^{(r)})^e}{d}$.

\item \textbf{Recombination.}
Output
\[
\pss{\vx{x^k}}{d}
=
\sum_{\substack{a_1,\dots,a_q\ge 0\\ a_1+\cdots+a_q=k}}
\binom{k}{a_1,\dots,a_q}
\prod_{r=1}^q
\pss{(\vx{z}^{(r)})^{a_r}}{d},
\]
products evaluated with a balanced multiplication tree.

\end{enumerate}
\end{mybox2}
\noindent\begin{minipage}{\columnwidth}
\captionof{figure}{Secure Exponentiation Protocol} 
\end{minipage} 

\subsection{Truncation}
\label{app:trunc}

We give the full packed truncation protocols used in Sec.~\ref{sec:trunc}.
We first recall the Mersenne-field truncation semantics used by Liu
et al.~\cite{liu2024scalable}. Let $p = 2^\ell - 1$ be a Mersenne prime. Given an integer $x \in [-(p-1)/2, (p-1)/2]$, we represent $x$ in $\mathbb{F}_p$ by computing $x \bmod p \in [0, p-1]$. Thus, representatives in $[0, (p-1)/2]$ correspond to nonnegative integers, while representatives in $((p-1)/2, p-1]$ correspond to negative integers.

\begin{mybox2}{\textsc{Alg. \ref{alg:packed-winograd}: PackedWinogradConv} ($\Pi_{\mathsf{WinogradConv}}^{\mathsf{Pack}}$)}

\textbf{Input:} Packed input tiles 
$\{\pss{\vx{d^{(c,\gamma)}_{i,j}}}{d}\}_{c,\gamma,i,j}$ and packed kernels 
$\{\pss{\vx{g^{(o,c)}_{p,q}}}{d}\}_{o,c,p,q}$ \\
\textbf{Output:} Packed output tiles 
$\{\pss{\vx{Y^{(o,\gamma)}_{a,b}}}{d}\}_{o,\gamma,a,b}$ \\
\textbf{Public parameters:} Winograd matrices $A,B,G$, tile size $\tile=m+r-1$ \\
\textbf{Index Ranges:} $c\in[\Cin]$, $o\in[\Cout]$, $(i,j),(u,v)\in\{0,\dots,\tile-1\}^2$,  $(p,q)\in\{0,\dots,r-1\}^2$, and $(a,b)\in\{0,\dots,m-1\}^2$.
\begin{enumerate}[nosep, leftmargin=*, label=\arabic*.]
    \item $\pss{\vx{V^{(c,\gamma)}_{u,v}}}{d}
            \leftarrow
            \sum_{i=0}^{\tile-1}\sum_{j=0}^{\tile-1}
            (B_{i,u}B_{j,v})\pss{\vx{d^{(c,\gamma)}_{i,j}}}{d}$ 
            for each $c,\gamma,(u,v)$

    \item $\pss{\vx{U^{(o,c)}_{u,v}}}{d}
            \leftarrow
            \sum_{p=0}^{r-1}\sum_{q=0}^{r-1}
            (G_{u,p}G_{v,q})\pss{\vx{g^{(o,c)}_{p,q}}}{d}$
            for each $o,c,(u,v)$
    
    \item $\pss{\vx{Z^{(o,c,\gamma)}_{u,v}}}{2d}
        \leftarrow
        \pss{\vx{U^{(o,c)}_{u,v}}}{d}\cdot
        \pss{\vx{V^{(c,\gamma)}_{u,v}}}{d}$
        for each $o,c,\gamma,(u,v)$

    \item $\pss{\vx{M^{(o,\gamma)}_{u,v}}}{2d}
        \leftarrow
        \sum_{c=1}^{\Cin}
        \pss{\vx{Z^{(o,c,\gamma)}_{u,v}}}{2d}$
        for each $o,\gamma,(u,v)$
        
    \item $\pss{\vx{\widetilde{Y}^{(o,\gamma)}_{a,b}}}{2d}
        \leftarrow
        \sum_{u=0}^{\tile-1}\sum_{v=0}^{\tile-1}
        (A_{u,a}A_{v,b})\pss{\vx{M^{(o,\gamma)}_{u,v}}}{2d}$
        for each $o,\gamma,(a,b)$

    \item $\pss{\vx{Y^{(o,\gamma)}_{a,b}}}{d}
        \leftarrow
        \Cleanup{\pss{\vx{\widetilde{Y}^{(o,\gamma)}_{a,b}}}{2d}}$
        for each $o,\gamma,(a,b)$
\end{enumerate}
\end{mybox2}
\noindent\begin{minipage}{\columnwidth}
\captionof{figure}{Winograd Convolution under \PSS\label{alg:packed-winograd}} 
\end{minipage}

\begin{mybox2}{\textsc{Alg. \ref{alg:trunc}: PackedTruncation} (\(\Pi_{\mathrm{Trunc}}^{\mathrm{Pack}}\))}
\textbf{Input:} $\pss{\vx{a}}{d}$, where $a_i \in [-2^{\ell-2},\,2^{\ell-2})$ for each $i\in[l]$ \\
\textbf{Output:} $\pss{\vx{\hat{a}}}{d}$, where $\hat{a}_i = \Trunc{f}(a_i) + 2E$ for each $i\in[l]$ \\
\textbf{Correlated Randomness:}
$\pss{\vx{r}}{d}$, $\pss{\vx{r'}}{d}$, and
$\{\pss{\vx{\rho_i}}{d}\}_{i\in[l]}$, where $\rho_i[i]=1-r_{i,\msb}$
and all other slots are zero.
\begin{enumerate}[nosep, leftmargin=*, label=\arabic*.]
    \item $\pss{\vx{b}}{d} \leftarrow \pss{\vx{a}}{d} + 2^{\ell-2}$

    \item $\pss{\vx{c}}{d} \leftarrow \pss{\vx{b}}{d} + \pss{\vx{r}}{d}$

    \item $\vx{c} = \{c_1,\dots,c_l\} \leftarrow \Reveal{\pss{\vx{c}}{d}}$, and compute $\vx{c'} \leftarrow \Trunc{f}(\vx{c})$ and $c_{i,\msb}$ for every $i\in[l]$

    \item $\pss{\vx{c'}}{d} \leftarrow \Share{\vx{c'}}{d}$

    \item $\pss{\vx{e_i}}{d} \leftarrow c_{i,\msb}\cdot \pss{\vx{\rho_i}}{d}$
for each $i\in[l]$

    \item $\pss{\vx{e}}{d} \leftarrow \sum_{i=1}^l \pss{\vx{e_i}}{d}$

    \item Output $\pss{\vx{\hat{a}}}{d} \leftarrow \pss{\vx{c'}}{d} - \pss{\vx{r'}}{d} + \pss{\vx{e}}{d} \cdot (2^{\ell-f}-1) - 2^{\ell-f-2}$
\end{enumerate}
\end{mybox2}
\noindent\begin{minipage}{\columnwidth}
\captionof{figure}{Packed Truncation Protocol\label{alg:trunc}} 
\end{minipage}

\begin{mybox2}{\textsc{Alg.~\ref{alg:degred-repack}: PackedDegRed with Repacking} ($\Pi_{\mathsf{DegRed\text{-}Repack}}^{\mathsf{Pack}}$)}
\textbf{Input:} $\{\pssf{\vx{z}^{(a)}}{2d}{\scr{h}^{(a)}}\}_{a\in[\numPSS]}$,
and public repacking matrices $\{M^{(a)}\in\mathbb F_p^{l\times l}\}_{a\in[\numPSS]}$ \\
\textbf{Output:} A fresh degree-$d$ \PSS{} $\pssf{\vx{z}'}{d}{\scr{q}}$, where
\[
z'_i=\sum_{a=1}^{\numPSS}\sum_{u=1}^{l}
M^{(a)}_{i,u}\scr{h}^{(a)}(\beta_u)
\quad \forall i\in[l].
\]

\begin{enumerate}[nosep, leftmargin=*, label=\arabic*.]

\item \textbf{Offline setup.}
For every $a\in[\numPSS]$ and $k,j\in[\size]$, precompute the public coefficient
\[
c^{(a)}_{k,j}
=
\sum_{i=1}^{l}\sum_{u=1}^{l}
\lambda_{\beta_i}^{(d)}(\alpha_k)\,
M^{(a)}_{i,u}\,
\lambda_{\alpha_j}^{(2d)}(\beta_u).
\]
Each party $P_j$ samples a random degree-$(t-1)$ polynomial $\scr{r}_j(X)$
and precomputes $\mu(\alpha_k)\scr{r}_j(\alpha_k)$ for every receiver $P_k$, where
\[
\mu(X)=\prod_{i=1}^{l}(X-\beta_i).
\]

\item \textbf{Resharing.}
For each receiver $P_k$, party $P_j$ sends
\[
\msg{j}{k}
=
\sum_{a=1}^{\numPSS}
c^{(a)}_{k,j}\scr{h}^{(a)}(\alpha_j)
+
\mu(\alpha_k)\scr{r}_j(\alpha_k).
\]

\item \textbf{Share reconstruction.}
Each party $P_k$ sums all received messages:
\[
\scr{q}(\alpha_k)
=
\sum_{j=1}^{\size}\msg{j}{k}.
\]

\item \textbf{Output.}
Each $P_k$ outputs $\scr{q}(\alpha_k)$. The resulting shares form a fresh
degree-$d$ \PSS{} $\pssf{\vx{z}'}{d}{\scr{q}}$.

\end{enumerate}
\end{mybox2}
\noindent\begin{minipage}{\columnwidth}
\captionof{figure}{Packed Degree Reduction with Repacking Protocol\label{alg:degred-repack}}

\end{minipage}

\begin{definition}[Truncation over a Mersenne field]
For $f > 0$, define $\Trunc{f} : \mathbb{F}_p \rightarrow \mathbb{F}_p$ by
\[
    \Trunc{f}(x) =
    \begin{cases}
        \left\lfloor x / 2^f \right\rfloor, & 0 \le x \le (p-1)/2, \\[1mm]
        p - \left\lfloor (p - x) / 2^f \right\rfloor, & (p-1)/2 < x \le p-1.
    \end{cases}
\]
\end{definition}

Thus, $\Trunc{f}$ removes the $f$ least significant bits of the absolute value while preserving the sign in the field representation. Since $p = 2^\ell - 1$, the sign is determined by the most significant bit, making truncation a signed right shift. Liu et al.~\cite{liu2024scalable} provide an efficient masked-opening protocol for this operation with only a one-bit gap. Their protocol targets ordinary Shamir sharing; we adapt it to packed Shamir sharing, where truncation must be applied independently to each slot.

\begin{mybox2}{\textsc{Alg. \ref{alg:trunc-degred}: PackedTruncationDegRed} (\(\Pi_{\mathrm{Trunc\text{-}DegRed}}^{\mathrm{Pack}}\))}
\textbf{Input:} $\pss{\vx{a}}{2d}$ \\
\textbf{Output:} $\pss{\vx{\hat{a}}}{d}$, where $\hat{a}_i = \Trunc{f}(a_i) + 2E$ for each $i\in[l]$ \\
\textbf{Correlated Randomness:} $\pss{\vx{r}}{2d}$, $\pss{\vx{r'}}{d}$, $\{\pss{\vx{\rho_i}}{d}\}_{i\in[l]}$, where $\rho_i[i]=1-r_{i,\msb}$
and all other slots are zero.
\begin{enumerate}[nosep, leftmargin=*, label=\arabic*.]
    \item $\pss{\vx{b}}{2d} \leftarrow \pss{\vx{a}}{2d} + 2^{\ell-2}$

    \item $\pss{\vx{c}}{2d} \leftarrow \pss{\vx{b}}{2d} + \pss{\vx{r}}{2d}$

    \item $\vx{c} = \{c_1,\dots,c_l\} \leftarrow \Reveal{\pss{\vx{c}}{2d}}$, and compute $\vx{c'} \leftarrow \Trunc{f}(\vx{c})$ and $c_{i,\msb}$ for every $i\in[l]$

    \item $\pss{\vx{c'}}{d} \leftarrow \Share{c'}{d}$

    \item $\pss{\vx{e_i}}{d} \leftarrow c_{i,\msb}\cdot \pss{\vx{\rho_i}}{d}$
for each $i\in[l]$

    \item $\pss{\vx{e}}{d} \leftarrow \sum_{i=1}^l \pss{\vx{e_i}}{d}$

    \item Output $\pss{\vx{\hat{a}}}{d} \leftarrow \pss{\vx{c'}}{d} - \pss{\vx{r'}}{d} + \pss{\vx{e}}{d} \cdot (2^{\ell-f}-1) - 2^{\ell-f-2}$
\end{enumerate}
\end{mybox2}
\noindent\begin{minipage}{\columnwidth}
\captionof{figure}{Packed Truncation and Degree Reduction Protocol\label{alg:trunc-degred}} 
\end{minipage} 

\begin{mybox2}{\textsc{Alg.~\ref{alg:degred}: PackedDegreeReduction} ($\Pi_{\mathsf{DegRed}}^{\mathsf{Pack}}$)}
\textbf{Input:} $\pssf{\vx{z}}{2d}{\scr{h}}$, where party $P_j$ holds $\scr{h}(\alpha_j)$ \\
\textbf{Output:} $\pssf{\vx{z}}{d}{\scr{q}}$
\begin{enumerate}[nosep, leftmargin=*, label=\arabic*.]

\item \textbf{Offline setup.}
For every $k,j\in[\size]$, precompute
\begin{center}
    $c_{k,j}
=
\sum_{i=1}^{l}
\lambda^{(d)}_{\beta_i}(\alpha_k)
\lambda^{(2d)}_{\alpha_j}(\beta_i).$
\end{center}

Each party $P_j$ samples a random degree-$(t-1)$ polynomial $\scr{r}_j(X)$
and precomputes $\mu(\alpha_k)\scr{r}_j(\alpha_k)$ for every receiver $P_k$, where
\[
\mu(X)=\prod_{i=1}^{l}(X-\beta_i).
\]

\item \textbf{Resharing.}
For each receiver $P_k$, party $P_j$ sends
\[
\msg{j}{k}
=
c_{k,j}\scr{h}(\alpha_j)
+
\mu(\alpha_k)\scr{r}_j(\alpha_k).
\]

\item \textbf{Share reconstruction.}
Each party $P_k$ sums all received messages:
\[
\scr{q}(\alpha_k)
=
\sum_{j=1}^{\size}\msg{j}{k}.
\]

\item \textbf{Output.}
Each $P_k$ outputs $\scr{q}(\alpha_k)$. The resulting shares form a fresh
degree-$d$ packed sharing $\pssf{\vx{z}}{d}{\scr{q}}$.

\end{enumerate}
\end{mybox2}
\noindent\begin{minipage}{\columnwidth}
\captionof{figure}{Packed Degree Reduction Protocol \label{alg:degred}}
\end{minipage}

Alg.~\ref{alg:trunc} applies this scalar truncation protocol independently
to each packed slot. Alg.~\ref{alg:trunc-degred} gives the combined
truncation and degree-reduction variant for degree-$2d$ inputs.

\noindent\textbf{Randomness Generation.} As described in Sec.~\ref{sec:sys-description:data}, \sysname{} eliminates preprocessing by having users supply the required randomness during data sharing. Once committee identities and slot assignments are known, each user $\us_i$ generates tuples of the form $(r,\Trunc{f}(r),1-r_{\msb})$. These are encoded as sparse vectors with a nonzero entry only in slot $q_i$, then secret-shared and sent to the assigned committee. The shares are aggregated locally, as with the data shares, ensuring that each user’s randomness aligns with its inputs and is used exclusively for its own data, without additional communication. This slot-aligned generation is important for privacy and is used in the security proof in App.~\ref{app:security}. Even if other slots are supplied by corrupted users, an honest user's slot remains masked by its own independently generated randomness. We formalize this in Lemma~\ref{lem:slot-trunc-privacy}.

\begin{mybox2}{\textsc{Alg.~\ref{alg:end-to-end}: \sysname{} End-to-End Training}}
\textbf{Input:} Users' input data; server-held model for
$\mathsf{Conv}\rightarrow\mathsf{PolyReLU}\rightarrow\mathsf{FC}$ \\
\textbf{Output:} Aggregate gradient revealed to the server

\begin{enumerate}[nosep, leftmargin=*, label=\arabic*.]

\item \textbf{Data sharing.}
Users secret-share their inputs and truncation randomness. Committee election outputs disjoint committees and slot assignments. Each committee locally forms packed input
sharings in the layout required by the first layer.

\item \textbf{Weight sharing.}
The server shares current model parameters to every committee. In
malicious setting, committees run weight-consistency check.

\item \textbf{Forward pass.}
Each committee independently evaluates the model:
\begin{itemize}
    \item packed Winograd convolution
    \item packed PolyReLU
    \item packed fully connected layer
\end{itemize}

\item \textbf{Backward pass.}
Each committee computes the output gradient and propagates it backward:
\begin{itemize}
    \item packed fully connected backward pass
    \item packed PolyReLU backward pass
    \item packed Winograd convolution backward pass
\end{itemize}
Each committee obtains a packed sharing of its local gradient aggregate.

\item \textbf{Validation and aggregation.}
In the malicious setting, committees validate local gradients and discard
invalid slots. Committees then aggregate local gradients into a global
aggregate sharing.

\item \textbf{Reveal and update.}
The global aggregate gradient is reconstructed to the server for model update.

\end{enumerate}
\end{mybox2}
\noindent\begin{minipage}{\columnwidth}
\captionof{figure}{End-to-end training protocol of \sysname{}.\label{alg:end-to-end}}

\end{minipage}

\section{\sysname{} Description: Training}
In this section, we provide an extended description of selected components of the training procedure of \sysname{}, specifically the backward pass and gradient validation during the aggregation step, building on Sec.~\ref{sec:sys-description}.

\subsection{Backward Pass}
The backward pass mirrors the structure of the forward pass. Given the secret-shared loss gradient at the output layer, each committee propagates gradients backward through the network using standard backpropagation equations. All computations use the same packing-aware primitives as in the forward pass: polynomial activation derivatives are evaluated over packed sharings, fully connected gradients use the diagonal and chunked matrix multiplication protocol, and convolutional gradients use the packed Winograd layout.

\noindent\textbf{PolyReLU Backward.}
Consider a PolyReLU layer
\[
    y = P(x) = \sum_{i=0}^{k} a_i x^i,
\]
and let $\pss{\Delta y}{d}$ denote the incoming gradient. During the forward pass, the committee caches either the monomials $\pss{x}{d}, \dots, \pss{x^k}{d}$ or the derivative value $\pss{P'(x)}{d}$. In the former case, the derivative is computed as
\[
    \pss{P'(x)}{d} = \sum_{i=1}^{k} i\cdot a_i \pss{x^{i-1}}{d} .
\]
The activation-gradient contribution is
\[
    \pss{\Delta x_{\mathsf{poly}}}{d}
    =
    \Cleanup{\pss{\Delta y}{d} \cdot \pss{P'(x)}{d}}
\]
The activation-range penalty also contributes to the gradient. Since
\[
    p(x) = \frac{\alpha}{\lambda^2}x^2,
    \qquad
    p'(x) = \frac{2\alpha}{\lambda^2}x,
\]
and the total loss includes $\beta \cdot \frac{1}{K} \sum_{x \in I} p(x)$, the penalty contribution for one activation input is
\[
    \pss{\Delta x_{\mathsf{pen}}}{d}
    =
    \beta \cdot \frac{2\alpha}{K \lambda^2} \pss{x}{d} .
\]
Thus, the total gradient with respect to the activation input is
\[
    \pss{\Delta x}{d}
    =
    \pss{\Delta x_{\mathsf{poly}}}{d}
    +
    \pss{\Delta x_{\mathsf{pen}}}{d}.
\]
\noindent\textbf{Fully Connected Backward.}
Consider a fully connected layer $C=WB$, where
$W\in\Fp^{\nrow\times\ncol}$, $B\in\Fp^{\ncol\times\ncol}$, and
$C\in\Fp^{\nrow\times\ncol}$. \update{For clarity, we first describe the entire fully connected backward pass under the same divisibility assumption $\nrow\mid\ncol$ as the basic forward construction. For arbitrary dimensions, all matrix multiplications below are evaluated blockwise using the arbitrary-dimension construction from Sec.~\ref{sec:fc}.} Given the incoming gradient
$\Delta C\in\Fp^{\nrow\times\ncol}$, the backward pass computes
\[
    \Delta B = W^T\Delta C,
    \qquad
    \Delta W = \Delta C B^T .
\]
\update{As in the forward-pass construction of Sec.~\ref{sec:fc}, the columns of $B$ and $C$ correspond to the $\ncol$ users processed in the batch. The input gradient $\Delta B$ is evaluated using the same packed diagonal matrix-multiplication layout as the forward pass, with $W^T$ as the weight matrix.} We now describe the weight gradient $\Delta W$.

\update{We keep $\Delta C$ and $B$ in lower-diagonal form and $\Delta W$ in upper-diagonal form.} For each $r\in[\ncol]$, define
\[
    \vx{\delta_r}:=\Ldiag{r}{\Delta C},
    \qquad
    \vx{b_r}:=\Ldiag{r}{B}.
\]
\update{For each $k\in[\nrow]$, the $k$th upper diagonal of $\Delta W$ is}
\[
    \Udiag{k}{\Delta W}
    =
    \sum_{r\in[\ncol]}
    \vx{\delta_r}
    \Had
    \rot{-k}\!\left(\vx{b_{r+k}}\right),     
\]
\update{For convenience, define}
\[
    \vx{\widetilde{b}}_{r,k}
    :=
    \rot{-k}\!\left(\vx{b_{r+k}}\right),
\]
\update{where the index $r+k$ is taken modulo $\ncol$. These rotated copies of $B$ are generated during the forward-pass degree-reduction and repacking step described in Sec.~\ref{sec:sys-description:training}. As in the forward pass, each length-$\ncol$ diagonal is split into
$q=\lceil \ncol/l\rceil$ chunks. For each pair $(r,k)$, we first form the full logical
rotated diagonal $\vx{\widetilde{b}}_{r,k}$ and then split both $\vx{\delta_r}$ and $\vx{\widetilde{b}}_{r,k}$ into chunks:}
\[
    \vx{\delta_r}
    =
    \bigl(\vx{\delta_r}^{(0)},\dots,\vx{\delta_r}^{(q-1)}\bigr),
    \qquad
    \vx{\widetilde{b}}_{r,k}
    =
    \bigl(\vx{\widetilde{b}}_{r,k}^{(0)},\dots,
    \vx{\widetilde{b}}_{r,k}^{(q-1)}\bigr).
\]
The rotation is applied before chunking; rotating each chunk independently
would not preserve the diagonal alignment. \update{For each $k$, $r$, and chunk $j$, define the packed product
\[
    \pss{\vx{T_{k,r}}^{(j)}}{2d}
    =
    \pss{\vx{\delta_r}^{(j)}}{d}
    \cdot
    \pss{\vx{\widetilde{b}}_{r,k}^{(j)}}{d}.
\]
Each slot of $\vx{T_{k,r}}^{(j)}$ contains one per-user contribution
$\Delta W_{a,b}^{(u)}:=\Delta C_{a,u}B_{b,u}$. The public indices $(k,r,j)$, together with the slot position, uniquely determine the parameter indices $(a,b)$ and user index $u$. For fixed $k$, $j$, and slot position, varying $r$ enumerates the users whose contributions must be summed for the corresponding entry of $\Udiag{k}{\Delta W}$. Thus, the packed products preserve the individual-user weight-gradient contributions before aggregation.}

\update{In the semi-honest setting, the committee aggregates these per-user contributions immediately. For each upper diagonal $k$ and chunk $j$, it computes
\[
    \pss{\vx{R_k}^{(j)}}{2d}
    =
    \sum_{r\in[\ncol]}
    \pss{\vx{T_{k,r}}^{(j)}}{2d}.
\]
After accumulating the products, the committee applies the required cleanup
step to obtain a degree-$d$ sharing. The chunks
\[
    \bigl(\vx{R_k}^{(0)},\dots,\vx{R_k}^{(q-1)}\bigr)
\]
represent the full upper diagonal $\Udiag{k}{\Delta W}$.}

\update{In the malicious setting, the committee instead postpones the summation over $r$ so that the per-user terms remain separable until gradient validation. It retains the packed products $\{\pss{\vx{T_{k,r}}^{(j)}}{2d}\}_{k,r,j}$ and converts their slots into the ordinary degree-$\thr$ Shamir sharings used by the validation protocol in Sec.~\ref{app:gradient-validation}. The committee groups these scalar terms by user and combines them with that user's other retained gradient coordinates to form the complete gradient vector $\g_\us$, consistent with Sec.~\ref{sec:mal:gradient-validation}. After evaluating $b_\us=\mathrm{Valid}(\g_\us)$, the committee performs the sums over users using only terms associated with users for whom $b_\us=1$, thereby obtaining the validated weight gradient $\Delta W$.}

\noindent\textbf{Convolutional Backward.}
Let $\pss{\Delta Y^{(o,\gamma)}_{a,b}}{d}$ denote the incoming gradient
with respect to the output tile coordinate $(a,b)$ for output channel $o$
and tile group $\gamma$. The backward pass follows the reverse of the forward Winograd
pipeline.

First, the parties propagate through the output transform. For each
$o,\gamma$ and $(u,v)\in\{0,\dots,\tile-1\}^2$,
\[
    \pss{\Delta M^{(o,\gamma)}_{u,v}}{d}
    =
    \sum_{a=0}^{m-1}\sum_{b=0}^{m-1}
    (A_{u,a}A_{v,b})\pss{\Delta Y^{(o,\gamma)}_{a,b}}{d}.
\]
This step is local and linear. Since the forward channel aggregation sets
$M^{(o,\gamma)}_{u,v}=\sum_c Z^{(o,c,\gamma)}_{u,v}$, the gradient is copied
to every input channel:
\[
    \pss{\Delta Z^{(o,c,\gamma)}_{u,v}}{d}
    =
    \pss{\Delta M^{(o,\gamma)}_{u,v}}{d}.
\]
Next, the parties propagate through the transformed-domain product
$Z^{(o,c,\gamma)}_{u,v}=U^{(o,c)}_{u,v}V^{(c,\gamma)}_{u,v}$. Thus,
\[
    \pss{\Delta V^{(c,\gamma)}_{u,v}}{2d}
    =
    \sum_{o=1}^{\Cout}
    \pss{\Delta Z^{(o,c,\gamma)}_{u,v}}{d}
    \cdot
    \pss{U^{(o,c)}_{u,v}}{d},
\]
and
\[
    \pss{\Delta U^{(o,c)}_{u,v}}{2d}
    =
    \sum_{\gamma}
    \pss{\Delta Z^{(o,c,\gamma)}_{u,v}}{d}
    \cdot
    \pss{V^{(c,\gamma)}_{u,v}}{d}.
\]
These are the only secret-secret multiplications in the convolutional
backward pass.

The activation gradient is obtained by applying the transpose of the
input transform:
\[
    \pss{\Delta d^{(c,\gamma)}_{i,j}}{2d}
    =
    \sum_{u=0}^{\tile-1}\sum_{v=0}^{\tile-1}
    (B_{i,u}B_{j,v})
    \pss{\Delta V^{(c,\gamma)}_{u,v}}{2d}.
\]
The weight-gradient branch may either keep the gradient in the transformed-kernel
form $\pss{\Delta U^{(o,c)}_{u,v}}{2d}$, or map it back to the spatial kernel
coordinates by the transpose of the kernel transform:
\[
    \pss{\Delta g^{(o,c)}_{p,q}}{2d}
    =
    \sum_{u=0}^{\tile-1}\sum_{v=0}^{\tile-1}
    (G_{u,p}G_{v,q})
    \pss{\Delta U^{(o,c)}_{u,v}}{2d}.
\]
All transform steps are local linear operations. After the product step and
subsequent linear transforms, the parties apply the required cleanups.



\subsection{Gradient Validation}
\label{app:gradient-validation}

\update{We model validity as a public arithmetic predicate $\mathrm{Valid}:\Fp^{\gradDim}\rightarrow\{0,1\}$ over complete gradient vectors, where $\gradDim$ denotes the number of scalar coordinates in each user's gradient. In our implementation and evaluation, $\mathrm{Valid}$ enforces a norm bound, although the protocol supports any predicate expressible as an arithmetic circuit over $\Fp$. Let $C_{\mathrm{Valid}}$ denote the circuit that computes this predicate. At the end of the backward pass, the committee holds a collection of degree-$d$ \PSS{} blocks whose slots jointly encode the users' gradient coordinates, as described in Sec.~\ref{sec:mal:gradient-validation}.}

\paragraph{Predicate Evaluation.}
\update{Using the public layout map, the committee unpacks the \PSS{} blocks into degree-$\thr$ ordinary Shamir sharings $\sss{g_{\us,j}}{\thr}$, where $\us$ identifies a user and $j\in[\gradDim]$ identifies a gradient coordinate. It then groups the scalar sharings by user to obtain
\[
    \sss{\g_\us}{\thr}
    =
    \bigl(\sss{g_{\us,1}}{\thr},\dots,
    \sss{g_{\us,\gradDim}}{\thr}\bigr).
\]
The committee evaluates the predicate independently for each user:
\[
    \sss{b_\us}{\thr}
    =
    C_{\mathrm{Valid}}\!\left(\sss{\g_\us}{\thr}\right),
\]
where $b_\us=1$ indicates that user $\us$'s complete gradient satisfies the validity predicate.}

\update{Linear gates in $C_{\mathrm{Valid}}$ are evaluated locally. Each multiplication gate is evaluated using the DIK-based protocol~\cite{damgaard2010perfectly}, and the committee records the tuple
\[
    \bigl(\sss{x^{(r)}}{\thr},\,\sss{y^{(r)}}{\thr},\,
    \sss{z^{(r)}}{\thr}\bigr),
\]
where $\sss{z^{(r)}}{\thr}$ is the claimed product of $\sss{x^{(r)}}{\thr}$ and $\sss{y^{(r)}}{\thr}$. Let
\[
    \mathcal T
    =
    \left\{
    \bigl(\sss{x^{(r)}}{\thr},\,\sss{y^{(r)}}{\thr},\,
    \sss{z^{(r)}}{\thr}\bigr)
    \right\}_{r=1}^{M}
\]
denote the set of multiplication tuples generated across all users during predicate evaluation.}

\paragraph{\update{Validation Consistency and Local Aggregation.}}
\update{Before revealing the validity bits, the committee runs a batched multiplication-consistency check on $\mathcal T$ using the protocol of~\cite{goyal2020malicious}. If the check fails, the protocol aborts. Otherwise, the committee reconstructs $b_\us$ for every user $\us$. A user's gradient $\g_\us$ is accepted if and only if $b_\us=1$, and only accepted gradients are included in the local aggregate.}

\paragraph{\update{Cross-Committee Aggregation.}}

\update{After validation, each committee locally sums its accepted degree-$\thr$
gradient sharings to obtain a local aggregate. Before cross-committee
aggregation, source committee $\Co_k$ raises the degree of its aggregate to
$2\thr$; let $\sss{\g_k}{2\thr}$ denote the resulting sharing. Each member of
$\Co_k$ sends its share of $\sss{\g_k}{2\thr}$ to the correspondingly indexed
member of the designated primary committee.}

\update{Crucially, the primary committee does not immediately sum the share
vectors received from different source committees. For each source committee
$\Co_k$, it first verifies that the received vector is consistent with a
degree-$2\thr$ Shamir sharing and incorporates the contribution only if this
check passes. This per-source check prevents deviations from different source
committees from accumulating at different share positions before consistency
is verified.}

\update{To see why this suffices, consider one source committee $\Co_k$.
Let $\mathbf{f}_k$ denote its correct degree-$2\thr$ share vector and let
$\mathbf{y}_k$ denote the vector presented to the primary committee's
consistency check. At most $\thr$ corrupted members of $\Co_k$ can alter the
shares they send, while at most $\thr$ corrupted members of the primary
committee can alter the values supplied at their own positions during the
check. Hence
\[
    \operatorname{wt}(\mathbf{y}_k-\mathbf{f}_k)\le 2\thr.
\]
In the malicious setting, we additionally require $\thr<\pf$. Since
$d=\thr+\pf-1$ and $\size=2d+1=2\thr+2\pf-1$, the Reed--Solomon code induced by
degree-$2\thr$ Shamir sharing has minimum distance
\[
    \delta = \size-2\thr = 2\pf-1 > 2\thr.
\]
Therefore, if $\mathbf{y}_k$ passes the degree-$2\thr$ consistency check,
it cannot be a distinct valid codeword within distance $2\thr$ of
$\mathbf{f}_k$. Thus, except with the soundness error of the
consistency check, every accepted source contribution equals the correct
source sharing; an inconsistent contribution causes the protocol to abort.}

\update{After all source-committee contributions pass their individual
consistency checks, each primary-committee member locally sums the
corresponding accepted shares to obtain a degree-$2\thr$ sharing
$\sss{\g_{\mathsf{global}}}{2\thr}$ of the global accepted gradient.
The primary committee then sends these aggregate shares to the server.
At this point, only corrupted members of the primary committee can modify
the outgoing aggregate, affecting at most $\thr$ share positions. Before
reconstructing the aggregate, the server therefore performs a final degree-$2\thr$
consistency check on the received vector and aborts if it fails. If the
check passes, the server interpolates the degree-$2\thr$ sharing and
recovers $\g_{\mathsf{global}}$.}

\section{Proof of Committee Safety Lemma}
\label{app:lem:committee-safety}
In this section, we provide a proof of Lemma~\ref{lem:committee-safety}.
\begin{proof}
Let $\totalC$ be the total number of users, of which $\eta \totalC$ are corrupted. We model the selection of $\num$ disjoint committees of size $\size$ as sampling without replacement from the $\totalC$ users.

We analyze the process in two steps. First, fix a partition of $\totalC$ users into $\num$ disjoint committees of size $\size$, and a remaining set of size $\totalC - \num\size$. Since committees are sampled uniformly at random,
every such partition is equally likely.

Next, select the $\eta\totalC$ corrupted users uniformly at random from all $\totalC$ users. The total number of possible choices is $\binom{N}{\eta N}$. For the event that all committees are safe, let $x_i$ denote the number of corrupted users assigned to committee $i$, for $i \in [\num]$. A configuration is safe if $0 \le x_i \le t$ for all $i$, and $\sum_{i=1}^{\num} x_i \le \eta\totalC$. 

Fix any tuple $(x_1,\dots,x_\num)$ satisfying these constraints. We count the number of ways to assign corrupted users. Within committee $i$, there are $\binom{\size}{x_i}$ ways to choose $x_i$ corrupted users. Since the committees are disjoint, choices are independent across committees, yielding
\[
\prod_{i=1}^{\num} \binom{\size}{x_i}
\]
ways to assign corrupted users within the committees. Then, assigning the remaining $\eta \totalC - \sum_{i=1}^{\num} x_i$ corrupted users to the remaining $\totalC - \num\size$ users can be done in
\[
\binom{\totalC - \num\size}{\eta\totalC - \sum_{i=1}^{\num} x_i}
\]
ways. Summing over all valid tuples $(x_1,\dots,x_\num)$, the total number of favorable assignments is
\[
\sum_{\substack{0 \le x_1,\dots,x_\num \le t \\ \sum_{i=1}^{\num} x_i \le \eta \totalC}}
\left(
\prod_{i=1}^{\num} \binom{\size}{x_i}
\right)
\binom{\totalC - \num\size}{\eta \totalC - \sum_{i=1}^{\num} x_i}.
\]
Dividing by the total number of assignments $\binom{\totalC}{\eta\totalC}$ yields the claimed expression.
\end{proof}
\section{Ideal Functionalities and Security Proofs}\label{app:security}

\subsection{Truncation}

\begin{mybox2}{\textsc{Ideal Functionality: PackedTruncation} ($\mathcal{F}_{\mathrm{Trunc}}^{\mathrm{Pack}}$)}
\label{func:packed-trunc}

    \textbf{Input:}  The functionality receives $\pss{\vx{a}}{d}$
    
    \textbf{Output:} Compute the following:
    
    \begin{enumerate}[nosep, leftmargin=*, label=\arabic*.]
        \item Reconstruct $\vx{a} = (a_1,\dots,a_l)$.
    
        \item Compute $\hat{a}_i=\Trunc{f}(a_i)$ for each $i\in[l]$.
    
        \item Construct a fresh degree-$d$ packed sharing $\pss{\vx{\hat{a}}}{d}$ and send the corresponding shares to the parties.
    \end{enumerate}
\end{mybox2}
\noindent\begin{minipage}{\columnwidth}
\captionof{figure}{Ideal Functionality: Packed Truncation.} 
\end{minipage}

\begin{mybox2}{\textsc{Ideal Functionality: Trunc-DegRed} ($\mathcal{F}_{\mathrm{Trunc-DegReg}}^{\mathrm{Pack}}$)}
\label{func:packed-trunc-degred}

    \textbf{Input:}  The functionality receives $\pss{\vx{a}}{2d}$
    
    \textbf{Output:} Compute the following:
    
    \begin{enumerate}[nosep, leftmargin=*, label=\arabic*.]
        \item Reconstruct $\vx{a} = (a_1,\dots,a_l)$.
    
        \item Compute $\hat{a}_i=\Trunc{f}(a_i)$ for each $i\in[l]$.
    
        \item Construct a fresh degree-$d$ packed sharing $\pss{\vx{\hat{a}}}{d}$ and send the corresponding shares to the parties.
    \end{enumerate}
\end{mybox2}
\noindent\begin{minipage}{\columnwidth}
\captionof{figure}{Ideal Functionality: Truncation and Degree Reduction.} 
\end{minipage} 
\begin{mybox2}{Ideal Functionality of \sysname{} (Fig.~\thefigure)}
\label{fig:IF-appendix}
\footnotesize
\textbf{Parameters.} Number of committees $\num$, Overall corruption rate $\eta$, Size of each committee $\size$, Per-committee allowable corrupt parties $t$, Packing factor $l$, Security parameter $\kappa$; \textbf{Parties.} $\totalC$ users $\us_i, i \in [N]$ and a server $\Ser$ 
\\
\textbf{Functionality.}
\squishlist
    \item Receive $\vx{x}_i$ from honest user $\us_i$. If $\textsf{Pr[safe]} < 1-\frac{1}{2^\kappa}$ (Lemma~\ref{lem:committee-safety}, computed using $\num,\eta,t,\totalC,$ and $\size$), send $\mathsf{ABORT}$.
    
    \item Partition users randomly into $\num$ committees $\{\Co_j\}_{j \in [\num]}$.

    \item Receive $\num$ copies of the model weights $\{W_j\}_{j \in [\num]}$ from the server; if they are not identical, output $\mathsf{ABORT}$.

    \item Receive from the server validation predicates and parameters $\{(\textsf{Valid}_j,\rho_j)\}_{j \in [\num]}$, one for each committee.

    \item Receive a set $L \subseteq [N]$ of honest users to be perturbed, along with perturbations $\{\delta_i\}_{i \in L}$ from the adversary.

    \item Receive gradients $\{\g'_i\}$ from malicious users.

    \item For each user $\us_i$, compute gradient $\g_i$ on their data $x_i$ using the weights supplied by the server. If $\us_i \in L$ (honest but perturbed), set
    $\hat{\g}_i=\g_i+\delta_i$\footnote{The adversary can only introduce additive errors~\cite{cryptoeprint:2015/154}.}
    and evaluate $\textsf{Valid}_j(\hat{\g}_i)$ for $\us_i \in \Co_j$.
    If $\us_i$ is corrupted, set $\hat{\g}_i=\g'_i$. Otherwise, set $\hat{\g}_i=\g_i$.

    \item Output to the server the aggregate
        $S=\sum_{i=1}^N
        \textsf{Valid}_{j(i)}(\hat{\g}_i,\rho_{j(i)})
        \cdot \hat{\g}_i,$
    where $j(i)$ denotes the committee containing $\us_i$, along with per-committee counts of invalid gradients:
$\left\{
        \sum_{\us_i \in \Co_j}
        \left(1-\textsf{Valid}_j(\hat{\g}_i,\rho_{j(i)})\right)
        \right\}_{j \in [\num]}.$

    \item Compute the gradients for one round, $\g_i$ for $i\in[N]$, and apply the specified perturbations:
    $\hat{\g}_i=\g_i+\delta_i$ for $i\in L$.
\squishend
\end{mybox2}

\begin{lemma}[Slot privacy of packed truncation randomness]
\label{lem:slot-trunc-privacy}
    Consider a packed truncation instance with degree $d=t+l-1$, where each of $l$ slots is supplied by one user. Suppose the committee contains at most $t$ corrupted parties. Fix an honest user in slot $\beta_q$ with secret $a_q$, while other $l-1$ slot owners may be corrupted. If the honest user samples its truncation mask $r_q$ independently, then opening the masked value $c_q = a_q + r_q$ reveals no information about $a_q$ to the adversary.
\end{lemma}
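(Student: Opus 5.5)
The plan is to prove Lemma~\ref{lem:slot-trunc-privacy} by simulation. We show that the adversary's entire view in one execution of Alg.~\ref{alg:trunc} (or Alg.~\ref{alg:trunc-degred}) is identically distributed for any two values of the honest slot's secret $a_q$. The view consists of three parts: the at most $t$ shares of every packed sharing held by corrupted committee members; all correlated randomness contributed by corrupted slot owners; and the opened vector $\vx{c}=(c_1,\dots,c_l)$.

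\textbf{Step 1: shares of the randomness.} First we handle the sharings $\pss{\vx{r}}{d}$, $\pss{\vx{r'}}{d}$ and $\{\pss{\vx{\rho_i}}{d}\}$. As described in the randomness-generation paragraph, these are formed by summing slot-aligned sparse sharings contributed by the individual slot owners. The honest owner's contribution is a fresh degree-$d$ packed sharing. Because $d=t+l-1$, any $t$ evaluations of such a sharing are uniformly distributed and independent of all $l$ slot values. So the corrupted parties' shares of the honest contribution, and hence of the sums once corrupted contributions are added, can be simulated from the corrupted contributions plus uniform field elements. The same threshold-privacy fact covers $\pss{\vx{a}}{d}$ and the resharing $\pss{\vx{c'}}{d}$: $t$ shares reveal nothing.

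\textbf{Step 2: the opened vector.} This is the main point. Slot values combine additively, so $c_i=a_i+2^{\ell-2}+r_i$, where $r_i$ is set only by slot $i$'s owner. For $i\neq q$, $c_i$ depends only on $a_i$ and on corrupted-owner randomness, so it is independent of $a_q$. For slot $q$, the mask $r_q$ is sampled by the honest user independently of everything else, and corrupted owners' sparse encodings are zero at $\beta_q$. Hence $c_q=a_q+2^{\ell-2}+r_q$ is a one-time pad of $a_q$ and is distributed independently of $a_q$.

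The subtle part is that $r_q$ is not uniform over all of $\Fp$: the statistical truncation of~\cite{liu2024scalable} samples it from a range. I would first try taking $r_q$ uniform over $\Fp$. In that case $c_q$ is exactly uniform, which gives perfect privacy, and the correction bit $1-r_{q,\msb}$ is still consistent because it is hidden inside a sharing. If the protocol instead requires the bounded-range sampling, I would fall back to the standard statistical argument: with $a_q\in[-2^{\ell-2},2^{\ell-2})$, the distributions of $c_q$ for two values of $a_q$ are close in statistical distance. This case analysis, and making sure the honest slot receives no corrupted contribution, is the obstacle I expect.

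\textbf{Step 3: the simulator.} The simulator samples $c_q$ from the mask distribution and computes $c_i$ for $i\neq q$ from the corrupted inputs. It then derives the public values $c'_i$ and $c_{i,\msb}$, and finally outputs uniform values for the corrupted shares of every sharing, consistent with the publicly known linear relations. By Steps 1 and 2, this view matches the real view, which proves the lemma.
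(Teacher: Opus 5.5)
Your proposal is correct and follows essentially the paper's route. Both arguments combine threshold privacy of the degree-$(t+l-1)$ packed sharings with one-time-pad masking of slot $q$ by the honest user's independent $r_q$; you get the first ingredient from the honest owner's own fresh sparse sharing where the paper counts the single remaining degree of freedom after $t$ shares and $l-1$ known slots, and your simulator makes the joint view more explicit than the paper does.

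One correction: drop the statistical fallback. That argument needs a $2^{\kappa}$ factor of slack between the range of $a_q$ and the range of the mask, which is impossible when $a_q$ already spans $2^{\ell-1}$ of the roughly $2^{\ell}$ field elements. The one-bit-gap Mersenne truncation of Liu et al.\ uses a uniform mask over $\Fp$, and so do \sysname{}'s user-generated tuples $(r,\Trunc{f}(r),1-r_{\msb})$, so your perfect-privacy case is the one that applies.
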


\begin{proof}
The adversary may know the secrets and randomness of all corrupted users occupying the other
$l-1$ slots. It may also see up to $t$ committee shares of the packed sharings containing the honest slot's mask $r_q$. 

By the privacy of \PSS{} of degree $d=t+l-1$, these views cannot determine the value encoded at $\beta_q$. A degree-$d$ \PSS{} has one remaining degree of freedom after fixing $t$ evaluation shares and $l-1$ slot values. Equivalently, the honest slot's mask, and therefore the secret, remain information-theoretically hidden.

It remains to consider the value opened during truncation. In slot $\beta_q$, the
opened value is $c_q = a_q + r_q$. Since $r_q$ is sampled independently by the honest user and remains hidden
from the adversary, this is exactly the masked-opening view of the underlying scalar truncation protocol. Thus, opening $c_q$ reveals no additional information about $a_q$. The corrupted slots do not affect this argument, because their masks are independent of $r_q$ and are applied only to their own slots.
\end{proof}

\begin{theorem} $\Pi_{\mathrm{Trunc}}^{\mathrm{Pack}}$ securely realizes $\mathcal{F}_{\mathsf{Trunc}}^{\mathsf{Pack}}$ in the presence of a semi-honest adversary controlling up to $t$ corrupted parties. 
\end{theorem}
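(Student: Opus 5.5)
The plan is a standard simulation-based argument in the semi-honest model. Correctness and privacy are handled separately, and then combined into indistinguishability of the joint distribution of (corrupted view, honest outputs).

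\textbf{Correctness.} We must show that the output of $\Pi_{\mathrm{Trunc}}^{\mathrm{Pack}}$ encodes $\Trunc{f}(a_i)$ in every slot $i$, up to the $2E$ rounding term the protocol admits. All steps except the opening are linear, and public constants are added uniformly to every slot. Slot $i$ therefore evolves exactly as in the scalar protocol of Liu et al.~\cite{liu2024scalable}. The shift by $2^{\ell-2}$ makes $b_i\in[0,2^{\ell-1})$, so with $r_i$ uniform, $c_i=b_i+r_i$ wraps modulo $p$ only in the way detected by $c_{i,\msb}$ and $r_{i,\msb}$. The correction $c_{i,\msb}(1-r_{i,\msb})(2^{\ell-f}-1)$ fixes the wraparound.

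The one packing-specific point is that $\sum_i c_{i,\msb}\pss{\vx{\rho_i}}{d}$ places $c_{i,\msb}(1-r_{i,\msb})$ in slot $i$ and zero elsewhere. This holds because each $\vx{\rho_i}$ is supported only on slot $i$, as argued in Sec.~\ref{sec:trunc}. Slot-wise correctness then reduces verbatim to the scalar case.

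\textbf{Simulator.} Given the corrupted parties' inputs (at most $t$ shares of $\pss{\vx{a}}{d}$) and their outputs from $\mathcal{F}_{\mathrm{Trunc}}^{\mathrm{Pack}}$, $\Sim$ proceeds as follows.
\begin{enumerate}[nosep, leftmargin=*]
\item It samples the corrupted parties' shares of $\pss{\vx{r}}{d}$, $\pss{\vx{r'}}{d}$ and $\pss{\vx{\rho_i}}{d}$ uniformly at random. By $t$-privacy of degree-$d$ \PSS{}, these are distributed exactly as in the real execution.
\item It samples $\vx{c}$ uniformly from the distribution of $\vx{b}+\vx{r}$.
\item It completes $\pss{\vx{c}}{d}$ to a full sharing consistent with the corrupted shares and with $\vx{c}$, using the remaining $t+l-1-t\ge 0$ degrees of freedom (interpolate through the $t$ corrupted points and the $l$ slot values). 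It hands the honest parties' shares to the adversary as the opening messages.
\item For the resharing of $\vx{c'}$, it computes $\vx{c'}=\Trunc{f}(\vx{c})$ publicly. It chooses the corrupted shares of $\pss{\vx{c'}}{d}$ so that the locally computed output shares (step 7) equal those given by the functionality. This is a linear constraint that can always be satisfied, since the corrupted shares of $\pss{\vx{r'}}{d}$ and $\pss{\vx{e}}{d}$ are already fixed.
\end{enumerate}

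\textbf{Indistinguishability.} The main obstacle is that the opened value $c_i=b_i+r_i$ is not perfectly uniform over $\Fp$ when $r_i$ is drawn from a restricted range. Here I would invoke Lemma~\ref{lem:slot-trunc-privacy} for each slot: since $r_i$ is independent and hidden, $c_i$ is statistically independent of $a_i$. The remaining statistical distance is bounded as in~\cite{liu2024scalable}, by the one-bit gap, i.e.\ $2^{\ell-1}$ versus the mask range. If $r_i$ is uniform on $\Fp$, the distance is exactly zero. Either way the simulated and real $\vx{c}$ are identically or statistically close distributed, independent of $\vx{a}$.

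Given $\vx{c}$, the honest shares of the opening are determined by the corrupted shares and the slot values, so they match. The fresh resharing of $\vx{c'}$ and the re-randomized output are then a uniformly random degree-$d$ sharing of $\vx{\hat a}$ conditioned on $t$ shares. This matches the functionality's fresh sharing, which gives the joint distribution of view and outputs.
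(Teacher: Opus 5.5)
Your correctness argument matches the paper's: a slot-wise reduction to the scalar protocol of~\cite{liu2024scalable}, with the sparse sharings $\pss{\vx{\rho_i}}{d}$ confining $c_{i,\msb}(1-r_{i,\msb})$ to slot $i$. For privacy you build an explicit simulator. The paper instead argues only that each honest slot is a masked opening under an independent mask (Lemma~\ref{lem:slot-trunc-privacy}), so the packed protocol leaks no more than the scalar one. Your view simulation is sound once $r_i$ is taken uniform on $\Fp$ (see below).

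The last step, however, fails. You claim the joint distribution of view and outputs matches $\mathcal{F}_{\mathsf{Trunc}}^{\mathsf{Pack}}$. That functionality outputs exactly $\Trunc{f}(a_i)$, whereas the protocol outputs $\Trunc{f}(a_i)$ plus the rounding term you yourself mention. That term is a deterministic function of $(a_i,c_i)$. For example, if $a_i\ge 0$ and $r_{i,\msb}=c_{i,\msb}=0$, the output slot is $\Trunc{f}(a_i)+[\,c_i \bmod 2^f < b_i \bmod 2^f\,]$. So the outputs already differ from the functionality's. Relaxing $\mathcal{F}_{\mathsf{Trunc}}^{\mathsf{Pack}}$ to add an error with the right marginal would not rescue your simulator either. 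A distinguisher holding $\vx{a}$, the view (which contains $\vx{c}$) and the outputs can test this relation. It always holds in the real execution and fails with noticeable probability against your independently sampled $\vx{c}$.

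A true statement needs either a truncation functionality that lets the simulator reproduce the error's dependence on the opened value, or a weaker claim: view simulation plus correctness up to the error. The latter is effectively what the paper's separate correctness/privacy argument delivers. Its proof also equates the scalar output with the output of $\mathcal{F}_{\mathsf{Trunc}}^{\mathsf{Pack}}$, so the defect lies in the statement as much as in your write-up. Your explicit joint-distribution claim is exactly where it becomes false.

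Several smaller points also need repair.
\begin{itemize}
\item The wraparound term $c_{i,\msb}(1-r_{i,\msb})$ presupposes $r_i$ uniform on $\Fp$. Then $\vx{c}$ is exactly uniform on $\Fp^l$, and step~2 of your simulator should simply sample it that way. Sampling ``from the distribution of $\vx{b}+\vx{r}$'' is not available to a simulator that does not know $\vx{b}$.
\item The ``one-bit gap'' of~\cite{liu2024scalable} is slack in the input range $a_i\in[-2^{\ell-2},2^{\ell-2})$, not a statistical-distance bound. With a restricted-range mask and a single bit of slack, $c_i$ would be far from independent of $a_i$. So your ``either way'' fallback is wrong, and the ``main obstacle'' you name does not arise.
\item In step~3 no degrees of freedom remain: the $t$ corrupted points and $l$ slot values are $t+l=d+1$ constraints on a degree-$d$ polynomial. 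Your interpolation recipe is right, but the count $l-1$ is not.
\item Step~1 places you in the dealer model for the correlated randomness. The paper invokes Lemma~\ref{lem:slot-trunc-privacy} because in \sysname{} each slot's mask is supplied by that slot's user, and up to $l-1$ of those users may be corrupted. There the adversary knows $r_j$ on its own slots, so the opened $c_j$ reveals $b_j$. Your simulator never learns $\vx{a}$, so it could not output a consistent $c_j$; your argument therefore covers only the dealer setting.
\end{itemize}
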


\begin{proof}
    \textbf{Correctness.} Correctness follows slot-wise from the scalar
    truncation protocol of Liu et al.~\cite{liu2024scalable}. For each slot
    $i\in[l]$, the protocol sets $b_i=a_i+2^{\ell-2}$, opens
    $c_i=b_i+r_i$, and applies the scalar correction using
    $r_i'=\Trunc{f}(r_i)$ and $c_{i,\msb}(1-r_{i,\msb})$. The only
    packed-specific issue is that this correction is slot-dependent. This is
    handled by the sparse correction sharing for slot $i$, so
    $c_{i,\msb}(1-r_{i,\msb})$ is applied only to that slot. Thus each output
    slot equals the scalar truncation output for $a_i$, and the parties obtain
    a fresh degree-$d$ packed sharing of the vector specified by
    $\mathcal{F}_{\mathsf{Trunc}}^{\mathsf{Pack}}$.
    
    \textbf{Privacy.} The adversary sees only its shares, the opened value
    $\vx{c}$, and its output shares. For each honest user's slot, the opened
    value is the same masked-opening view as in the scalar truncation
    protocol, using that slot's independently generated mask. By
    Lemma~\ref{lem:slot-trunc-privacy}, this remains true even if the other
    $l-1$ slots are corrupted. Hence the packed protocol leaks no more than
    applying the scalar truncation protocol independently to each slot.
\end{proof}

\begin{theorem} $\Pi_{\mathrm{Trunc-DegRed}}^{\mathrm{Pack}}$ securely realizes $\mathcal{F}_{\mathsf{Trunc-DegRed}}^{\mathsf{Pack}}$ in the presence of a semi-honest adversary controlling up to $t$ corrupted parties. 
\end{theorem}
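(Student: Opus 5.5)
The plan follows the proof of the preceding theorem for $\Pi_{\mathrm{Trunc}}^{\mathrm{Pack}}$. The only new ingredient is that the input and the mask $\pss{\vx{r}}{2d}$ now have degree $2d$. We therefore check two things: that opening a degree-$2d$ sharing is still well defined, and that the resulting view is still simulatable. We split the argument into correctness and privacy, and build a simulator for at most $t$ corrupted parties.

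\textbf{Correctness.} Steps 1--2 are local linear operations on degree-$2d$ sharings. Hence $\pss{\vx{c}}{2d}$ is a valid degree-$2d$ packed sharing of $\vx{c}=\vx{a}+2^{\ell-2}+\vx{r}$ slot-wise. Because $\size = 2d+1$, all $\size$ shares suffice to interpolate the degree-$2d$ polynomial, so $\PiReveal$ returns every $c_i$ exactly. From this point on the computation is identical to Alg.~\ref{alg:trunc}:
\begin{itemize}
\item $\pss{\vx{c'}}{d}$ is a fresh degree-$d$ sharing;
\item $\pss{\vx{r'}}{d}$ and the sparse $\pss{\vx{\rho_i}}{d}$ are degree $d$;
\item scaling by the public bits $c_{i,\msb}$ and summing keeps degree $d$.
\end{itemize}
The output is therefore a degree-$d$ sharing, and each slot equals the scalar Liu et al.\ output $\Trunc{f}(a_i)$, up to the stated error. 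Degree reduction thus comes for free from the resharing of $\vx{c'}$. Freshness of the output follows because $\pss{\vx{c'}}{d}$ is freshly shared by the opener, and adding the fixed sharings $-\pss{\vx{r'}}{d}$ and $\pss{\vx{e}}{d}$ preserves a uniform distribution over the free coefficients.

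\textbf{Privacy.} The corrupted parties' view consists of three parts:
\begin{itemize}
\item their shares of the inputs and of the correlated randomness;
\item the opened vector $\vx{c}$, including all $2d+1$ shares of $\pss{\vx{c}}{2d}$ if broadcast;
\item their shares of $\pss{\vx{c'}}{d}$.
\end{itemize}
The simulator proceeds in three steps. It first samples $\vx{c}$ uniformly. It then samples a random degree-$2d$ polynomial encoding $\vx{c}$ that agrees with the corrupted parties' input shares plus their mask shares. Finally, it shares $\Trunc{f}(\vx{c})$ at degree $d$ and adjusts so that the corrupted output shares match those the functionality gives.

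The main obstacle is showing that the full set of opened shares of $\pss{\vx{c}}{2d}$ leaks nothing beyond $\vx{c}$. The input $\pss{\vx{a}}{2d}$ is a product polynomial and is not uniformly random. This is resolved by requiring $\pss{\vx{r}}{2d}$ to be a fresh random degree-$2d$ packed sharing. Then $\scr{h}_a+\scr{h}_r$ is uniform among degree-$2d$ polynomials with the given slot values, so the opened shares are determined by $\vx{c}$ alone.

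Next we argue that $\vx{c}$ is uniform. Per honest slot, $c_q$ is uniformly masked by $r_q$. We invoke Lemma~\ref{lem:slot-trunc-privacy}, adapted to degree $2d$: $t$ corrupted shares of a degree-$2d$ sharing leave $2d+1-t-(l-1) > 0$ degrees of freedom at the honest slot. This covers the case where other slots' masks come from corrupted users. The statistical gap in the Liu et al.\ masking, that is, the one-bit slack, is inherited unchanged.
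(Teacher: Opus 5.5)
Your structure matches the paper's proof, which reduces to the argument for $\Pi_{\mathrm{Trunc}}^{\mathrm{Pack}}$: slot-wise Liu et al.\ correctness plus Lemma~\ref{lem:slot-trunc-privacy}. Your remark that the opened $\pss{\vx{c}}{2d}$ is a product polynomial, which the mask must re-randomize before all $2d+1$ shares become public, is a real point the paper leaves implicit.

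However, your privacy argument fails in exactly the case you say it covers. Suppose the masks of slots $j\neq q$ are supplied by corrupted users. Then the adversary knows $r_j$, so the opened $c_j=a_j+2^{\ell-2}+r_j$ reveals $a_j$. Hence $\vx{c}$ is not uniform jointly with the inputs, and a simulator that samples $\vx{c}$ uniformly is distinguishable. Your re-randomization step also has no support in this case. $\pss{\vx{r}}{2d}$ is the sum of the users' sparse sharings, and when some are adversarial it is not a fresh uniform degree-$2d$ sharing from the adversary's viewpoint. The paper's argument never claims $\vx{c}$ is uniform. It only asserts that each honest slot is masked, i.e., that the packed protocol leaks no more than slot-wise scalar truncation. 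Your adapted count ($2d+1-t-(l-1)$ free coefficients after $t$ shares) is also not the relevant condition, because every share of $\pss{\vx{c}}{2d}$ is opened. What matters is which directions of the opened polynomial the honest mask randomizes.

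There are two ways to close the gap.
\begin{itemize}
\item Assume no slot mask is known to the adversary. Then the honest users' sparse sharings sum to a uniform degree-$2d$ mask, and your argument goes through.
\item Argue slot by slot. In the worst case of a single honest slot $q$, suppose that user's sparse mask is uniform over degree-$2d$ polynomials vanishing at the other $l-1$ slot points. Then the opened polynomial is uniform among degree-$2d$ polynomials with values $c_j$ at $\beta_j$ ($j\neq q$) and with the corrupted parties' known evaluations. So $c_q$ is uniform and nothing about $a_q$ leaks. But now your simulator needs $\{a_j\}_{j\neq q}$, which $\mathcal{F}_{\mathsf{Trunc-DegRed}}^{\mathsf{Pack}}$ does not provide. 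This leakage must either be permitted by the functionality or shown to be already known to the adversary in the composed protocol.
\end{itemize}

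Two further points are still needed. First, you need the degree-$d$ form of Lemma~\ref{lem:slot-trunc-privacy} for $\pss{\vx{r'}}{d}$ and $\pss{\vx{\rho_q}}{d}$, whose slots are correlated with $r_q$. Second, output freshness should be attributed to the honest randomness in $\pss{\vx{r'}}{d}$. It cannot come from the opener's resharing of the public $\vx{c'}$, since the opener may be corrupted.
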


\begin{proof}
    Both correctness and privacy follow by the same argument as for
$\Pi_{\mathrm{Trunc}}^{\mathrm{Pack}}$, except that the masked value and mask are
degree-$2d$ before opening, and the public truncated value is re-shared at
degree $d$.
\end{proof}

\subsection{Secure Exponentiation}

\begin{theorem} $\Pi_{\mathrm{Exp}}^{\mathrm{Pack}}$ securely realizes $\mathcal{F}_{\mathsf{Exp}}^{\mathsf{Pack}}$ in the presence of a semi-honest adversary controlling up to $t$ corrupted parties. 
\end{theorem}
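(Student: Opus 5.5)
The plan is to follow the same two-part template as the truncation proofs, correctness and then simulation-based privacy. I take $\mathcal{F}_{\mathsf{Exp}}^{\mathsf{Pack}}$ to be the natural functionality: reconstruct $\vx{x}$, compute $\vx{x^k}$ slot-wise, and hand out a fresh degree-$d$ \PSS{}. As in the discussion of Sec.~\ref{sec:exponentation}, I assume $\mathcal Q$ contains at least one honest party. I also assume that the multiplications in the recombination tree are implemented by local share-wise products followed by $\DegRed$. I will use the security of $\DegRed$ (App.~\ref{app:if:degred}) as a hybrid, so that each degree reduction can be replaced by an ideal call that outputs a fresh degree-$d$ sharing of the product.

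\textbf{Correctness.} $\PssToAdd$ gives $x_j=\sum_i u_{i,j}$ for every slot. The re-splitting step then gives $x_j=\sum_{r} z^{(r)}_j$, since
\[
\sum_r z^{(r)}_j=\sum_r\sum_i u^{(r)}_{i,j}=\sum_i u_{i,j}.
\]
All operations are slot-wise: local powering, the \PSS{} products, and the public multinomial scaling. Hence each slot of the recombined sharing equals
\[
\sum_{a_1+\cdots+a_q=k}\binom{k}{a_1,\dots,a_q}\prod_r (z^{(r)}_j)^{a_r},
\]
which is $x_j^k$ by the multinomial theorem. Each product of $q$ sharings is computed in a balanced tree. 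Every level does a local multiplication to degree $2d$ and then a degree reduction, so the output has degree $d$ after $\lceil\log q\rceil$ rounds. The final sum of fresh degree-$d$ sharings with public coefficients is again a degree-$d$ sharing. To match the functionality's "fresh" distribution, I would note that the last $\DegRed$ outputs are already re-randomized. If that is not enough, a random degree-$d$ zero-encoding $\mu(X)\scr{r}(X)$ can be added at the end, as in Sec.~\ref{sec:degred}.

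\textbf{Privacy (simulator).} The simulator works step by step.
\begin{enumerate}
\item In step 1 the conversion is local, so there is nothing to simulate.
\item In step 2 each honest $P_i$ sends $q$ pieces $u^{(r)}_{i,j}$ of a fresh $q$-out-of-$q$ additive sharing. The simulator sends uniform field elements to corrupted members of $\mathcal Q$. This is identically distributed: at least one piece of each honest sender goes to an honest $Q_r$, so any proper subset of the pieces is uniform.
\item For corrupted $Q_r$, the values $z^{(r)}_j$ are computed in the real execution from the simulated incoming messages and the corrupted parties' own shares. The key claim is that the joint view $\{z^{(r)}\}_{Q_r\ \text{corrupt}}$ is uniform and independent of $\vx{x}$. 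The reason is that the honest $Q_{r^*}$'s sum absorbs an independent uniform piece from every honest sender, and at least one sender is honest since at most $t<n$ parties are corrupt.
\item In step 4 honest $Q_r$ reshare degree-$d$ \PSS{}. By $t$-privacy of \PSS{}, the simulator replaces them with shares of zero.
\item Each $\DegRed$ call in step 5 is replaced by the simulator of App.~\ref{app:if:degred}.
\end{enumerate}
A standard hybrid argument then makes the view indistinguishable, in fact perfectly so.

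\textbf{Main obstacle.} The delicate step is justifying that the corrupted parties' additive pieces reveal nothing. Two issues must be handled. First, a corrupted party may sit both in the committee and in $\mathcal Q$. Second, the $\PssToAdd$ shares $u_{i,j}$ of the up to $t$ corrupted committee members are themselves correlated with $\vx{x}$. My approach is to bound the full adversarial view by the $t$ degree-$d$ \PSS{} shares, which leak nothing, together with the messages received from honest senders. Those messages are independent uniform pieces whenever some $Q_{r^*}$ is honest. If this argument turns out to need more, I would fall back on the explicit assumption stated in Sec.~\ref{sec:exponentation} that the additive sharing over $\mathcal Q$ is private when one designated party is honest.
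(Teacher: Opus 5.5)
Your proposal is correct and follows essentially the same route as the paper's proof. Correctness comes from the additive repartition $x_j=\sum_r z^{(r)}_j$ plus the multinomial theorem, and privacy comes from \PSS{} $t$-privacy for the local conversion and the resharings, the assumption that $\mathcal Q$ contains an honest party for the repartitioned additive pieces, and the security of degree reduction for the product tree. Your version is more explicit (a step-by-step simulator with a $\DegRed$ hybrid), and the worry in your last paragraph is unnecessary: the corrupted parties' $u_{i,j}$ are local functions of their $t$ shares, so they are already independent of $\vx{x}$.
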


\begin{proof}
    \textbf{Correctness.} After $\PssToAdd$, each slot satisfies $x_j=\sum_{i=1}^{n}u_{i,j}$. The repartitioning step preserves this value, since each $u_{i,j}$ is split as $u_{i,j}=\sum_{r=1}^{q}u_{i,j}^{(r)}$. Hence
    \begin{center}
     $x_j
    =
    \sum_{i=1}^{n}u_{i,j}
    =
    \sum_{r=1}^{q}z_j^{(r)}.$
    \end{center}
    
    By the multinomial theorem,
    \begin{center}
     $
    x_j^k
    =
    \left(\sum_{r=1}^{q}z_j^{(r)}\right)^k
    =
    \sum_{\substack{a_1,\dots,a_q\ge 0\\ a_1+\cdots+a_q=k}}
    \binom{k}{a_1,\dots,a_q}
    \prod_{r=1}^{q}(z_j^{(r)})^{a_r}.$
    \end{center}
    The protocol computes exactly these terms and sums them, so each output slot is $x_j^k$, and the output is a degree-$d$ packed sharing of $\vx{x^k}$.
        
    \textbf{Privacy.} The conversion from \PSS{} to additive sharing is a local linear transformation of the original shares, and by the privacy of degree-$d$ packed Shamir sharing, any set of at most $t$ corrupted parties learns no information about $\vx{x}$ from these values. During repartitioning, each additive contribution is randomly split into additive parts. Since at least one party is assumed to be honest, the adversary misses at least one additive share in each repartitioned sum, so its observed shares are uniformly random subject to the final value and reveal no information. The reshared powers are fresh degree-$d$ packed sharings, and the remaining steps consist only of packed multiplications, degree reductions, public scalar multiplications, and local additions. These operations reveal no intermediate value in the clear. Therefore, the adversary learns nothing beyond its prescribed output shares.
\end{proof}

\begin{mybox2}{\textsc{Ideal Functionality: SecureExponentiation} ($\mathcal{F}_{\mathsf{Exp}}^{\mathsf{Pack}}$)}
\label{func:exp}
    \textbf{Input:} The functionality receives $\pss{\vx{a}}{d}$ and exponent $k$
    
    \textbf{Output:} Compute the following:
    \begin{enumerate}[nosep, leftmargin=*, label=\arabic*.]
        \item Reconstruct $\vx{a} = (a_1,\dots,a_l)$.
        \item Compute ${a}^k_i$ for each $i\in[l]$.
        \item Construct a fresh degree-$d$ packed sharing $\pss{\vx{a^k}}{d}$ and send the corresponding shares to the parties.
    \end{enumerate}
\end{mybox2}
\noindent\begin{minipage}{\columnwidth}
\captionof{figure}{Ideal Functionality: Secure Exponentiation.} 
\end{minipage} 

\subsection{Degree Reduction and Repacking}
\label{app:if:degred}

\begin{mybox2}{\textsc{Ideal Functionality: DegreeReduction} ($\mathcal{F}_{\mathsf{DegRed}}^{\mathsf{Pack}}$)}
\label{func:degred}
    \textbf{Input:} The functionality receives $\pss{\vx{a}}{2d}$
    
    \textbf{Output:} Compute the following: 
    \begin{enumerate}[nosep, leftmargin=*, label=\arabic*.]
        \item Reconstruct $\vx{a} = (a_1,\dots,a_l)$.
        \item Construct a fresh degree-$d$ packed sharing $\pss{\vx{a}}{d}$ and send the corresponding shares to the parties.
    \end{enumerate}
\end{mybox2}
\noindent\begin{minipage}{\columnwidth}
\captionof{figure}{Ideal Functionality: Degree Reduction.} 
\end{minipage} 

\begin{mybox2}{\textsc{Ideal Functionality: DegRed-Repack} ($\mathcal{F}_{\mathsf{DegRed-Repack}}^{\mathsf{Pack}}$)}
\label{func:degred-repack}
    \textbf{Input:} The functionality receives $m$ packed sharings $\{\pss{\vx{z}^{(a)}}{2d}\}_{a=1}^{m}$ and a public routing function $\rho$, where $\rho(i)=(j,k)$ specifies that output slot $i$ should contain the value from slot $k$ of input sharing $j$.
    
    \textbf{Output:} Compute the following:
    \begin{enumerate}[nosep, leftmargin=*, label=\arabic*.]
        \item Reconstruct each packed vector $\vx{z}^{(a)}=(z^{(a)}_1,\dots,z^{(a)}_l)$ for $a\in[m]$
        \item For each output slot $i\in[l]$, compute $\rho(i)=(j,k)$ and set $z'_i \leftarrow z^{(j)}_{k}$.
        \item Let $\vx{z}'=(z'_1,\dots,z'_l)$. Construct a fresh degree-$d$ packed sharing $\pss{\vx{z}'}{d}$ and send the corresponding shares to the parties.
    \end{enumerate}
\end{mybox2}
\noindent\begin{minipage}{\columnwidth}
\captionof{figure}{Ideal Functionality: Degree Repacking.} 
\end{minipage}

\begin{theorem} $\Pi_{\mathrm{DegRed}}^{\mathrm{Pack}}$ securely realizes $\mathcal{F}_{\mathsf{DegRed}}^{\mathsf{Pack}}$ in the presence of a semi-honest adversary controlling up to $t$ corrupted parties. 
\end{theorem}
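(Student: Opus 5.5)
The proof has two parts: correctness of $\DegRed$ (Alg.~\ref{alg:degred}), then a simulator for the view of at most $t$ corrupted parties.

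\textbf{Correctness.} Since $\scr{h}$ has degree $2d$ and $\size=2d+1$, Lagrange interpolation over all $\size$ shares recovers $\scr{h}(\beta_i)=\sum_j \lambda^{(2d)}_{\alpha_j}(\beta_i)\scr{h}(\alpha_j)$. Substituting this into the definition of $c_{k,j}$ gives $\sum_j c_{k,j}\scr{h}(\alpha_j)=\scr{s}(\alpha_k)$. Here $\scr{s}$ is the degree-$d$ polynomial interpolating the slot values $z_1,\dots,z_l$, with the remaining $t$ interpolation points fixed to some canonical values (e.g.\ zero). The zero-encoding terms sum to $\mu(\alpha_k)\scr{r}(\alpha_k)$, where $\scr{r}=\sum_j\scr{r}_j$. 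Each party's output is therefore $\scr{q}(\alpha_k)$ with $\scr{q}=\scr{s}+\mu\scr{r}$. This polynomial has degree at most $l+(t-1)=d$ and satisfies $\scr{q}(\beta_i)=z_i$ because $\mu(\beta_i)=0$.

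\textbf{Distribution.} I would then show that $\scr{q}$ is distributed as a fresh degree-$d$ \PSS{} of $\vx{z}$, conditioned on the corrupted parties' randomness. Let $H$ be the set of honest parties; it is nonempty since $t<\size$. The polynomial $\scr{r}_H=\sum_{j\in H}\scr{r}_j$ is uniform over polynomials of degree at most $t-1$, independent of everything the adversary knows. The map $\scr{r}\mapsto\mu\scr{r}$ is a bijection from polynomials of degree at most $t-1$ onto the space of degree-$\le d$ polynomials vanishing at all $\beta_i$. Hence $\scr{q}$ is uniform over the affine space of degree-$\le d$ polynomials with $\scr{q}(\beta_i)=z_i$, which is exactly the output distribution of $\mathcal{F}_{\mathsf{DegRed}}^{\mathsf{Pack}}$.

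\textbf{Simulation.} Let $T$ be the corrupted set, $|T|\le t$. The simulator receives the corrupted parties' input shares $\scr{h}(\alpha_j)$, their random polynomials $\scr{r}_j$, and their outputs $\scr{q}(\alpha_k)$ for $k\in T$. The remaining messages it must produce are the incoming messages $\msg{j}{k}$ with $j\in H$ and $k\in T$. Outgoing messages from corrupted parties are computed honestly from their own inputs. The main obstacle is that these honest incoming messages are not independent. They must be consistent with the outputs, i.e.\ $\sum_{j\in H}\msg{j}{k}=\scr{q}(\alpha_k)-\sum_{j\in T}\msg{j}{k}$ for each $k\in T$. Individually, each message also contains $c_{k,j}\scr{h}(\alpha_j)$, which involves honest shares of a non-random degree-$2d$ polynomial.

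I would handle this by arguing per receiver $k\in T$. Fix one honest party $j^*\in H$. For each $j\in H\setminus\{j^*\}$, the value $\mu(\alpha_k)\scr{r}_j(\alpha_k)$ has $\mu(\alpha_k)\neq 0$, since evaluation and slot points are distinct. Moreover, $\{\scr{r}_j(\alpha_k)\}_{k\in T}$ is jointly uniform for $|T|\le t$ by the degree-$(t-1)$ Shamir property. So those messages are uniform and independent of everything else. The simulator samples them uniformly and sets $\msg{j^*}{k}$ to the value forced by the output constraint. The remaining check is that, in the real execution, the messages from $j\in H\setminus\{j^*\}$ are jointly uniform given the adversary's inputs and outputs. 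This holds because the corrupted outputs depend on $\scr{r}_H$ only through the sum. Note that $t$ evaluations of $\scr{r}_H$ do not constrain the individual $\scr{r}_j$ beyond their sum, so the argument goes through. Finally, the adversary's output shares are at most $t$ points of a fresh degree-$d$ \PSS{}, so their marginal is uniform and independent of $\vx{z}$. The simulator can therefore obtain them from the ideal functionality's output.
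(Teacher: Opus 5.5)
Your proposal is correct and follows the same outline as the paper's proof. Correctness comes from $\mu(\beta_i)=0$ and the degree bound on $\scr{s}+\mu\scr{r}$, and privacy comes from the zero-encoding masks together with the fact that $\scr{q}$ is a fresh degree-$d$ sharing. You are more careful than the paper, which only asserts that each message is masked and that the output is fresh: your bijection $\scr{r}\mapsto\mu\scr{r}$, and your simulator that solves for one honest sender $j^*$, make explicit the joint consistency between the honest incoming messages and the corrupted output shares, which the paper leaves implicit.
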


\begin{proof}
    \textbf{Correctness.} Let the input be a degree-$2d$ packed sharing $\pssf{\vx{z}}{2d}{\scr{h}}$. The protocol constructs a degree-$d$ polynomial $s$ satisfying $s(\beta_i)=h(\beta_i)=z_i$ for all $i\in[l]$. Each party receives a share of $q(X)=s(X)+\mu(X)r(X)$. Since $\mu(\beta_i)=0$ for every $i\in[l]$, the random zero encoding does not change the slot values $q(\beta_i)=s(\beta_i)=z_i$. Therefore, the output is a degree-$d$ packed sharing of the same secret vector $\vx{z}$.

    \textbf{Privacy.} The protocol reveals no secret value. Each message $m_{j\to k}$ is a linear function of the sender's share masked by an independently sampled zero encoding, assumed to be uniformly random. The output polynomial is randomized by $\mu(X)r(X)$, whose values vanish on the slots but randomize the remaining $t$ degrees of freedom. Hence the resulting sharing is distributed as a fresh degree-$d$ packed sharing of the target vector. By the privacy of packed Shamir sharing, any adversary corrupting at most $t$ parties learns nothing beyond its output shares.
\end{proof}

\begin{theorem} $\Pi_{\mathrm{DegRed-Repack}}^{\mathrm{Pack}}$ securely realizes $\mathcal{F}_{\mathsf{DegRed-Repack}}^{\mathsf{Pack}}$ in the presence of a semi-honest adversary controlling up to $t$ corrupted parties. 
\end{theorem}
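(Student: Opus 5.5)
The plan mirrors the proof for $\Pi_{\mathrm{DegRed}}^{\mathrm{Pack}}$, with an added linear-algebra step to absorb the public repacking map. The proof splits into correctness and privacy.

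\textbf{Correctness.} First, observe that the routing function $\rho$ of $\mathcal{F}_{\mathsf{DegRed-Repack}}^{\mathsf{Pack}}$ is a special case of the public matrices in Alg.~\ref{alg:degred-repack}. For each output slot $i$ with $\rho(i)=(j,k)$, set $M^{(j)}_{i,k}=1$ and all other entries to $0$. Next, expand $\sum_j c^{(a)}_{k,j}\scr{h}^{(a)}(\alpha_j)$ using the definition of $c^{(a)}_{k,j}$. Exchanging the order of summation gives
\[
\sum_{i=1}^{l}\lambda^{(d)}_{\beta_i}(\alpha_k)\sum_{a,u}M^{(a)}_{i,u}\sum_{j}\lambda^{(2d)}_{\alpha_j}(\beta_u)\scr{h}^{(a)}(\alpha_j).
\]
The inner sum equals $\scr{h}^{(a)}(\beta_u)=z^{(a)}_u$, because $\scr{h}^{(a)}$ has degree at most $2d$ and $\size\ge 2d+1$. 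The middle sum is therefore $z'_i$, and the whole expression is $\scr{s}'(\alpha_k)$. Here $\scr{s}'$ is the degree-$d$ polynomial with $\scr{s}'(\beta_i)=z'_i$, taken with the canonical (zero) values at the auxiliary points $\beta_{l+1},\dots,\beta_d$. Summing the messages gives $\scr{q}(\alpha_k)=\scr{s}'(\alpha_k)+\mu(\alpha_k)\scr{r}(\alpha_k)$. Since $\mu$ vanishes at every $\beta_i$, we get $\scr{q}(\beta_i)=z'_i$, and $\deg\scr{q}\le d$ because $\deg\mu\scr{r}\le l+t-1=d$.

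\textbf{Fresh distribution.} Next, show that $\scr{q}$ is distributed as a fresh degree-$d$ \PSS{} of $\vx{z}'$. The map $\scr{r}\mapsto \mu\scr{r}$ is a bijection from polynomials of degree at most $t-1$ onto the space of degree-$\le d$ polynomials vanishing on $\beta_1,\dots,\beta_l$. Suppose at least one $\scr{r}_j$ is sampled by an honest party; this holds since $t<\size$. Then $\scr{r}$ is uniform, so $\scr{q}$ is uniform over all degree-$\le d$ polynomials with slot values $\vx{z}'$.

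\textbf{Simulator.} The simulator receives the corrupted set $T$ with $|T|\le t$, the corrupted parties' inputs, and their output shares from the functionality. It must simulate the messages $\msg{j}{k}$ from honest senders $j$ to corrupted receivers $k\in T$. It samples these uniformly, subject to the constraint that for each $k\in T$ the sum of all received messages equals the given output share $\scr{q}(\alpha_k)$. The messages from corrupted senders are computed honestly from the corrupted parties' own inputs and randomness.

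\textbf{Main obstacle: indistinguishability.} I expect indistinguishability to be the main obstacle, and the paper's terse ``each message is masked'' needs sharpening. An individual $\msg{j}{k}$ is not uniform in general: $\mu(\alpha_k)\scr{r}_j(\alpha_k)$ is a degree-$(t-1)$ polynomial evaluated at $|T|\le t$ points, and these evaluations need not be independent. The plan is therefore to argue jointly. Fix one honest party $h$. For any other honest $j$, the values $\{\scr{r}_j(\alpha_k)\}_{k\in T}$ are uniform when $|T|\le t$: the $\le t$ evaluations of a random degree-$(t-1)$ polynomial are uniform, and this still holds when $|T|=t$. The remaining technical care is to check $\mu(\alpha_k)\neq 0$, i.e., that the evaluation points are disjoint from the slot points. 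Given that, the messages from honest $j\ne h$ to $T$ are uniform and independent. The messages from $h$ are then determined by the output shares. So the real view, conditioned on the corrupted output shares, matches the simulated view. The joint distribution of the corrupted output shares themselves is correct by the fresh-sharing argument together with \PSS{} privacy for $t$ shares. If $T$ contains only one honest sender's complement, the argument degenerates gracefully, since $h$ alone absorbs the constraint.
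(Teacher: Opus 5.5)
Your proposal is correct and follows the paper's route. Correctness comes from folding the public map $M^{(a)}$ into the degree-reduction coefficients $c^{(a)}_{k,j}$ and noting that $\mu\scr{r}$ vanishes on the slots; privacy comes from the zero-encoding, which masks the messages and refreshes the output. You make this rigorous with an explicit simulator, the condition $\mu(\alpha_k)\neq 0$, and the reconciliation of the routing function $\rho$ with the matrices $M^{(a)}$, all of which the paper leaves implicit.

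One sentence is wrong and should be deleted: the one claiming individual messages are not uniform and their evaluations need not be independent. Any $\le t$ evaluations of a uniform polynomial of degree at most $t-1$ are jointly uniform, which is exactly the fact you use next. So in the real execution the honest-to-corrupt messages are jointly uniform, and this is what makes your simulator's distribution match.
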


\begin{proof}
    \textbf{Correctness.} The proof is identical to the correctness proof of $\Pi_{\mathrm{DegRed}}^{\mathrm{Pack}}$, except that the target slot vector is first replaced by the public repacking map. Specifically, the repacked vector $\vx{z}'$ is defined by $z'_i=\sum_{a=1}^{m}\sum_{u=1}^{l}M^{(a)}_{i,u}h^{(a)}(\beta_u)$ for all $i\in[l]$. The protocol then performs the same degree-reduction procedure as before, but encodes $\vx{z}'$ instead of the original slot vector. Since the added zero encoding $\mu(X)r(X)$ vanishes at every slot point $\beta_i$, it does not change the repacked slot values. Therefore, the output is a degree-$d$ packed sharing of $\vx{z}'$.

    \textbf{Privacy.} Same as $\Pi_{\mathrm{DegRed}}^{\mathrm{Pack}}$.
\end{proof}

\subsection{Proof of Theorem \ref{thm:security}}

\begin{proof}
We prove security of \sysname{} by simulation. 

Let $\adv$ be a static malicious adversary corrupting the server and at most an $\eta$ fraction of users. We construct a simulator $\Sim$ that interacts with $\adv$ and the ideal functionality $\mathcal F$, and generates a view indistinguishable from $\adv$'s view in the real protocol.

Let $\mathsf{Bad}$ be the event that some committee has more than $t$ corrupted parties. The ideal functionality aborts on $\mathsf{Bad}$. Therefore, in the remainder of the proof, we condition on $\neg\mathsf{Bad}$, so every committee contains at most $t$ corrupted parties. 

The simulator internally runs $\adv$ and emulates all honest users and committee parties. It samples the committee assignment using the same public randomness as in the real protocol and gives the assignment to $\adv$. If the assignment causes $\mathsf{Bad}$, then $\Sim$ $\mathsf{ABORT}$ as the ideal functionality and outputs the corresponding abort view.

For corrupted users, $\Sim$ obtains their inputs and adversarially chosen gradients from $\adv$ and forwards them to $\mathcal F$ as corrupted-user inputs. Since the server is corrupted, $\Sim$ also receives the model weights, validation predicates, and parameters chosen by $\adv$ and forwards them to $\mathcal F$. If the server sends inconsistent model weights to different committees, $\Sim$ lets the weight-consistency check reject and $\mathsf{ABORT}$ as in $\mathcal F$.

For honest user, $\Sim$ does not know their private inputs. In the real protocol, honest users' input and all intermediate values are represented by packed Shamir sharings. Whenever $\adv$ should see shares of an honest input or intermediate value, $\Sim$ samples random field elements as the corrupted parties' shares. Conditioned on $\neg\mathsf{Bad}$, $\adv$ sees at most $t$ shares of each sharing. Therefore, by the privacy of packed Shamir sharing, these simulated shares are identically distributed to the real shares and reveal no information about the encoded values. 

For every local linear operation and secret-secret multiplication, $\Sim$ applies the same operation to the simulated corrupted shares. Therefore, they have the same distribution as in the real protocol. For interactive computations, including $\Pi_{\mathrm{DegRed}}^{\mathrm{Pack}}$, $\Pi_{\mathrm{DegRed-Repack}}^{\mathrm{Pack}}$, and $\Pi_{\mathrm{Trunc-DegRed}}^{\mathrm{Pack}}$, $\Sim$ invokes the simulator for the corresponding protocol.

At the end of the protocol, $\Sim$ obtains from $\mathcal F$ the accepted aggregate gradient and the per-committee invalid-gradient counts, and gives these values to $\adv$.

We now argue that the simulated view is indistinguishable from the real view. The committee assignment is sampled exactly as in the real protocol. Conditioned on $\neg\mathsf{Bad}$, $\adv$ sees at most $t$ shares of every honest packed Shamir sharing, so the simulated shares are identically distributed to real shares by packed Shamir privacy. Local operations preserve the distribution. Interactive protocols are replaced by their ideal functionalities with an indistinguishable view by the security of the corresponding protocol proved above. 

For correctness, till gradient computation we can assume that the gradients are now perturbed by some adversary chosen $\{\delta\}$ error.  Malformed gradients check is done in the malicious setting and the correctness follows from that~\cite{goyal2021atlas,goyal2020guaranteed,goyal2020malicious} -- only gradients with $\textsf{Valid}(\g+\delta)=1$ pass as allowed by our ideal functionality. Similarly, server's malicious behavior is caught by the weight-consistency check. The final gradient aggregation step is simple additions which is caught by a consistency check. The only values revealed are the accepted aggregate and invalid-gradient counts, which are exactly the outputs of the ideal functionality. 

Thus, conditioned on $\neg\mathsf{Bad}$, the simulated and real views are indistinguishable. Since $\Pr[\mathsf{Bad}]\le 2^{-\kappa}$, the real and ideal executions differ only with negligible probability. Therefore, \sysname{} securely realizes the ideal functionality $\mathcal F$.

\end{proof}

\subsection{\update{Dropout Tolerance}}
\label{app:dropout}

\update{Let $\dropout$ denote the desired dropout tolerance. For a fixed corruption threshold $\thr$ and packing factor $\pf$, let $d=\thr+\pf-1$ and $\size=2d+1$. In the semi-honest setting, tolerating up to $\dropout$ unavailable members requires at least $2d+1+\dropout$ committee members. In the malicious-with-abort setting, we instead require $2d+1+2\dropout,$ where the additional $\dropout$ members are used to handle malicious cross-committee transfers under different dropout patterns, as described below. In concrete parameter selection, increasing the committee size may change the value of $\thr$ required to satisfy the target committee-safety bound. We therefore choose $\thr$ and the committee size jointly, using the corresponding size requirement above while maintaining the desired bound on the probability that a committee contains more than $\thr$ corrupted members. All input shares, model-weight shares, and user-supplied randomness are distributed to every member of the provisioned committee. Local linear operations and share-wise multiplications require no modification; only interactive subprotocols require additional mechanisms.}

\update{In truncation and degree-reduction Algs.~\ref{alg:trunc} and~\ref{alg:trunc-degred}, the masked value opened in Step~3 is shared using a degree-$2d$ polynomial. Hence, any $2d+1$ available shares uniquely determine the opened value, and each party can interpolate from any $2d+1$ shares it receives. Increasing the committee size to $2d+1+\dropout$ therefore tolerates up to $\dropout$ erasures.}

\update{Degree reduction and repacking in Algs.~\ref{alg:degred} and~\ref{alg:degred-repack} require an additional mechanism because every receiver must apply the degree-reduction map using the same set of $2d+1$ shares. A public ordering is fixed over all members of each committee. The protocol initially selects the
first $\size$ parties in this ordering as the sender set. Since all user-to-user communication is relayed by the server, the server can wait for complete message bundles from the current sender set before relaying. After a timeout, each missing sender is replaced, in order, by the next available member on the list. Because changing the sender set changes the interpolation coefficients used by the degree-reduction map, all parties in
the updated sender set recompute and resend their messages. A member once
identified as unavailable remains excluded from subsequent sender sets.
Since at most $\dropout$ parties are unavailable, at most $\dropout$ sender
replacements are required.}

\update{Cross-committee operations, such as weight consistency and cross-committee gradient aggregation, require additional care because corresponding positions in different committees may experience different dropout patterns. Thus, the number of aligned live positions may fall below the required threshold even when each committee individually has enough active parties. In the semi-honest setting, provisioning $\size+\dropout$ members suffices. For each source committee $\Co_k$ and each active party $P_{1,j}$ in the primary committee $\Co_1$, party $P_{k,j}$ directly sends $f_k(\alpha_j)$ to $P_{1,j}$. If $P_{k,j}$ is unavailable, the current set of live senders in $\Co_k$ privately restores only the missing evaluation $f_k(\alpha_j)$ using fresh user-provided correlated randomness. Thus, every active party in the primary committee obtains its share, allowing cross-committee values to be aggregated locally at corresponding evaluation points. Since public ordering and dropout budget are shared across the committee execution, parties already identified as unavailable remain excluded, and at most $\dropout$ such restorations are required per source committee.}

\update{In the malicious-with-abort setting, we instead provision $\size+2\dropout$ members and use the same direct-transfer and share-restoration procedure. In gradient aggregation, at most $\thr$ directly transferred source positions, $\dropout$ restored positions, and $\thr$ primary-committee positions can be adversarially affected, for a total of $2\thr+\dropout$ positions. The primary committee has at least $\size+\dropout$ active parties. Therefore, the corresponding degree-$2\thr$ Reed--Solomon code has minimum distance at least $(\size+\dropout)-2\thr=2\pf-1+\dropout$. Under the requirement $\thr<\pf$, we have $2\thr+\dropout<2\pf-1+\dropout$, so an incorrect contribution cannot form a distinct valid degree-$2\thr$ codeword. A source gradient contribution is therefore either rejected or accepted as the correct aggregate, except with probability bounded by the soundness error of the consistency check. For weight consistency, the primary committee first verifies that the transferred evaluations form a valid degree-$(d+\thr)$ sharing and then applies the existing weight equality check. Thus, even if a malicious transfer or restoration produces a different valid sharing, the sharing is harmless if it encodes the same packed weights and is rejected otherwise, except with probability bounded by the soundness error of these checks.}

\update{Corrupted committee members may withhold messages or send malformed ones, while a corrupted server may selectively delay or suppress messages or relay inconsistent shares to different receivers. These deviations may cause a party to be treated as unavailable, produce inconsistent sharings, perturb subsequent computation, or force an abort, all of which are permitted by our model. Our dropout mechanism does not correct malformed shares. Any resulting share inconsistency is detected by subsequent consistency checks, while a deviation that still produces a well-formed gradient remains subject to the validation predicate described in Sec.~\ref{sec:mal:gradient-validation}.}

\begin{table}[!t]
\centering
\caption{SH provisioning overhead at $\totalC=100\text{K}$.}
\label{tab:sh-dropout}
\scriptsize
\setlength{\tabcolsep}{1.2pt}
\renewcommand{\arraystretch}{0.92}
\resizebox{\columnwidth}{!}{%
\begin{tabular}{@{}l r r r r r rr rr@{}}
\toprule
Model
& $D$ & $c$ & $l$ & $t$ & $n'$
& $T$ (ms) & $C$ (MB)
& $\Delta T$ (\%) & $\Delta C$ (\%) \\
\midrule

BCW
& 0  & 1860 & 14 & 13 & 53 & 3.28 & 1.31 & --   & -- \\
& 5  & 2000 & 10 & 13 & 50 & 3.86 & 1.67 & 17.7 & 27.1 \\
& 10 & 1790 & 10 & 13 & 55 & 3.88 & 1.67 & 18.4 & 27.1 \\
& 20 & 1290 & 14 & 15 & 77 & 4.91 & 2.12 & 49.9 & 61.4 \\
\midrule

Credit
& 0  & 1640 & 16 & 13 & 57 & 3.19 & 0.58 & --   & -- \\
& 5  & 1890 & 11 & 13 & 52 & 3.33 & 0.63 & 4.4  & 7.6 \\
& 10 & 1640 & 11 & 13 & 57 & 3.35 & 0.72 & 5.2  & 23.7 \\
& 20 & 1390 & 11 & 14 & 69 & 4.48 & 0.89 & 40.5 & 51.3 \\
\midrule

MNIST
& 0  & 2300 & 10 & 12 & 43 & 283.49 & 28.79 & --   & -- \\
& 5  & 2000 & 10 & 13 & 50 & 334.38 & 34.17 & 18.0 & 18.7 \\
& 10 & 1790 & 10 & 13 & 55 & 374.13 & 38.28 & 32.0 & 33.0 \\
& 20 & 1390 & 12 & 14 & 71 & 487.72 & 45.25 & 72.0 & 57.2 \\
\midrule

FMNIST
& 0  & 2300 & 10 & 12 & 43 & 222.00 & 26.67 & --   & -- \\
& 5  & 2000 & 10 & 13 & 50 & 261.68 & 31.68 & 17.9 & 18.8 \\
& 10 & 1790 & 10 & 13 & 55 & 292.77 & 35.47 & 31.9 & 33.0 \\
& 20 & 1390 & 12 & 14 & 71 & 386.84 & 41.95 & 74.2 & 57.3 \\
\midrule

SVHN
& 0  & 2300 & 10 & 12 & 43 & 300.57 & 35.64 & --   & -- \\
& 5  & 2000 & 10 & 13 & 50 & 354.36 & 42.33 & 17.9 & 18.8 \\
& 10 & 1730 & 11 & 13 & 57 & 403.01 & 46.22 & 34.1 & 29.7 \\
& 20 & 1430 & 11 & 14 & 69 & 512.47 & 58.09 & 70.5 & 63.0 \\
\midrule

CIFAR
& 0  & 970 & 36 & 16 & 103 & 1716.68 & 70.32 & --  & -- \\
& 5  & 920 & 36 & 16 & 108 & 1724.14 & 72.68 & 0.4 & 3.3 \\
& 10 & 850 & 37 & 17 & 117 & 1729.46 & 76.73 & 0.7 & 9.1 \\
& 20 & 800 & 36 & 17 & 125 & 1748.13 & 79.18 & 1.8 & 12.6 \\

\bottomrule
\end{tabular}%
}
\end{table}

\update{\\\noindent\textbf{Experiments.} We evaluate the overhead of tolerating up to $\dropout\in\{0,5,10,20\}$ unavailable members. Let $\size=2d+1$ denote the number of active members required by the protocol, and let $\sizeprime=\size+\dropout$ denote the provisioned committee size. Because over-provisioning changes the probability that a committee exceeds the corruption threshold, for each value of $\dropout$, we recompute the minimum $\thr$ that preserves the same global committee-safety bound, subject to $\num\sizeprime\le\totalC$. We then independently repeat the same balanced $(\num,\pf)$ parameter search for each $\dropout$, yielding the optimized deployment for that dropout tolerance. The training workload
itself is unchanged, but the resulting choices of $\num$, $\pf$, and $\thr$
change the active committee size and hence the per-machine computation and
communication costs. This captures the overhead of provisioning the system for
the desired dropout tolerance.}

At $\totalC=100\mathrm{K}$, the overhead remains modest for small dropout budgets and increases gradually as more reserve members are provisioned. For $\dropout=5$, the optimized configurations incur $0.4$--$18.0\%$ additional computation and $3.3$--$27.1\%$ additional communication relative to $\dropout=0$; for $\dropout=10$, the corresponding overheads are $0.7$--$34.1\%$ and $9.1$--$33.0\%$. Even at $\dropout=20$, the optimized configurations remain feasible for all workloads, with computation overhead ranging from $1.8\%$ to $74.2\%$ and communication overhead from $12.6\%$ to $63.0\%$. The overhead varies because larger dropout tolerance changes the feasible committee configurations. In particular, the optimized solutions generally use fewer committees and therefore assign more users to each committee, increasing the per-helper training workload, while increases in \(\thr\) and \(\size\) raise committee-dependent protocol costs. These effects are more pronounced for \mnistarch{}, \fmnistarch{}, \svhnarch{}, whereas \cifararch{} is dominated by its large local FC workload, so the parameter changes contribute only a small fraction of its total computation.


\section{Average Pool Implementation}
\label{app:gap}

Global average pooling (GAP) computes, for each batch element $b$ and channel $c$,
\[
y_{b,c} \;=\; \frac{1}{HW} \sum_{h=1}^{H} \sum_{w=1}^{W} x_{b,c,h,w}.
\]

In the secret-sharing setting, each value $x_{b,c,h,w}$ is shared among the parties, and we denote its sharing by $\pss{x_{b,c,h,w}}{d}$. Since GAP is a linear operation, it can be evaluated locally on the shares. Specifically, each party computes
\[
\pss{y_{b,c}}{d}
\;=\;
\frac{1}{HW} \sum_{h=1}^{H} \sum_{w=1}^{W} \pss{x_{b,c,h,w}}{d}.
\]

The scalar $\frac{1}{HW}$ is public and corresponds to a modular inverse in the field $\Fp$, so multiplication by this constant is a local operation and incurs no interaction.

\section{Evaluation (cntd.)}
\label{app:eval}
\subsection{Setup (cntd.)} 
\label{app:rp}
\noindent \textbf{Random Projection.} The inclusion of the high-dimensional $4096$-input layer significantly increases model size; in the malicious setting, this directly impacts both runtime and communication, as the gradient validation step is costly. 
To mitigate this overhead, we apply random projection~\cite{random-projection} to reduce the dimensionality of updates by projecting to $d' = 10{,}000$ prior to aggregation. In particular, we use fast random projections based on Walsh--Hadamard transforms~\cite{random-projection}, which preserve structure while substantially reducing computation and communication costs.

\subsection{Experiments (cntd.)} 

\subsubsection{The \textit{\textsc{Poseidon}} Crossover: When can it win?}
\label{sec:eval:poseidon-crossover}
The tradeoff is rounds vs.\ compute: \textsc{Poseidon} uses 3 WAN rounds but incurs heavy HE compute, while \sysname{} uses $\sim$36--38 rounds with negligible compute. \textsc{Poseidon} can outperform \sysname{} only if its compute cost is offset by reduced round latency, i.e., $\mathrm{RTT} > T_{\text{comp}}^{\textsc{pos}} / (R_{\sysname} - R_{\textsc{pos}})$ (where $T_{\text{comp}}^{\textsc{pos}}$ is \textsc{Poseidon} per-iteration compute, $R_{\sysname}, R_{\textsc{pos}}$ are round counts, and $\mathrm{RTT}$ is network latency). This condition is highly restrictive: even for unrealistically shallow models and very small batch sizes it requires RTTs of $\sim$144\,ms, and it scales linearly with HE compute, becoming infeasible as model size or batch increases. For example, for \cifararch, $\mathrm{RTT}^* \approx 6264\,\mathrm{s}/(38-3) \approx 3$ minutes per round trip, which is physically impossible. Therefore, \textsc{Poseidon} is never competitive in any realistic deployment: it can only appear favorable under contrived settings with extremely small models, tiny batches, and abnormally high latency. In all practical federated learning regimes, its HE compute cost dominates, and \sysname{} strictly outperforms it.

\subsubsection{Parameter Design Space}
\label{app:eval:eq2}

A configuration of \sysname{} is determined by the number of committees
$\num$, packing factor $\pf$, corruption threshold $\thr$, and the resulting
committee size
\[
    \size = 2(\thr+\pf-1)+1.
\]
Given $\totalC$ users and corruption rate $\beta$, we choose the smallest
$\thr$ such that the probability that any committee contains more than
$\thr$ corrupted members is at most $10^{-12}$. Specifically, for every
candidate $(\num,\pf)$, we require
\[
    \num\cdot
    \Pr\!\left[X>\thr\right]
    \le 10^{-12},
    \qquad
    X\sim
    \mathrm{Hypergeom}
    \!\left(
        \totalC,
        \lfloor\beta\totalC\rfloor,
        \size
    \right),
\]
with $\beta=0.01$, together with the feasibility condition
$\num\size\le\totalC$. For malicious security, we additionally require
$\thr<\pf$. Thus, once $(\num,\pf)$ is fixed, $\thr$ and $\size$ are
determined by the security requirement rather than tuned independently.

The two main tuning parameters, $\num$ and $\pf$, affect different parts of
the protocol. Increasing $\num$ reduces the number of users assigned to each
committee,
\[
    U=\left\lceil\frac{\totalC}{\num}\right\rceil,
\]
which generally reduces user-dependent training and validation work per
committee member. At the same time, more committees increase
cross-committee aggregation overhead and, in the malicious setting, the
cost of checking that all committees received consistent model weights.
Moreover, changing $\num$ may change the minimum safe threshold $\thr$ and
therefore the committee size $\size$. Consequently, increasing the number of
committees is beneficial only while the reduction in per-committee workload
outweighs these additional costs.

The packing factor $\pf$ creates a similar tradeoff. A larger $\pf$ places
more values in each packed sharing and can therefore reduce the number of
packed objects that must be processed. For example, an FC layer with input
dimension $a$ uses $\lceil a/\pf\rceil$ packed chunks, while a convolution
with $K$ Winograd tiles uses $\lceil K/\pf\rceil$ tile groups. However,
increasing $\pf$ also increases the sharing degree
$d=\thr+\pf-1$ and hence typically increases the committee size
$\size=2d+1$. Interactive operations such as degree reduction, truncation,
and repacking therefore become more expensive. The resulting tradeoff
depends on the exact architecture and packing layout, so neither a larger
nor a smaller $\pf$ is universally optimal.

We therefore search over candidate $(\num,\pf)$ pairs directly. For each
pair, we first determine the minimum safe $\thr$, discard infeasible
configurations, and then evaluate the resulting per-machine computation and
sent communication. We retain the Pareto-optimal configurations with respect
to these two objectives and report the Pareto-knee configuration as the
selected operating point (see \autoref{tbl:eval:params}).

\begin{table}[H]
  \centering
  \caption{Optimal \sysname{} protocol parameters $(\num,\pf,\thr,\size)$ for each $(\totalC,\text{arch})$ in \autoref{tbl:eval:eq1-compute}.
  \label{app:tab:optimal}
    \emph{SH}~=~semi-honest Pareto-knee config;
    \emph{Mal}~=~malicious Pareto-knee config.}
  \label{tbl:eval:params}
  \resizebox{\columnwidth}{!}{%
  \begin{tabular}{@{}clrrrrrrrr@{}}
    \toprule
    & & \multicolumn{4}{c}{SH optimal} & \multicolumn{4}{c}{Mal optimal} \\
    \cmidrule(lr){3-6}\cmidrule(lr){7-10}
    Arch & $\totalC$ & $\num$ & $\pf$ & $\thr$ & $\size$ & $\num$ & $\pf$ & $\thr$ & $\size$ \\
    \midrule
    \bcwarch & $1\text{K}$    & \update{27}     & 10 &  9 & 37  & $27$    & 10 &  9 & 37  \\
    \bcwarch & $10\text{K}$   & \update{195} & \update{14} & \update{12} & \update{51} & \update{195} & \update{14} & \update{12} & \update{51}  \\
    \bcwarch & $100\text{K}$  & \update{1{,}860} & \update{14} & \update{13} & \update{53}  & \update{1860}               & \update{14} & \update{14} & \update{53} \\
    \bcwarch & $1\text{M}$    & \update{17{,}900} & \update{14} & \update{14} & \update{55}  & \update{10{,}900} & \update{15} & \update{14} & \update{57} \\[2pt]
    \mnistarch & $1\text{K}$  & \update{27} & \update{10} & \update{9} & \update{37} & \update{27} & \update{10} & \update{9} & \update{37}  \\
    \mnistarch & $10\text{K}$ & \update{240} & \update{10} & \update{11} & \update{41} & \update{200} & \update{13} & \update{12} & \update{49} \\
    \mnistarch & $100\text{K}$& \update{2{,}300} & \update{10} & \update{12} & \update{43}  & \update{1{,}860} & \update{14} & \update{13} & \update{53} \\
    \mnistarch & $1\text{M}$  & \update{20{,}000} & \update{12} & \update{13} & \update{49}  & \update{17{,}300} & \update{15} & \update{14} & \update{57} \\[2pt]
    \cifararch & $1\text{K}$  & \update{12} & \update{32} & \update{10} & \update{83} & \update{19} & \update{17} & \update{9} & \update{51}  \\
    \cifararch & $10\text{K}$ & \update{97} & \update{37} & \update{15} & \update{103} & \update{160} & \update{18} & \update{13} & \update{61}  \\
    \cifararch & $100\text{K}$& \update{970} & \update{36} & \update{16} & \update{103} & \update{1{,}300} & \update{23} & \update{14} & \update{73}  \\
    \cifararch & $1\text{M}$  & \update{7{,}850} & \update{43} & \update{18} & \update{121} & \update{8{,}200} & \update{42} & \update{18} & \update{121}  \\
    \bottomrule
  \end{tabular}}
\end{table}

\begin{table*}[]
  \centering
  \caption{Architecture families for the scaling experiment (CIFAR-10, $32{\times}32{\times}3$).
           $\gradDim$ is the total gradient (parameter) count.
           D2 and W3 share the same architecture.
           All depth models share the tail AvgPool$\to$FC$\to$ReLU$\to$FC(${\to}10$);
           channel sizes of the FC layers adapt to the preceding feature-map width.}
  \label{tab:arch-families}
  \small
  \begin{tabular}{@{}llr@{}}
    \toprule
    ID & Architecture (Conv blocks $\to$ head) & $\gradDim$ \\
    \midrule
    D1 & Conv($3{\to}64$)--ReLU -- AvgPool -- FC($64{\to}256$)--ReLU--FC($256{\to}10$)                          &     20{,}672 \\
    D2 & Conv($3{\to}64$)--ReLU -- Conv($64{\to}128$)--ReLU -- AvgPool -- FC($128{\to}256$)--ReLU--FC($256{\to}10$) &   110{,}784 \\
    D3 & $+$Conv($128{\to}256$)--ReLU (before AvgPool)                                                          &    438{,}464 \\
    D4 & $+$Conv($256{\to}512$)--ReLU (before AvgPool)                                                          & 1{,}683{,}648 \\
    \midrule
    W1 & Conv($3{\to}16$)--ReLU--Conv($16{\to}32$)--ReLU--AvgPool--FC($32{\to}64$)--ReLU--FC($64{\to}10$)      &      7{,}728 \\
    W2 & Conv($3{\to}32$)--ReLU--Conv($32{\to}64$)--ReLU--AvgPool--FC($64{\to}128$)--ReLU--FC($128{\to}10$)    &     28{,}768 \\
    W3 & Conv($3{\to}64$)--ReLU--Conv($64{\to}128$)--ReLU--AvgPool--FC($128{\to}256$)--ReLU--FC($256{\to}10$)  &    110{,}784 \\
    W4 & Conv($3{\to}128$)--ReLU--Conv($128{\to}256$)--ReLU--AvgPool--FC($256{\to}512$)--ReLU--FC($512{\to}10$) &   434{,}560 \\
    \bottomrule
  \end{tabular}
\end{table*}

\subsubsection{Accuracy}
\label{app:eval:acc}
We report the test accuracy of \sysname{} across datasets. Learning parameters are selected via grid search with 3-fold cross-validation on plaintext data; in practice, these can be jointly determined by participating parties using secure statistics protocols~\cite{param-choose1,param-choose2}.

\begin{table}[H]
\centering
\setlength{\tabcolsep}{4pt}
\begin{tabular}{l l r r}
\toprule
\textbf{Dataset} & \textbf{Model} & \textbf{Epochs} & \textbf{Test Acc.} \\
\midrule
BCW          & $\bcwarch$    & 50  & 95.6\% \\
Credit       & $\creditarch$ & 150 & 83.0\% \\
MNIST        & $\mnistarch$  & 100 & 92.0\% \\
FashionMNIST & $\fmnistarch$ & 100 & 80.7\% \\
SVHN         & $\svhnarch$   & 300 & 63.3\% \\
CIFAR-10     & $\cifararch$  & 100 & 52.0\% \\
\bottomrule
\end{tabular}
\caption{Test accuracy of \sysname{} architectures, trained in a single-committee simulation ($l{=}4$, $t{=}2$, $11$ parties) on one machine.}
\label{tab:accuracy}
\end{table}

\subsubsection{\update{Comparison with Single Server Secure Aggregation}}
\label{app:eval:secure-aggregation}

We compare against state of the art single server secure aggregation systems, \textsc{Flamingo}~\cite{ma2023flamingo} and \textsc{Willow}~\cite{bell2025willow}. In \textsc{Flamingo}, users locally compute and mask their model updates using individual and pairwise masks, while a small decryptor committee helps the server in recovering the aggregate under user dropout. In \textsc{Willow}, users encrypt their updates and associated key material; the server homomorphically aggregates them, with decryptors enabling recovery of the aggregate and a verifiers checking the aggregation in the actively secure setting. Thus, both systems protect the \emph{aggregation} of locally computed updates, but not the training computation or model weights. In contrast, \sysname{} executes the full forward and backward pass under MPC while keeping model weights secret shared throughout. We therefore ask whether \sysname{} can provide this stronger protection at competitive per machine cost.

\paragraph{Setup.} For a direct per-machine comparison, we compare the cost of one \sysname{} committee member with the corresponding helper-side cost in each baseline. For \textsc{Flamingo}, we report per decryptor cost. For \textsc{Willow}, we consider both per-verifier and per-decryptor costs and report the larger of the two. \sysname{} committee member cost is evaluated on a 12 core commodity machine using a single thread. Baseline helper costs are evaluated on a 128-core machine, with independent per-user work parallelized across the available cores. This gives the baselines realistic server settings and thus provides a conservative comparison. user dropout is 1\% throughout. 

\paragraph{Results.}
\autoref{tab:eval:sa} reports the per-machine computation cost, showing that \sysname{} remains competitive with, and at large user populations substantially outperforms, the secure-aggregation baseline despite performing a significantly stronger task: each committee member in \sysname{} participates in the complete secure forward and backward pass, whereas the baseline helpers support only secure aggregation of updates that users have already computed locally. At $\totalC=100\text{K}$, this additional work is visible for larger workloads, where \textsc{Flamingo} is still faster than \sysname{}. As the user population increases to $N=10\text{M}$, \sysname{} becomes faster than both baselines across all workloads, gaining roughly $920\times$ over \textsc{Willow} even for the largest \cifararch{}. 


This trend follows from where the population-dependent computation arises in the three systems. In \sysname{}, the dominant cost is the semi-honest secure training computation, whose per-committee-member workload is determined primarily by the model architecture and committee parameters, being independent of the total user population $N$. Increasing $N$ affects only the additional malicious-security mechanisms, mainly through the number of committees that must be coordinated and checked. Therefore, only a relatively small portion of the per-machine cost grows with the user population. In contrast, \textsc{Flamingo} requires each decryptor to process user-dependent recovery material, giving its per-decryptor workload an approximately linear dependence on $N$. \textsc{Willow} distributes verification across multiple verifiers, reducing the workload of an individual verifier, but its per-machine verification cost still grows with $N$ under the scalable verifier configuration. 

As a result, the fixed cost of securely evaluating the model dominates \sysname{} at smaller populations, while its relatively weak dependence on $N$ increases as the population grows. This is clear for \cifararch{}: at $N=100\text{K}$, \textsc{Flamingo} benefits from performing only lightweight aggregation support and is faster than a \sysname{} committee member, whereas at $N=10\text{M}$ its growing user-dependent work outweighs this advantage. Thus, \sysname{} achieves favorable per-machine scalability even though it protects much more of the training process than either secure-aggregation baseline.


For communication, we compare the upload of an ordinary user that do not serve as protocol helpers. A \sysname{} user distributes packed shares of its private input across the committee so that both training data and model computation can remain protected inside MPC. 

\begin{table*}[t]
\centering
\caption{
Per machine comparison with single server secure aggregation systems.
\emph{Top:} per round compute for one \sysname{} committee member compared with 
\textsc{Flamingo} and \textsc{Willow}. Parentheses show slowdown relative to
\sysname{}. \emph{Bottom:} per user upload at $\totalC=100\text{K}$.
}
\label{tab:eval:sa}
\setlength{\tabcolsep}{3pt}
\small
\begin{tabular}{@{}l rrr rrr@{}}
\toprule
& \multicolumn{3}{c}{Compute: $\totalC = 100\text{K}$}
& \multicolumn{3}{c}{Compute: $\totalC = 10\text{M}$} \\
\cmidrule(lr){2-4}\cmidrule(lr){5-7}
Workload
& \sysname{} & \textsc{Flamingo} & \textsc{Willow}
& \sysname{} & \textsc{Flamingo} & \textsc{Willow} \\
\midrule

\bcwarch{} ($d\approx1.6\text{K}$)
& $\mathbf{\update{77}\,\text{ms}}$
& $380\,\text{ms}$ {\scriptsize($\update{5}{\times}$)}
& $2.3\,\text{min}$ {\scriptsize($\update{1{,}792}{\times}$)}
& $\mathbf{\update{972}\,\text{ms}}$
& $51.5\,\text{s}$ {\scriptsize($\update{53}{\times}$)}
& $3.9\,\text{hr}$ {\scriptsize($\update{14{,}444}{\times}$)} \\

\mnistarch{} ($d\approx500$)
& $\mathbf{\update{462}\,\text{ms}}$
& $380\,\text{ms}$ {\scriptsize($\update{0.82}{\times}$)}
& $1.1\,\text{min}$ {\scriptsize($\update{142}{\times}$)}
& $\mathbf{\update{2.3}\,\text{s}}$
& $51.5\,\text{s}$ {\scriptsize($\update{22.4}{\times}$)}
& $1.8\,\text{hr}$ {\scriptsize($\update{2{,}817}{\times}$)} \\

\cifararch{} ($d\approx1.07\text{M}$)
& $\mathbf{\update{2.6}\,\text{s}}$
& $380\,\text{ms}$ {\scriptsize($\update{0.15}{\times}$)}
& $2.1\,\text{min}$ {\scriptsize($\update{48.5}{\times}$)}
& $\mathbf{\update{13.7}\,\text{s}}$
& $51.5\,\text{s}$ {\scriptsize($\update{3.8}{\times}$)}
& $3.5\,\text{hr}$ {\scriptsize($\update{920}{\times}$)} \\

\midrule[\heavyrulewidth]
& \multicolumn{6}{c}{Per user upload at $\totalC=100\text{K}$} \\
\cmidrule(lr){2-7}
Workload
& \multicolumn{2}{c}{\sysname{}}
& \multicolumn{2}{c}{\textsc{Flamingo}}
& \multicolumn{2}{c}{\textsc{Willow}} \\
\midrule

\bcwarch{} ($d\approx1.6\text{K}$)
& \multicolumn{2}{c}{$43\,\text{KB}$}
& \multicolumn{2}{c}{$11\,\text{KB}$}
& \multicolumn{2}{c}{$1.25\,\text{MB}$} \\

\mnistarch{} ($d\approx500$)
& \multicolumn{2}{c}{$5.5\,\text{KB}$}
& \multicolumn{2}{c}{$7\,\text{KB}$}
& \multicolumn{2}{c}{$0.6\,\text{MB}$} \\

\cifararch{} ($d\approx1.07\text{M}$)
& \multicolumn{2}{c}{$29\,\text{MB}$}
& \multicolumn{2}{c}{$4.2\,\text{MB}$}
& \multicolumn{2}{c}{$5.1\,\text{MB}$} \\

\bottomrule
\end{tabular}
\end{table*}

\subsubsection{Comparison with Prior Work\cite{liu2024scalable}}
\label{app:eval:hmmpc}

A prior work by Liu et al.~\cite{liu2024scalable} presents an honest-majority, semi-honest MPC system (HMMPC) for private neural network inference using standard (unpacked) Shamir secret sharing. We benchmark it at $n{=}63$ parties ($t{=}31$) for FC architectures and $n{=}131$ for CNN architectures to match our committee sizes. 

\autoref{tbl:eval:hmmpc} shows the comparison. \sysname{} uses \PSS{} with packing factor $\pf{=}20$ (FC arches) and $\pf{=}50$ (CNN arches), processing $\pf$ samples per MPC operation. For FC architectures, the communication advantage comes from the packing factor itself ($\pf{=}20$), yielding $5.8{\times}$--$12.3{\times}$ bandwidth reduction. For CNN architectures, the advantage is amplified by the large spatial feature maps: \sysname{} needs only one degree-reduction per packed output regardless of spatial dimension, while HMMPC incurs $O(n)$ communication per spatial activation. This yields $3{,}619{\times}$ and $29{,}258{\times}$ bandwidth reduction for \mnistarch{} and \svhnarch{} respectively. Runtime improvements mirror the communication gains: $445{\times}$--$619{\times}$ for FC models and ${\sim}10^4$--$10^5{\times}$ for CNN models.

\input{tables/hmmpc}

\begin{table}[t]
  \centering
  \caption{\textsc{HMMPC} vs.\ \sysname{} forward-pass cost.
    \textsc{HMMPC} numbers are per-party; \sysname{} numbers are per-machine.
    \sysname{} uses $\pf{=}20$ for FC architectures and $\pf{=}50$ for CNN architectures.}
  \label{tbl:eval:hmmpc}
  \setlength{\tabcolsep}{3pt}
  \resizebox{\columnwidth}{!}{%
  \begin{tabular}{@{}l rr rr rr@{}}
    \toprule
    & \multicolumn{2}{c}{\textsc{HMMPC} ($n{=}63$ or $131$)}
    & \multicolumn{2}{c}{\sysname{} ($\size{=}63$ or $131$)}
    & \multicolumn{2}{c}{Reduction} \\
    \cmidrule(lr){2-3}\cmidrule(lr){4-5}\cmidrule(lr){6-7}
    Architecture
      & Time & Comm
      & Time & Comm
      & Time$\downarrow$ & Comm$\downarrow$ \\
    \midrule
    \bcwarch   ($n{=}63$)
      & $208\,\mathrm{ms}$ & $1.82\,\mathrm{MB}$
      & $0.34\,\mathrm{ms}$ & $0.32\,\mathrm{MB}$
      & $619{\times}$ & $5.8{\times}$ \\
    \creditarch ($n{=}63$)
      & $196\,\mathrm{ms}$ & $3.38\,\mathrm{MB}$
      & $0.44\,\mathrm{ms}$ & $0.42\,\mathrm{MB}$
      & $445{\times}$ & $8.1{\times}$ \\
    \cifararch  ($n{=}63$)
      & $1.61\,\mathrm{s}$ & $19.3\,\mathrm{MB}$
      & $0.74\,\mathrm{s}$ & $1.57\,\mathrm{MB}$
      & $2.2{\times}$ & $12.3{\times}$ \\
    \mnistarch  ($n{=}131$)
      & $58.5\,\mathrm{s}$ & $4{,}474\,\mathrm{MB}$
      & $1.84\,\mathrm{ms}$ & $1.24\,\mathrm{MB}$
      & $3.2{\times}10^4{\times}$ & $3{,}619{\times}$ \\
    \fmnistarch ($n{=}131$)
      & $58.5\,\mathrm{s}$ & $4{,}474\,\mathrm{MB}$
      & $1.84\,\mathrm{ms}$ & $1.24\,\mathrm{MB}$
      & $3.2{\times}10^4{\times}$ & $3{,}619{\times}$ \\
    \svhnarch   ($n{=}131$)
      & $670\,\mathrm{s}$ & $44{,}504\,\mathrm{MB}$
      & $2.27\,\mathrm{ms}$ & $1.52\,\mathrm{MB}$
      & $3.0{\times}10^5{\times}$ & $2.9{\times}10^4{\times}$ \\
    \bottomrule
  \end{tabular}}
\end{table}

\subsubsection{WAN End-to-End Latency}
\label{app:eval:wan}

\autoref{tab:wan-latency} reports total per-iteration latency under three representative WAN round-trip times: 10\,ms, 15\,ms, and 100\,ms. Total time is:
\[
  T_{\mathrm{WAN}} = T_{\mathrm{compute}} + R \times \mathrm{RTT},
\]
where $T_{\mathrm{compute}}$ is the per-machine compute time at $N=100\mathrm{K}$ and $R$ is the number of MPC communication rounds. Even at $\mathrm{RTT}=100$ ms, \sysname{} completes a training iteration in about 1.4 s (SH) / 3.4 s (Mal) for FC architectures (BCW-MLP, Credit-MLP), under 4 s (SH) / 7.4 s (Mal) for CNN architectures, and 5.5 s (SH) / 9.3 s (Mal) for CIFAR-MLP-4K.
\begin{table}[H]
  \centering
  \caption{WAN end-to-end per-iteration latency for all \sysname{} architectures.
    $T_{\mathrm{WAN}} = T_{\mathrm{compute}} + R \times \mathrm{RTT}$; compute at $\totalC{=}100\text{K}$ (see \autoref{tbl:eval:eq1-compute} for full $\totalC$ scaling).
    Rounds $R$ are per-architecture constants, independent of $\totalC$.}
  \label{tab:wan-latency}
  \resizebox{\columnwidth}{!}{%
  \begin{tabular}{@{}l r rr rr rr@{}}
    \toprule
    Architecture & Rounds & \multicolumn{2}{c}{RTT\,=\,10\,ms} & \multicolumn{2}{c}{RTT\,=\,15\,ms} & \multicolumn{2}{c}{RTT\,=\,100\,ms} \\
    \cmidrule(lr){3-4}\cmidrule(lr){5-6}\cmidrule(lr){7-8}
    & (SH\,/\,Mal) & SH & Mal & SH & Mal & SH & Mal \\
    \midrule
    \bcwarch   & 14 / 33 & 143 ms & 407 ms & 213 ms & 572 ms & 1.40 s & 3.38 s \\
    \creditarch & 14 / 33 & 143 ms & 437 ms & 213 ms & 602 ms & 1.40 s & 3.41 s \\
    \mnistarch & 36 / 53 & 643 ms & 992 ms & 823 ms & 1.26 s & 3.88 s & 5.76 s \\
    \fmnistarch & 36 / 53 & 582 ms & 903 ms & 762 ms & 1.17 s & 3.82 s & 5.67 s \\
    \svhnarch  & 36 / 69 & 661 ms & 1.16 s & 841 ms & 1.50 s & 3.90 s & 7.37 s \\
    \cifararch$^{\dagger}$ & 38 / 67 & 2.10 s & 3.31 s & 2.29 s & 3.65 s & 5.52 s & 9.34 s \\
    \bottomrule
    \multicolumn{8}{@{}l}{\footnotesize $^{\dagger}$Malicious \cifararch\ uses random projection ($d'{=}10\text{K}$); see App.~\ref{app:rp}.}
  \end{tabular}}
\end{table}
\input{tables/eq1}

%% file: tables/hmmpc.tex

%% file: tables/eq1.tex
\begin{table*}[t]
  \centering
  \caption{Per-machine iteration. All times in ms. \textbf{Gain} is the fastest-baseline compute divided by \sysname{} SH compute; the winning baseline is identified. Bold indicates the fastest system.}
  \label{tbl:eval:eq1-compute}
   \resizebox{\textwidth}{!}{%
  \begin{tabular}{@{}clrrrrrl@{}}
    \toprule
    Architecture & $n$ & \sysname\ SH (ms) & \sysname\ Mal (ms)
      & POSEIDON (ms) & Falcon SH (ms) & Falcon Mal (ms) & Gain (\sysname\ SH) \\
    \midrule
    \multicolumn{8}{@{}l}{\bcwarch, $\approx$1.6K params;\quad (14\,SH / 33\,Mal MPC rounds)} \\[2pt]
    \bcwarch & $1\text{K}$ & $\mathbf{2.20{\times}10^{0}}$ & $2.30{\times}10^{1}$ & $7.68{\times}10^{5}$ & $2.77{\times}10^{5}$ & $4.52{\times}10^{5}$ & $1.26{\times}10^{5}{{\times}}$ vs.\ Falcon \\
    \bcwarch & $10\text{K}$ & $\mathbf{2.83{\times}10^{0}}$ & $4.30{\times}10^{1}$ & $7.68{\times}10^{5}$ & $2.77{\times}10^{6}$ & $4.52{\times}10^{6}$ & $2.72{\times}10^{5}{{\times}}$ vs.\ POSEIDON \\
    \bcwarch & $100\text{K}$ & $\mathbf{3.28{\times}10^{0}}$ & $7.71{\times}10^{1}$ & $7.68{\times}10^{5}$ & $2.77{\times}10^{7}$ & $4.52{\times}10^{7}$ & $2.34{\times}10^{5}{{\times}}$ vs.\ POSEIDON \\
    \bcwarch & $1\text{M}$ & $\mathbf{7.15{\times}10^{0}}$ & $2.89{\times}10^{2}$ & $7.68{\times}10^{5}$ & $2.77{\times}10^{8}$ & $4.52{\times}10^{8}$ & $1.07{\times}10^{5}{{\times}}$ vs.\ POSEIDON \\

    \midrule
    \multicolumn{8}{@{}l}{\mnistarch, $\approx$500 params;\quad (36\,SH / 53\,Mal MPC rounds)} \\[2pt]
    \mnistarch & $1\text{K}$ & $\mathbf{2.10{\times}10^{2}}$ & $2.39{\times}10^{2}$ & $5.16{\times}10^{5}$ & $8.82{\times}10^{4}$ & $1.44{\times}10^{5}$ & $4.20{\times}10^{2}{{\times}}$ vs.\ Falcon \\
    \mnistarch & $10\text{K}$ & $\mathbf{2.55{\times}10^{2}}$ & $3.87{\times}10^{2}$ & $5.16{\times}10^{5}$ & $8.82{\times}10^{5}$ & $1.44{\times}10^{6}$ & $2.02{\times}10^{3}{{\times}}$ vs.\ POSEIDON \\
    \mnistarch & $100\text{K}$ & $\mathbf{2.83{\times}10^{2}}$ & $4.62{\times}10^{2}$ & $5.16{\times}10^{5}$ & $8.82{\times}10^{6}$ & $1.44{\times}10^{7}$ & $1.82{\times}10^{3}{{\times}}$ vs.\ POSEIDON \\
    \mnistarch & $1\text{M}$ & $\mathbf{3.41{\times}10^{2}}$ & $6.93{\times}10^{2}$ & $5.16{\times}10^{5}$ & $8.82{\times}10^{7}$ & $1.44{\times}10^{8}$ & $1.51{\times}10^{3}{{\times}}$ vs.\ POSEIDON \\

    \midrule
    \multicolumn{8}{@{}l}{\cifararch, $\approx$1.07M params;\quad (38\,SH / 67\,Mal MPC rounds)} \\[2pt]
    \cifararch & $1\text{K}$ & $\mathbf{1.58{\times}10^{3}}$ & $1.80{\times}10^{3}$ & $3.11{\times}10^{8}$ & $1.89{\times}10^{8}$ & $3.08{\times}10^{8}$ & $1.20{\times}10^{5}{{\times}}$ vs.\ Falcon \\
    \cifararch & $10\text{K}$ & $\mathbf{1.63{\times}10^{3}}$ & $2.06{\times}10^{3}$ & $3.11{\times}10^{8}$ & $1.89{\times}10^{9}$ & $3.08{\times}10^{9}$ & $1.91{\times}10^{5}{{\times}}$ vs.\ POSEIDON \\
    \cifararch & $100\text{K}$ & $\mathbf{1.72{\times}10^{3}}$ & $2.64{\times}10^{3}$ & $3.11{\times}10^{8}$ & $1.89{\times}10^{10}$ & $3.08{\times}10^{10}$ & $1.81{\times}10^{5}{{\times}}$ vs.\ POSEIDON \\
    \cifararch & $1\text{M}$ & $\mathbf{2.11{\times}10^{3}}$ & $5.18{\times}10^{3}$ & $3.11{\times}10^{8}$ & $1.89{\times}10^{11}$ & $3.08{\times}10^{11}$ & $1.48{\times}10^{5}{{\times}}$ vs.\ POSEIDON \\

    \midrule
    \multicolumn{8}{@{}l}{\creditarch, $\approx$2.3K params;\quad (14\,SH / 33\,Mal MPC rounds)} \\[2pt]
    \creditarch & $1\text{K}$ & $\mathbf{1.99{\times}10^{0}}$ & $3.49{\times}10^{1}$ & $7.68{\times}10^{5}$ & $4.07{\times}10^{5}$ & $6.64{\times}10^{5}$ & $2.04{\times}10^{5}{{\times}}$ vs.\ Falcon \\
    \creditarch & $10\text{K}$ & $\mathbf{2.65{\times}10^{0}}$ & $5.95{\times}10^{1}$ & $7.68{\times}10^{5}$ & $4.07{\times}10^{6}$ & $6.64{\times}10^{6}$ & $2.90{\times}10^{5}{{\times}}$ vs.\ POSEIDON \\
    \creditarch & $100\text{K}$ & $\mathbf{3.19{\times}10^{0}}$ & $1.07{\times}10^{2}$ & $7.68{\times}10^{5}$ & $4.07{\times}10^{7}$ & $6.64{\times}10^{7}$ & $2.41{\times}10^{5}{{\times}}$ vs.\ POSEIDON \\
    \creditarch & $1\text{M}$ & $\mathbf{7.56{\times}10^{0}}$ & $4.01{\times}10^{2}$ & $7.68{\times}10^{5}$ & $4.07{\times}10^{8}$ & $6.64{\times}10^{8}$ & $1.02{\times}10^{5}{{\times}}$ vs.\ POSEIDON \\

    \midrule
    \multicolumn{8}{@{}l}{\fmnistarch, $\approx$400 params;\quad (36\,SH / 53\,Mal MPC rounds)} \\[2pt]
    \fmnistarch & $1\text{K}$ & $\mathbf{1.64{\times}10^{2}}$ & $1.88{\times}10^{2}$ & $4.18{\times}10^{5}$ & $7.14{\times}10^{4}$ & $1.16{\times}10^{5}$ & $4.34{\times}10^{2}{{\times}}$ vs.\ Falcon \\
    \fmnistarch & $10\text{K}$ & $\mathbf{2.00{\times}10^{2}}$ & $3.08{\times}10^{2}$ & $4.18{\times}10^{5}$ & $7.14{\times}10^{5}$ & $1.16{\times}10^{6}$ & $2.09{\times}10^{3}{{\times}}$ vs.\ POSEIDON \\
    \fmnistarch & $100\text{K}$ & $\mathbf{2.22{\times}10^{2}}$ & $3.73{\times}10^{2}$ & $4.18{\times}10^{5}$ & $7.14{\times}10^{6}$ & $1.16{\times}10^{7}$ & $1.88{\times}10^{3}{{\times}}$ vs.\ POSEIDON \\
    \fmnistarch & $1\text{M}$ & $\mathbf{2.71{\times}10^{2}}$ & $5.85{\times}10^{2}$ & $4.18{\times}10^{5}$ & $7.14{\times}10^{7}$ & $1.16{\times}10^{8}$ & $1.54{\times}10^{3}{{\times}}$ vs.\ POSEIDON \\

    \midrule
    \multicolumn{8}{@{}l}{\svhnarch, $\approx$500 params;\quad (36\,SH / 69\,Mal MPC rounds)} \\[2pt]
    \svhnarch & $1\text{K}$ & $\mathbf{2.23{\times}10^{2}}$ & $2.52{\times}10^{2}$ & $8.00{\times}10^{6}$ & $8.42{\times}10^{4}$ & $1.37{\times}10^{5}$ & $3.78{\times}10^{2}{{\times}}$ vs.\ Falcon \\
    \svhnarch & $10\text{K}$ & $\mathbf{2.71{\times}10^{2}}$ & $4.10{\times}10^{2}$ & $8.00{\times}10^{6}$ & $8.42{\times}10^{5}$ & $1.37{\times}10^{6}$ & $3.11{\times}10^{3}{{\times}}$ vs.\ Falcon \\
    \svhnarch & $100\text{K}$ & $\mathbf{3.01{\times}10^{2}}$ & $4.69{\times}10^{2}$ & $8.00{\times}10^{6}$ & $8.42{\times}10^{6}$ & $1.37{\times}10^{7}$ & $2.66{\times}10^{4}{{\times}}$ vs.\ POSEIDON \\
    \svhnarch & $1\text{M}$ & $\mathbf{3.50{\times}10^{2}}$ & $7.23{\times}10^{2}$ & $8.00{\times}10^{6}$ & $8.42{\times}10^{7}$ & $1.37{\times}10^{8}$ & $2.29{\times}10^{4}{{\times}}$ vs.\ POSEIDON \\
    \bottomrule
  \end{tabular}}
\end{table*}

\begin{table*}[t]
  \centering
  \caption{Per-machine bandwidth per iteration (MB).}
  \label{tbl:eval:eq1-comm}

  \tiny
  \setlength{\tabcolsep}{2.3pt}
  \renewcommand{\arraystretch}{0.72}

  \resizebox{0.78\textwidth}{!}{%
  \begin{tabular}{@{}clrrrrr@{}}
    \toprule
    Arch & $n$ & \sysname\ SH & \sysname\ Mal & POSEIDON & Falcon SH & Falcon Mal \\
    \midrule

    \bcwarch
      & $1$K   & $1.13{\times}10^{0}$ & $1.16{\times}10^{1}$ & $1.88{\times}10^{3}$ & $1.23{\times}10^{3}$ & $6.79{\times}10^{3}$ \\
      & $10$K  & $1.26{\times}10^{0}$ & $1.96{\times}10^{1}$ & $1.88{\times}10^{3}$ & $1.23{\times}10^{4}$ & $6.79{\times}10^{4}$ \\
      & $100$K & $1.31{\times}10^{0}$ & $2.17{\times}10^{1}$ & $1.88{\times}10^{3}$ & $1.23{\times}10^{5}$ & $6.79{\times}10^{5}$ \\
      & $1$M   & $1.36{\times}10^{0}$ & $4.03{\times}10^{1}$ & $1.88{\times}10^{3}$ & $1.23{\times}10^{6}$ & $6.79{\times}10^{6}$ \\

    \midrule
    \mnistarch
      & $1$K   & $2.15{\times}10^{1}$ & $2.88{\times}10^{1}$ & $4.49{\times}10^{3}$ & $3.93{\times}10^{2}$ & $2.16{\times}10^{3}$ \\
      & $10$K  & $2.63{\times}10^{1}$ & $4.23{\times}10^{1}$ & $4.49{\times}10^{3}$ & $3.93{\times}10^{3}$ & $2.16{\times}10^{4}$ \\
      & $100$K & $2.88{\times}10^{1}$ & $4.70{\times}10^{1}$ & $4.49{\times}10^{3}$ & $3.93{\times}10^{4}$ & $2.16{\times}10^{5}$ \\
      & $1$M   & $3.02{\times}10^{1}$ & $5.13{\times}10^{1}$ & $4.49{\times}10^{3}$ & $3.93{\times}10^{5}$ & $2.16{\times}10^{6}$ \\

    \midrule
    \cifararch
      & $1$K   & $5.27{\times}10^{1}$ & $1.32{\times}10^{2}$ & $1.66{\times}10^{3}$ & $8.42{\times}10^{5}$ & $4.63{\times}10^{6}$ \\
      & $10$K  & $6.90{\times}10^{1}$ & $1.70{\times}10^{2}$ & $1.66{\times}10^{3}$ & $8.42{\times}10^{6}$ & $4.63{\times}10^{7}$ \\
      & $100$K & $7.03{\times}10^{1}$ & $2.53{\times}10^{2}$ & $1.66{\times}10^{3}$ & $8.42{\times}10^{7}$ & $4.63{\times}10^{8}$ \\
      & $1$M   & $8.37{\times}10^{1}$ & $4.72{\times}10^{2}$ & $1.66{\times}10^{3}$ & $8.42{\times}10^{8}$ & $4.63{\times}10^{9}$ \\

    \midrule
    \creditarch
      & $1$K   & $4.06{\times}10^{-1}$ & $1.65{\times}10^{1}$ & $1.88{\times}10^{3}$ & $1.81{\times}10^{3}$ & $9.97{\times}10^{3}$ \\
      & $10$K  & $5.35{\times}10^{-1}$ & $2.72{\times}10^{1}$ & $1.88{\times}10^{3}$ & $1.81{\times}10^{4}$ & $9.97{\times}10^{4}$ \\
      & $100$K & $5.85{\times}10^{-1}$ & $3.14{\times}10^{1}$ & $1.88{\times}10^{3}$ & $1.81{\times}10^{5}$ & $9.97{\times}10^{5}$ \\
      & $1$M   & $6.05{\times}10^{-1}$ & $4.80{\times}10^{1}$ & $1.88{\times}10^{3}$ & $1.81{\times}10^{6}$ & $9.97{\times}10^{6}$ \\

    \midrule
    \fmnistarch
      & $1$K   & $1.99{\times}10^{1}$ & $2.59{\times}10^{1}$ & $3.63{\times}10^{3}$ & $3.18{\times}10^{2}$ & $1.75{\times}10^{3}$ \\
      & $10$K  & $2.43{\times}10^{1}$ & $3.79{\times}10^{1}$ & $3.63{\times}10^{3}$ & $3.18{\times}10^{3}$ & $1.75{\times}10^{4}$ \\
      & $100$K & $2.67{\times}10^{1}$ & $4.20{\times}10^{1}$ & $3.63{\times}10^{3}$ & $3.18{\times}10^{4}$ & $1.75{\times}10^{5}$ \\
      & $1$M   & $2.80{\times}10^{1}$ & $4.57{\times}10^{1}$ & $3.63{\times}10^{3}$ & $3.18{\times}10^{5}$ & $1.75{\times}10^{6}$ \\

    \midrule
    \svhnarch
      & $1$K   & $2.66{\times}10^{1}$ & $3.38{\times}10^{1}$ & $6.96{\times}10^{4}$ & $3.75{\times}10^{2}$ & $2.06{\times}10^{3}$ \\
      & $10$K  & $3.25{\times}10^{1}$ & $4.93{\times}10^{1}$ & $6.96{\times}10^{4}$ & $3.75{\times}10^{3}$ & $2.06{\times}10^{4}$ \\
      & $100$K & $3.56{\times}10^{1}$ & $5.27{\times}10^{1}$ & $6.96{\times}10^{4}$ & $3.75{\times}10^{4}$ & $2.06{\times}10^{5}$ \\
      & $1$M   & $3.82{\times}10^{1}$ & $6.02{\times}10^{1}$ & $6.96{\times}10^{4}$ & $3.75{\times}10^{5}$ & $2.06{\times}10^{6}$ \\
    \bottomrule
  \end{tabular}%
  }
\end{table*}